\newtheorem{Theorem}{Theorem}
\newtheorem{Definition}{Definition}
\newtheorem{Proposition}{Proposition}
\newtheorem{Assumption}{Assumption}
\newtheorem{Lemma}{Lemma}
\newtheorem{Corollary}{Corollary}
\newtheorem{Remark}{Remark}
\newtheorem{Example}{Example}
\makeatletter \@addtoreset{equation}{chapter}
\def \argsup{{\rm arg}\!\sup\limits}
\def \Sb{{\mathbf S}}
\def \Ac{{\cal A}}
\def \Cc{{\cal C}}
\def \Fc{{\cal F}}
\def \Nc{{\cal N}}
\def \Pc{{\cal P}}
\def \Uc{{\cal U}}
\def \E{{\mathbb E}}
\def \F{{\mathbb F}}
\def \Q{{\mathbb Q}}
\def \R{{\mathbb R}}
\def \P{{\mathbb P}}
\def \x{\times}
\def \demi{\frac{1}{2}}
\def \brace#1{\left\{ #1 \right\}}
\def \ind#1{\mathds{1}_{\brace{#1}}}
\def \cl {{\rm cl}}
\def \And{\;\mbox{ and }\;}
\def \Pas{\mathbb{P}-\mbox{a.s.}}
\def \eps{\varepsilon}
\def \vp{\varphi}
\def \Esp#1{\mathbb{E}\left[#1\right]}
\def \Pro#1{\mathbb{P}\left[{#1}\right]}
\title{Hedging Expected Losses on Derivatives \\ in Electricity Futures Markets}
\author[1]{Adrien Nguyen Huu\thanks{Corresponding author: adrien.nguyenhuu@gmail.com}}
\author[2]{Nadia Oudjane}
\affil[1]{{\it IMPA,  Rio de Janeiro, Brasil}}
\affil[2]{{\it EDF R\&D, Clamart, France}}
\begin{document}

\maketitle
\section{Introduction}
\label{sec:intro}

We propose in this contribution a method involving
numerical implementation for partially hedging
financial derivatives on
electricity futures contracts.
Electricity futures markets
present specific features.
As a non-storable commodity, 
electrical energy is delivered as
a power over time periods.
Similarly, futures contracts exchange
a present power price for delivery over a fixed future period
against the future quoted price on that period:
they refer explicitly to swap contracts, 
see \cite{benth2008stochastic}.
Electricity being non-storable,
arbitrage arguments do not hold,
preventing anyone to 
construct a term structure
via usual tools.
As a natural result of liquidity restriction and
uncertainty in the future,
a limited number of contracts are quoted
and the length of their delivery period increases with their time to maturity.
This phenomenon we call the \textit{granularity of the term structure}
implies that
the most desired and flexible contracts 
(weekly or monthly-period contracts) are 
only quoted a short time before
their maturity.
If one desires a fixed price for power delivered on a given distant month in the future, 
she shall get a contract covering a wider period, e.g.
a quarter or a year contract. 
However, the risk remains if one is
endowed with a derivative upon such a short period contract,
before this contract is even quoted.

This is the explored situation:
we consider here an agent endowed with a derivative
upon a non-yet quoted asset.
In practice, the unhedgeable risk
of such a position is managed
by deploying a cross-hedging
strategy with a 
correlated quoted asset, 
see \cite{grafenstein2003hedging}, \cite{eichhorn2005mean} and \cite{lindell2009strips}.
The risk cannot be completely eliminated
without prohibitive cost:
the market is incomplete and
the methodology 
requires a pricing criterion, 
as proposed in the above references.
In what follows,
we develop a partial hedging
procedure in order
to satisfy a loss constraint in expectation.
We aim at providing a ready-to-implement method to attain
such objectives.
This necessitates recent tools of
stochastic control \cite{bouchard2009stochastic}
and numerical approximations of 
coupled forward backward SDEs.
However, we relate constantly the obtained results to
well-known formulas and concepts,
so that the method is easy to assimilate.
We also provide strong and sufficient assumptions
in order to avoid involved proofs in
difficult cases.
We believe that the specific method proposed hereafter can 
be understood and applied without much effort, 
and be profitable to a numerous variety of hedging problems.

The approach we adopt has been
originated by F\"{o}llmer and Leukert
\cite{follmer1999quantile, follmer2000efficient}
but
we develop the problem
in the framework of control theory.
The minimal initial portfolio
value needed to satisfy
the constraint of expectation of losses
can be expressed as a value
function of a stochastic target problem.
The stochastic target approach 
has been introduced by Soner and Touzi
\cite{soner2002dynamic, soner2002stochastic}
to formulate the pricing problem in a control
fashion.
It has been extended to expectation criteria
by Bouchard, Elie and Touzi \cite{bouchard2009stochastic},
and applied for quantile hedging \cite{bouchard2009stochastic},
loss constraints in liquidation problems \cite{bouchard2010generalized},
or loss constraints with small transaction costs \cite{bouchard2013hedging}.
However,
the general approach is rather technical and
necessitates to solve a non-linear Partial Differential Equations (PDE),
when one has first proved a comparison 
theorem to resolve the uniqueness problem for the value function.

Here,
we use the application of \cite{bouchard2009stochastic}, 
where nice results can be provided
in complete market without appealing to comparison arguments.
However,
our initial problem cannot be
tackled directly with the stochastic target
formulation of \cite{bouchard2009stochastic} or \cite{moreau2010stochastic}.
The unhedgeable risk coming from
the apparition of the desired asset price
generates a non-trivial extension to the usual framework.
Instead,
we proceed in three steps in a backward fashion:
\begin{enumerate}
 \item We first formulate in Section \ref{sec:complete}
 the stochastic
 target problem in complete market,
 using an easy extension of the application in \cite{bouchard2009stochastic}.
 Our approach relies strongly on the 
 convexity of the loss function.
 By convex duality arguments, the
 expected loss target can be expressed
 in order to provide a formula which
 is very close to a usual risk-neutral pricing formulation.
 This allows to express the value function $V$
 at the precise moment of the apparition of the missing contract.
 \item To treat the random apparition of a non-yet traded asset,
 we use a face-lifting procedure that provides a
 new (intermediary) target for the period before
 quotation of the underlying asset. This is done
 in Section \ref{sec:incomplete},
 in coherence with the initial hedging criterion,
 and allows to retrieve a complete market setting
 where results of step 1 can be partially used.
 \item When the complete market setting cannot provide
 analytical formulas, as it happens after step 2, 
 we shall make use of
 numerical approximations. We propose such an algorithm
 in Section \ref{sec:numerics}, and illustrate 
 the efficiency of the method in Section \ref{sec:validation}.
\end{enumerate}
As it will be understood,
this method can be applied recursively by proceeding
repeatedly to steps 1 and 2 along the numerical
algorithm provided in step 3.
The remaining of the contribution
follows the above order, preluded
by the introduction of the problem
in section \ref{sec:model},
where we develop the archetypal situation encountered by
a financial agent on the market.

\section{Description of the control problem}
\label{sec:model}

Let $0<T<T^*<\infty$.
We consider an agent endowed with a financial option
with expiration date $T^\ast$, 
payoff $g$ on a futures covering
a monthly period.
However,
this asset is not yet quoted at initial time $t=0$,
and appears on the market at time $T$.
The month is covered by a futures
with delivery over a larger period,
e.g. a quarter futures.
We denote by $X:=(X_t)_{t\in [0,T^\ast]}$
the discounted price of this instrument,
avoiding the introduction of an interest rate
hereafter.
We assume it is available over the
whole period of interest $[0,T^{\ast}]$,
whereas the monthly-period futures 
upon which the agent has a position
is only available on the interval $[T, T^{\ast}]$.

The heart of our approach is to assume a
structural relation between the two instruments above,
namely that the return of the two assets are perfectly correlated.
More precisely, the monthly futures price is supposed to be given as the product, 
$\Lambda X_t$, of the quarterly futures price $X_t$ and 
a given  \textit{shaping factor} $\Lambda>0$,  
assumed to be a bounded random variable revealed at time $T$, completely
independent of the asset price $X$.
The boundness condition follows from a structural
relation between the two futures contracts,
assuming implicitly that the underlying spot price is non-negative.

We consider a probability space $(\Omega, \Fc, \P)$
supporting a Brownian motion $W$ and the variable $\Lambda$.
The filtration $\F$ is given by
$\Fc_t : = \sigma(W_s, \ 0\le s \le t)$ for
$t<T$, and by 
$\Fc_t := \sigma(W_s, \ 0\le s \le t )\wedge \sigma(\Lambda)$
for $t\ge T$.
\begin{Assumption}
\label{assumption:L}
We denote by $L$ the support of $\Lambda$,
which is supposed to be a bounded subset of $\R^+$,
and $\rho$ its probability measure on $L$.
\end{Assumption}

On the period $[0,T]$, 
the agent trades with the asset $X^{t,x}$,
which is assumed to be solution
to the SDE:
\begin{equation}
\label{eq:price dynamics}
 dX^{t,x}_s = \mu(s, X^{t,x}_s)ds + \sigma(s, X^{t,x}_s)dW_s , \quad \textrm{for}\ s\ge t , \And X^{t,x}_t=x\; .
\end{equation}
We assume the following for $X^{t,x}$ to be
well and uniquely defined.
\begin{Assumption}
\label{assumption:mu and sigma}
The functions
$
(\mu, \sigma) \ : \ [0,T^*]\x \R_+ \rightarrow \R \x \R_+^*
$
are assumed to be such that the following properties are verified:
\begin{enumerate}
	\item 
	$\mu$ and $\sigma$ are  uniformly Lipschitz
	and verify
	$
	|\mu(t,x)| + |\sigma(t,x)|\le K(1+|x|)
	$ for any $(t,x)\in [0,T^*]\x\R_+$.
	\item for any $x>0$, we have
	$X^{t,x}_s > 0$ $\Pas$ for all $s\in [0,T^*]$
	\item  $\sigma(t,x)>0 $ for any $(t,x)\in [0,T^*]\x\R^*_+$;
	\item if $x=0$ then $\sigma(t,x)=0$ for all $t\in [0,T^\ast]$;
	\item $\sigma$ is continuous in the time variable on $[0,T^\ast]\times \R_+$;
	\item $\mu$ and $\sigma$ are such that 
\begin{equation}
 \label{eq:Novikov}
 \theta(t,x):= \frac{\mu(t,x)}{\sigma(t,x)}<\infty \quad \text{ uniformly in }(t,x)\in [0,T^*]\x \R^*_+\; .
\end{equation}
\end{enumerate}
\end{Assumption}
Eq. \eqref{eq:Novikov} implies the so-called Novikov condition.
The submarket composed of only $(X_t)_{t\in [0,T^*]}$
is then a complete market associated to the Brownian subfiltration.
The filtration $\F$ relates to an incomplete market because of $\Lambda$, 
which is unknown on the interval $[0,T)$ 
and cannot be hedged by a self-financed admissible portfolio defined as follows.
In that manner,
the market can be labelled as semi-complete in the sense of Becherer \cite{becherer2001rational}.
\begin{Definition}
 \label{def:portfolio}
 An admissible self-financed portfolio is a $\F$-adapted process
 $Y^{t,x,y,\nu}$ 
 defined by $Y_t^{t,x,y,\nu}=y\ge -\kappa$ and
 \begin{equation}
\label{eq: portfolio process}
Y^{t,x,y,\nu}_s := y + \int_t^s \nu_u dX^{t,x}_u  \; , \quad
  s\in[t,  T^{\ast}]\; ,
\end{equation}
where $\nu\in \Uc_{t,y}$ denotes a strategy and
$\Uc_{t,y}$ the set of admissible strategies at time $t$,
which is the set of $\R$-valued $\F$-progressively measurable,
square-integrable processes
such that $Y_s^{t,x,y,\nu}\ge -\kappa \ \Pas$ for all  $s\in[t,  T^{\ast}]$ 
and some $\kappa\ge 0$ representing a finite credit line for 
the agent.
\end{Definition}

According to the asset model \eqref{eq:price dynamics},
it is strictly equivalent to 
consider a hedging portfolio with $X^{t,x}$ on $[0,T^{\ast}]$
or a switch at any time $r\in [T, T^{\ast}]$ for
the newly appeared asset $\Lambda X^{t,x}$:
\begin{equation*}
Y^{t,x,y,\nu}_s 
= y + \int_t^{s \wedge r} \nu_u dX^{t,x}_u + \int_{s \wedge r}^t \nu'_u d(\Lambda X^{t,x}_u) \; , 
r \ge T \And t\ge 0 \; ,
\end{equation*}
if $\nu'=\nu/\Lambda$ for $u\ge r$.
We thus assume that the agent trades with $X$ on $[0,T^*]$.

\begin{Assumption}
\label{assumption:g}
The final payoff of the option is given by
$g(\Lambda X_{T^{\ast}})$,
where the function $g$ is assumed to be Lipschitz continuous.
\end{Assumption}

As it is well-known in the literature,
the superhedging price of such an option can be prohibitive, 
even with Assumption \ref{assumption:L}
where $\Lambda$ is bounded.
To circumvent this problem, 
we propose to control the expected losses on partial hedging.
That means that the agent gives herself a 
threshold $p<0$ and a loss function $\ell$
to evaluate the loss
over her terminal position $g(\Lambda X^{t,x}_{T^{\ast}})-Y^{t,x,y,\nu}_{T^{\ast}}$.

\begin{Definition}
\label{assumption:l}
A loss function $\ell\ :\ \R_+ \rightarrow \R_+$ is assumed
to be continuous, strictly convex and strictly increasing on $\R_+$
with polynomial growth.
We normalize the function so that $\ell(0)=0$.
\end{Definition}

The agent's objective at time $t$ is to find 
the minimal value $y$ and a portfolio strategy $\nu\in \Uc_{t,y}$ such that
\begin{equation}
\label{eq: control loss}
\Esp{- \ell\left( \left(g\left(\Lambda X^{t,x}_{T^{\ast}}\right)-Y^{t,x,y,\nu}_{T^{\ast}}\right)^{+} \right)}\ge p \; .
\end{equation}
More generally, it will be useful 
to measure the deviation between the payoff and the hedging portfolio 
through a general map $\Psi: \R_+ \x \R \rightarrow \R_-$. 
The previous example corresponds to the specific case where 
\begin{equation}
\label{eq:Psi}
\Psi(x, y) = -\ell((g(x)-y)^{+}).
\end{equation}
Assumption \ref{assumption:g} and Definition \ref{assumption:l} imply that $\Psi$ defined above satisfies the following assumption:
\begin{Assumption}
\label{assumption:Psi}
The function $\Psi: \R_+ \x \R \rightarrow \R_-$ is assumed to be 
\begin{itemize}
	\item continuous with polynomial growth in $(x,y)$.
	\item  concave and increasing in $y$ on
	$
	\cl \brace{y\in [-\kappa, \infty) ~:~ \Psi(x,y)<0}
	$ for any $x\in \R_+$.
\end{itemize}
\end{Assumption}
Notice that $\Psi(x,y)=0$ in \eqref{eq:Psi} for any
$y\ge g(x)$, which makes $\Psi$
not invertible on the domain $\R$ for any fixed $x$.
Under Assumption \ref{assumption:Psi},
we can define the $y$ inverse $\Psi^{-1}(x,p)$ on $\R_+\x \R_-$
as a convex increasing function of $p$,
where 
\begin{equation}
 \label{eq: psi inverse}
 \Psi^{-1}(x,0) = \inf\brace{y\ge -\kappa ~:~ \Psi(x,y)=0} \; .
\end{equation}

As explained above,
the valuation approach of \eqref{eq: control loss} has been introduced
in \cite{follmer2000efficient, follmer2006convex}.
It can be written under the stochastic control form
of \cite{bouchard2009stochastic}
allowed by the Markovian framework \eqref{eq:price dynamics}-\eqref{eq: portfolio process}.
\begin{Definition}
 \label{def:stochastic control problem}
 Let $(t,x,p)\in \Sb:=[0,T^*]\x \R_+^*\x \R_-^*$ be given.
 Then we define the value function $V$ on $\Sb$ as
 \begin{equation}
 \label{eq: stochastic target}
V(t,x,p):= \inf \brace{ y\ge  -\kappa ~:~ 
\Esp{\Psi(\Lambda X^{t,x}_{T^{\ast}}, Y^{t,x,y,\nu}_{T^{\ast}})}\ge p \text{ for }\nu \in \Uc_{t,y} } \; .
\end{equation}
\end{Definition}
Advanced technicalities and details
on the general setting for the stochastic
target problem with controlled loss can be found 
in \cite{bouchard2009stochastic} and \cite{moreau2010stochastic}.
In particular,
we introduce the value function $V$ only on 
the open domain of $(x,p)$, and avoid
treating the specific case $x=0$ or $p=0$.
Furthermore, the function
$V$ is implicitely bounded by the superhedging price of $g(\Lambda X^{t,x}_{T^*})$
as we will see in the next section.
Notice that the expectation in~(\ref{eq: stochastic target}) 
involves an integration w.r.t. the law of $\Lambda$ 
for $t\in[0,T)$, before the shaping factor $\Lambda$ is revealed. 
Thus,
the present problem is not standard due to the presence of $\Lambda$:
the filtration $\F$ is not only due to the Brownian motion,
and dynamic programming arguments of \cite{bouchard2009stochastic}
do not apply directly.
The approach we undertake
is to separate the complete
and incomplete market intervals
in a piecewise problem.

\begin{Example}
 \label{ex:lower partial moment}
 A particular example of loss function
 we will use in sections \ref{sec:numerics} and \ref{sec:validation}
 is the special case of 
lower partial moment 
\begin{equation}
\label{eq:l}
\ell(x):= x^k \ind{x\ge 0}/ k \ ,\quad \textrm{with}\quad 
k> 1\ .
\end{equation}
In the case of $k=1$, (which is not considered here) 
we obtain a criterion close to the expected shortfall,
independently studied in \cite{cvitanic2000minimizing},
whereas allowing $k=0$ allows to
retrieve precisely the quantile hedging problem \cite{bouchard2009stochastic},
although $\ell$ is not a loss function as in Definition \ref{assumption:l} in that case.
The case $k=2$ gets closer to the quadratic hedging
(or mean-variance  hedging) objective, but with
the advantage of considering only losses, and not gains.
Notice that $\ell$ as in equation \eqref{eq:l} for $k>1$
follows Definition \ref{assumption:l} and 
that Assumption \ref{assumption:Psi} holds with 
\eqref{eq:Psi} if Assumption \ref{assumption:g} holds.
\end{Example}

We finish this section with additional notations.
In the sequel,
we will denote $\Sb:=\Sb_1 \cup \Sb_2$
where $\Sb_1:=[0,T]\x \R_+^*\x\R_-^*$ and
$\Sb_2:=[T,T^*]\x \R_+^*\x\R_-^*$
are the two domains of the value function.
For any real valued function $\vp$, defined on $\Sb$, we denote
by $\vp_t$ or $\vp_x$ the
partial derivatives with respect to
$t$ or $x$. Partial derivatives of other variables,
or second order partial derivatives are written 
in the same manner.

\section{Solution in complete market}
\label{sec:complete}

\subsection{General solution and risk-neutral expectation}
\label{sec:general complete}

On the interval $[T,T^{\ast}]$,
the variable $\Lambda$
is known and the asset $\Lambda X$ is tradable.
We assume that $\Lambda$ takes the value $\lambda\in L$.
On this interval,
$\lambda$ can be seen as a coefficient affecting the
loss function $\Psi$, and
the problem \eqref{eq: stochastic target}
follows the standard formulation of \cite{bouchard2009stochastic}:
the filtration is generated by the paths of the Brownian motion.
In section \ref{sec:incomplete},
we reduce the incomplete market setting of period $[0,T)$
to complete market problem of the form \eqref{eq: stochastic target} on $[T,T^*]$, 
in a similar Brownian framework, with a function $\Psi$ 
which is no more derived from a loss function as in~\eqref{eq:Psi}, 
motivating the introduction of the general function $\Psi$.

Consequently and without loss of generality,
we temporarily assume $L=\{1\}$, 
and omit $\lambda$ in the notation.
On $[0,T^*]$ then,
we can work on problem \eqref{eq: stochastic target}
in the Brownian filtration and apply most results
of \cite{bouchard2009stochastic}.
The filtration $\F$ for $t\ge T$
in that case is given by $\Fc_t:=\sigma(W_s ~:~ T\le s \le t)$
on a complete probability space $(\Omega, \Fc, \P)$
with $\Pro{\Lambda=1}=1$.
\begin{Proposition}
 \label{prop:solution dual}
 Under Assumptions \ref{assumption:mu and sigma} and \ref{assumption:Psi}, the function $V$ 
is given on $\Sb$ by
\begin{equation}
\label{eq: dual formulation}
V(t,x,p) = \E^{\Q_t} \left[ \Psi^{-1}\left( X^{t,x}_{T^*}, 
J\left(X^{t,x}_{T^*}, Q^{t,q}_{T^*}\right) \right) \right]
\end{equation}
where
$J(x,q):=\argsup \brace{pq - \Psi^{-1}(x, p)~:~p \le 0}$
and
\begin{equation}
 \label{eq: risk neutral diffusion}
  X^{t,x}_s = x+ \int_t^s \sigma(u, X^{t,x}_u)  dW^{\Q_t}_u \And 
 Q^{t,q}_s = q + \int_t^s Q^{t,q}_u \theta(u, X^{t,x}_u)  dW^{\Q_t}_u
 \;,
\end{equation}
with $W^{\Q_t}$ being the Brownian motion under
the probability $\Q_t$, the latter being
defined by ${d\P}_t/{d\Q_{t}} = Q^{t,1}$
and used for expectation \eqref{eq: dual formulation}.
Finally $q$ is given such that
$
\E\left[ J\left(X^{t,x}_{T^*}, Q^{t,q}_{T^*}\right) \right]=p
$.
\end{Proposition}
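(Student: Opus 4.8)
The plan is to recast \eqref{eq: stochastic target} as a classical stochastic target problem in the Brownian filtration (legitimate since we have temporarily set $L=\{1\}$) and then exploit convex duality in the $p$-variable, mirroring the complete-market argument of \cite{bouchard2009stochastic}. First I would observe that, because $\Psi^{-1}(x,\cdot)$ is convex and increasing, the terminal constraint $\Esp{\Psi(X^{t,x}_{T^*},Y^{t,x,y,\nu}_{T^*})}\ge p$ can be rewritten, via the $y$-inverse, as the requirement that $Y^{t,x,y,\nu}_{T^*}\ge \Psi^{-1}(X^{t,x}_{T^*},P_{T^*})$ for some $\R_-$-valued martingale-type process $P$ with $\E[P_{T^*}]=p$; introducing the auxiliary controlled process $P$ as a second state variable is exactly the device that turns the expectation constraint into a pathwise (almost-sure) target constraint. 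At that point the pair $(Y,P)$ must dominate $\Psi^{-1}(X_{T^*},P_{T^*})$ almost surely, which is a standard superreplication problem.

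Next I would pass to the dual/martingale formulation. In the complete Brownian market the unique equivalent martingale measure $\Q_t$ has density process $Q^{t,1}$ solving the second SDE in \eqref{eq: risk neutral diffusion}, and superreplication of the $\Fc_{T^*}$-claim $\Psi^{-1}(X^{t,x}_{T^*},P_{T^*})$ costs exactly $\E^{\Q_t}[\Psi^{-1}(X^{t,x}_{T^*},P_{T^*})]$. The remaining freedom is the choice of the terminal loss-budget allocation $P_{T^*}$ subject to $\E[P_{T^*}]=p$: minimising $\E^{\Q_t}[\Psi^{-1}(X_{T^*},P_{T^*})]$ over such $P_{T^*}$ is a static convex optimisation problem solved pointwise by a Lagrange multiplier $q$. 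Changing measure, $\E^{\Q_t}[\Psi^{-1}(X_{T^*},P_{T^*})] = \E[Q^{t,1}_{T^*}\Psi^{-1}(X_{T^*},P_{T^*})]$, and the Lagrangian pointwise problem $\inf_{p\le 0}\{Q^{t,1}_{T^*}\Psi^{-1}(x,p) - q\,p\}$ is, after dividing by $Q^{t,1}_{T^*}$ and relabelling $Q^{t,q}_{T^*}=q/Q^{t,1}_{T^*}$, precisely the Legendre-type transform whose optimiser is $J(x,q)=\argsup\{pq-\Psi^{-1}(x,p):p\le 0\}$. Substituting the optimal $P^*_{T^*}=J(X^{t,x}_{T^*},Q^{t,q}_{T^*})$ back yields \eqref{eq: dual formulation}, and the multiplier $q$ is pinned down by the budget equation $\E[J(X^{t,x}_{T^*},Q^{t,q}_{T^*})]=p$; convexity of $\Psi^{-1}(x,\cdot)$ guarantees this equation has a solution and that the first-order condition is sufficient.

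The main obstacles are two. First, the rigorous equivalence between the original expectation-constrained problem \eqref{eq: stochastic target} and the augmented almost-sure target problem requires the dynamic-programming / geometric characterisation of \cite{bouchard2009stochastic}: one must check their standing assumptions (continuity and polynomial growth of $\Psi$, admissibility with the credit line $-\kappa$, the Novikov condition from \eqref{eq:Novikov} ensuring $\Q_t$ is well defined and $Q^{t,1}$ a true martingale) so that the reduction is valid and no duality gap arises. Second, one must justify interchanging the infimum over strategies with the expectation and the pointwise optimisation — i.e. that the candidate $P^*$ is attainable by an admissible control and that $\Psi^{-1}(X_{T^*},P^*_{T^*})$ is $\Q_t$-integrable with the stated polynomial growth; here the polynomial-growth hypotheses on $\Psi$ and the Lipschitz/linear-growth bounds on $(\mu,\sigma)$ in Assumption \ref{assumption:mu and sigma} do the work, together with the fact that $\Psi^{-1}(x,0)$ is bounded by the superhedging price of $g$. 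The remaining verification — that $J$ is well defined, measurable, and of polynomial growth, and that the constructed hedging strategy $\nu$ lies in $\Uc_{t,y}$ with $y=V(t,x,p)$ — is routine given strict convexity of $\ell$ (hence of $\Psi^{-1}(x,\cdot)$) and the structure of \eqref{eq:Psi}.
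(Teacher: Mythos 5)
Your proposal is sound and reaches the right formula, but it takes a genuinely different route from the paper after the common first step. You and the paper both begin by converting the expectation constraint into an almost-sure target via the auxiliary controlled process $P^{t,p,\alpha}$ (the paper's Lemma~\ref{lem: stochastic target form}). From there, you argue \emph{directly}: complete-market superreplication prices the claim $\Psi^{-1}(X_{T^*},P_{T^*})$ at $\E^{\Q_t}[\Psi^{-1}(X_{T^*},P_{T^*})]$, and the remaining minimisation over admissible terminal budgets $P_{T^*}\le 0$ with $\E[P_{T^*}]\ge p$ is a static convex programme solved pointwise by a Lagrange multiplier, which produces $J$ and the budget equation for $q$. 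This is essentially the F\"ollmer--Leukert efficient-hedging duality and is more elementary. The paper instead passes to the convex conjugate $\tilde V$ of $V_*$ in $p$, shows via the viscosity supersolution property (Theorem~\ref{th 2.1}) and the Legendre correspondence \eqref{eq:dual relation} that $\tilde V$ is a subsolution of a \emph{linear} PDE, compares with the Feynman--Kac representation $\bar V$, and then conjugates back; the payoff of that heavier route is that it constructs the nonlinear HJB equation \eqref{eq:PDEv de preuve} and the optimal feedback control $a^*$ in \eqref{eq:meilleur controle}, which are reused in Sections~\ref{sec:expectation}--\ref{sec:numerics} and would not be delivered by your argument. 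Two points you should tighten if you write this up: (i) your change-of-measure identities are reversed relative to the paper's convention $d\P_t/d\Q_t=Q^{t,1}$ (one has $\E[Z]=\E^{\Q_t}[Q^{t,1}_{T^*}Z]$, and the relabelling should read $Q^{t,q}_{T^*}=q\,Q^{t,1}_{T^*}$, not $q/Q^{t,1}_{T^*}$); (ii) the step identifying the attainable set of terminal budgets $P_{T^*}^{t,p,\alpha}$ with all nonpositive variables of mean at least $p$ is exactly the content of the second half of Lemma~\ref{lem: stochastic target form} (martingale representation plus the submartingale property of nonpositive local martingales) and deserves an explicit citation rather than being folded into ``standard''.
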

To obtain such a result,
we proceed in several steps.
We first use Proposition 3.1 in \cite{bouchard2009stochastic}
to express
problem \eqref{eq: stochastic target}
in the standard form of \cite{soner2002dynamic}, 
see Lemma \ref{lem: stochastic target form} below.
We then apply the Geometric Dynamic Programming Principle (GDP)
Principle of \cite{soner2002dynamic}
in order to obtain a
PDE characterization of $V$ in the
viscosity sense.
These results are recalled in Appendix.
We then use properties
of the convex conjugate of $V$ to
obtain $V$ as the risk-neutral expectation \eqref{eq: dual formulation} 
of Proposition \ref{prop:solution in complete market}.

\begin{Lemma}
 \label{lem: stochastic target form}
 Let $P^{t,p,\alpha}$ be a $\F$-adapted stochastic process defined by
 \begin{equation}
  \label{eq: process P}
  P^{t,p,\alpha}_s = p + \int_t^s \alpha_u P^{t,p,\alpha}_u  dW_u \;, \quad 0\le t \le s \le T^* \; ,
 \end{equation}
where $\alpha$ is a $\F$-progressively measurable
process taking values in $\R$.
Let us denote $\Ac_{t,p}$ the set of such processes such that
$P^{t,p,\alpha}$ and $\alpha P^{t,p,\alpha}$ are square-integrable processes.
Let Assumptions \ref{assumption:mu and sigma} and \ref{assumption:Psi} hold.
Then on $\Sb$
\begin{equation}
\label{eq: stochastic equivalence}
 V(t,x,p) = \inf \brace{y\ge -\kappa\ :\ Y_{T^*}^{t,x,y,\nu} \ge \Psi^{-1}(X^{t,x}_{T^*}, P_{T^*}^{t,p,\alpha})
 \text{ for } (\nu, \alpha)\in  \Uc_{t,y}\x\Ac_{t,p} } \; .
\end{equation}

Moreover, for a given triplet $(t,y,p)$,
if there exists $\nu\in \Uc_{t,y}$ and a
$\F$-progressively measurable process $\alpha$ taking values in $\R$
such that 
\begin{equation}
 \label{eq:condition extention}
 Y_{T^*}^{t,x,y,\nu} \ge \Psi^{-1}(X^{t,x}_{T^*}, P_{T^*}^{t,p,\alpha}) \ \Pas \; ,
\end{equation}
then there exists $\alpha'\in \Ac_{t,p}$ such that
$Y_{T^*}^{t,x,y,\nu} \ge \Psi^{-1}(X^{t,x}_{T^*}, P_{T^*}^{t,p,\alpha'})$ $\Pas$
\end{Lemma}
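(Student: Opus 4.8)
The plan is to establish the equivalence \eqref{eq: stochastic equivalence} by a double inequality, and then to prove the final ``admissibility upgrade'' statement as a consequence of the construction used for the nontrivial inclusion. First I would recall the content of Proposition 3.1 in \cite{bouchard2009stochastic}: the expectation constraint $\Esp{\Psi(\Lambda X^{t,x}_{T^*}, Y^{t,x,y,\nu}_{T^*})}\ge p$ can be ``pathwise-ized'' by introducing a martingale $P^{t,p,\alpha}$ with $P^{t,p,\alpha}_{T^*}$ playing the role of a randomized target level, so that $\Esp{\Psi(\Lambda X^{t,x}_{T^*}, Y^{t,x,y,\nu}_{T^*})}\ge p$ holds for some admissible $\alpha$ if and only if $\Psi(\Lambda X^{t,x}_{T^*}, Y^{t,x,y,\nu}_{T^*})\ge P^{t,p,\alpha}_{T^*}$ $\P$-a.s. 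Since we are in the temporary setting $L=\{1\}$, $\Lambda$ disappears, and the key point is that under Assumption \ref{assumption:Psi} the map $y\mapsto\Psi(x,y)$ is concave, increasing, and admits the $y$-inverse $\Psi^{-1}(x,\cdot)$ described around \eqref{eq: psi inverse}. Applying $\Psi^{-1}(X^{t,x}_{T^*},\cdot)$ to both sides, the constraint $\Psi(X^{t,x}_{T^*}, Y^{t,x,y,\nu}_{T^*})\ge P^{t,p,\alpha}_{T^*}$ is equivalent to $Y^{t,x,y,\nu}_{T^*}\ge\Psi^{-1}(X^{t,x}_{T^*}, P^{t,p,\alpha}_{T^*})$ — modulo the subtlety that, on the event where $P^{t,p,\alpha}_{T^*}\ge 0$, the inequality $\Psi(x,y)\ge P^{t,p,\alpha}_{T^*}$ can only hold when $P^{t,p,\alpha}_{T^*}=0$ (because $\Psi\le 0$), and there $\Psi^{-1}(x,0)=\inf\{y\ge-\kappa:\Psi(x,y)=0\}$ is exactly the right reformulation. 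I would keep $\alpha$ ranging over $\F$-progressively measurable $\R$-valued processes (without the square-integrability requirement) for this first equivalence, matching the statement of \eqref{eq: stochastic equivalence} with $\alpha\in\Ac_{t,p}$ only after the upgrade argument below; in fact the cleaner route is to prove $V(t,x,p)$ equals the infimum over $\alpha\in\Ac_{t,p}$ directly, since $\Ac_{t,p}\subseteq$ (all progressively measurable processes) gives one inequality for free and the upgrade gives the other.

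For the two inequalities: the inclusion ``$\le$'' (the right-hand infimum is $\le V$) is the easy direction — any $y$ admissible for \eqref{eq: stochastic target} produces, via Proposition 3.1 of \cite{bouchard2009stochastic}, an $\alpha$ with $P^{t,p,\alpha}$ a true martingale (in particular square-integrable, so $\alpha\in\Ac_{t,p}$), hence $y$ is admissible for the right-hand side. The inclusion ``$\ge$'' requires taking $y$ admissible for the pathwise problem with some progressively measurable $\alpha$ satisfying \eqref{eq:condition extention}, and showing $y$ is admissible for \eqref{eq: stochastic target}: here one takes expectations in $Y^{t,x,y,\nu}_{T^*}\ge\Psi^{-1}(X^{t,x}_{T^*}, P^{t,p,\alpha}_{T^*})$, uses monotonicity/concavity of $\Psi$ in $y$ to get $\Psi(X^{t,x}_{T^*}, Y^{t,x,y,\nu}_{T^*})\ge\Psi(X^{t,x}_{T^*}, \Psi^{-1}(X^{t,x}_{T^*}, P^{t,p,\alpha}_{T^*}))\ge P^{t,p,\alpha}_{T^*}$, and finally $\Esp{\Psi(X^{t,x}_{T^*}, Y^{t,x,y,\nu}_{T^*})}\ge\Esp{P^{t,p,\alpha}_{T^*}}$. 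The one genuine subtlety is that $\Esp{P^{t,p,\alpha}_{T^*}}\ge p$ is only automatic when $P^{t,p,\alpha}$ is a true martingale, whereas a general progressively measurable $\alpha$ only makes it a local martingale, hence a supermartingale if bounded below — but it need not be bounded below. This is exactly why the ``Moreover'' part is needed and is not a cosmetic addendum.

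So the heart of the lemma, and the step I expect to be the main obstacle, is the final admissibility-upgrade claim: given $\nu\in\Uc_{t,y}$ and a merely progressively measurable $\alpha$ with $Y^{t,x,y,\nu}_{T^*}\ge\Psi^{-1}(X^{t,x}_{T^*}, P^{t,p,\alpha}_{T^*})$ a.s., produce $\alpha'\in\Ac_{t,p}$ with the same domination. The plan is to exploit the structural bound coming from admissibility of $\nu$: since $Y^{t,x,y,\nu}_s\ge-\kappa$ for all $s$ and $Y^{t,x,y,\nu}$ is a square-integrable stochastic integral against $X^{t,x}$ (hence a supermartingale, or at least with controlled $L^2$-norm under $\Q_t$), and since $\Psi^{-1}(x,\cdot)$ is convex increasing with $\Psi^{-1}(x,p)\ge-\kappa$, the level $P^{t,p,\alpha}_{T^*}$ is effectively dominated above by something integrable. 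Concretely, I would stop the local martingale $P^{t,p,\alpha}$ — which is nonpositive up to the first time it hits $0$ and then stays at $0$ thanks to the multiplicative noise $\alpha_u P_u$ — observe that a nonpositive local martingale is a submartingale-in-reverse, i.e. $-P^{t,p,\alpha}$ is a nonnegative local martingale hence a supermartingale, so $\Esp{-P^{t,p,\alpha}_{T^*}}\le -p$, giving $\Esp{P^{t,p,\alpha}_{T^*}}\ge p$ for free; moreover nonnegative supermartingales are automatically in a suitable class, and one can truncate $\alpha$ (replace $\alpha$ by $\alpha\,\ind{|P^{t,p,\alpha}|\le n}$ and pass to the limit, or use a localizing sequence and a diagonal/closedness argument) to obtain $\alpha'$ with $P^{t,p,\alpha'}$ and $\alpha' P^{t,p,\alpha'}$ square-integrable while keeping $P^{t,p,\alpha'}_{T^*}\le P^{t,p,\alpha}_{T^*}$, or at least $\le$ something still dominated by $Y^{t,x,y,\nu}_{T^*}$ after applying $\Psi^{-1}$. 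The delicate bookkeeping is to ensure the truncated process still has terminal value below $\Psi^{-1{-1}}$ of the portfolio — this is where the nonpositivity of $P^{t,p,\alpha}$ (so truncation only moves it \emph{up toward} $0$, which is harmless since $\Psi^{-1}(x,\cdot)$ is increasing and $Y^{t,x,y,\nu}_{T^*}\ge\Psi^{-1}(x,0)$ would be needed) is essential, and I would be careful to argue that on $\{P^{t,p,\alpha}_{T^*}<0\}$ one can choose the localization not to perturb the terminal value, e.g. by a change-of-time / Dambis--Dubins--Schwarz representation of $P^{t,p,\alpha}$ as a time-changed Brownian motion and reparametrizing so the variance clock is finite. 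I expect this to reduce to a standard density/closedness argument in the spirit of \cite{bouchard2009stochastic}, and I would cite their analogous lemma for the technical core.
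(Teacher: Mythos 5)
The first half of your argument is fine and matches the paper: the equivalence \eqref{eq: stochastic equivalence} is obtained exactly as you say, by invoking Proposition 3.1 of \cite{bouchard2009stochastic} together with the monotonicity of $\Psi$ in $y$ and the definition \eqref{eq: psi inverse} of the generalized inverse (the integral representation theorem being available thanks to the polynomial growth of $\Psi$ and Assumption \ref{assumption:mu and sigma}). You also correctly isolate the key preliminary observation for the ``Moreover'' part: since $p<0$ and the noise is multiplicative, $P^{t,p,\alpha}$ is a nonpositive local martingale, hence a submartingale bounded above by $0$, so $\Esp{P^{t,p,\alpha}_{T^*}}\ge p$ and therefore $\Esp{\Psi(X^{t,x}_{T^*},Y^{t,x,y,\nu}_{T^*})}\ge p$. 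Up to this point you are on the paper's track.

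The gap is in how you produce $\alpha'$. Your plan is to modify the \emph{given} $\alpha$ by truncation or localization ($\alpha\,\ind{|P^{t,p,\alpha}|\le n}$, stopping, or a time change). This cannot work in the direction you need: freezing the dynamics of a nonpositive local martingale moves its terminal value \emph{up} towards $0$, and since $\Psi^{-1}(x,\cdot)$ is increasing in $p$, this \emph{tightens} the pathwise constraint $Y^{t,x,y,\nu}_{T^*}\ge\Psi^{-1}(X^{t,x}_{T^*},P_{T^*})$; in the limit you would be asking for $Y^{t,x,y,\nu}_{T^*}\ge\Psi^{-1}(X^{t,x}_{T^*},0)$, i.e.\ the superhedging level, which is precisely what the lemma is designed to avoid. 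You half-acknowledge this (``$Y^{t,x,y,\nu}_{T^*}\ge\Psi^{-1}(x,0)$ would be needed'') but offer no resolution, deferring to an unspecified ``closedness argument.'' The paper's proof discards the original $\alpha$ altogether: having established $c:=\Esp{\Psi(X^{t,x}_{T^*},Y^{t,x,y,\nu}_{T^*})}-p\ge 0$, it defines the new process as the square-integrable martingale closed by $\Psi(X^{t,x}_{T^*},Y^{t,x,y,\nu}_{T^*})-c$, i.e.\ $P^{t,p,\alpha'}_s=\Esp{\Psi(X^{t,x}_{T^*},Y^{t,x,y,\nu}_{T^*})\,\big|\,\Fc_s}-c$, which starts at $p$, has terminal value $\le\Psi(X^{t,x}_{T^*},Y^{t,x,y,\nu}_{T^*})$ (so the required domination $Y^{t,x,y,\nu}_{T^*}\ge\Psi^{-1}(X^{t,x}_{T^*},P^{t,p,\alpha'}_{T^*})$ holds by monotonicity of $\Psi^{-1}$), is nonpositive (so it can be written in the multiplicative form \eqref{eq: process P}), and is square-integrable together with $\alpha'P^{t,p,\alpha'}$ by the polynomial growth of $\Psi$ and the moment bounds on $X^{t,x}$, whence $\alpha'\in\Ac_{t,p}$. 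Without this (or an equivalent) construction, your proof of the ``Moreover'' statement is incomplete.
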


\begin{proof}
Under Assumption \ref{assumption:Psi},
$\Psi^{-1}(x,p)$ is well defined with \eqref{eq: psi inverse}.
According to the polynomial growth of $\Psi$
combined with Assumption \ref{assumption:mu and sigma},
the stochastic integral representation theorem can be applied.
The first result \eqref{eq: stochastic equivalence}
is then a reformulation of Proposition 3.1 in \cite{bouchard2009stochastic}.

The second result echoes Assumption 4 and Remark 6 in 
\cite{bouchard2010generalized},
as the statement is missing in \cite{bouchard2009stochastic}.
According to \eqref{eq: process P},
for $p< 0$, the process $P^{t,p,\alpha}$ is a negative
local martingale, and therefore a  bounded submartingale.
Thus, 
$\Esp{\Psi(X^{t,x}_{T^*}, Y^{t,x,y,\nu}_{T^*})}\ge p$.
According to the growth condition of $\Psi$,
the martingale representation theorem implies the existence of
a square-integrable martingale $P^{t,p,\alpha'}$,
with $P^{t,p,\alpha'}_t =p$ and according to \eqref{eq:condition extention},
$$
\Esp{\Psi(X^{t,x}_{T^*}, Y^{t,x,y,\nu}_{T^*}) - P_{T^*}^{t,p,\alpha'}}
= \Esp{\Psi(X^{t,x}_{T^*}, Y^{t,x,y,\nu}_{T^*})}-p 
\ge 0 \; .
$$
Since $\Psi(x,y)\in \R_-$ for any $(x,y)\in \R_+\x \R$,
we can choose $P^{t,p,\alpha'}$ 
to follow dynamics (\ref{eq: process P}).
This implies that 
$\alpha'$ is a real-valued $(\Fc_t)$-progressively measurable process
such that $(\alpha' P^{t,p,\alpha'})\in L^2([0,T^*]\x \Omega)$,
and $\alpha'\in \Ac_{t,p}$.
\end{proof}

We can now turn to the proof of 
Proposition \ref{prop:solution in complete market}
by using the GDP recalled in Appendix.
The proof follows closely developments of 
section 4 in \cite{bouchard2009stochastic},
and is given for the sake of clarity,
as for the introduction of important objects
for section \ref{sec:expectation}.

\begin{proof}{\it (Proposition \ref{prop:solution dual})}
We divide the proof in three steps 

{\bf 1.} {\it We introduce conjugate and local test functions.}
Let $V_*$ be the lower semi-continuous
version of $V$, as defined in Appendix.
According to Assumption \ref{assumption:Psi},
dynamics \eqref{eq: portfolio process}
and definition \eqref{eq: stochastic target},
$V$ and $V_*$ are increasing functions of $p$ on $\R_-$.
The boundedness of $V$ implies also the finitness of $V_*$.
For $(t,x,q)\in [0,T^*]\x\R_+^*\x\R_+^*$,
we introduce the convex conjugate of $V_*$ in $p<0$, i.e.,
$$
\tilde V(t,x,q) :=\sup\brace{pq - V_*(t,x,p)~:~ p\le 0}\; .
$$
The map $q \mapsto \tilde V(.,q)$ is convex 
and upper-semi-continuous on $\R_+^*$.

Let $\tilde \vp$ be a smooth function with bounded derivatives, 
such that $(t_0, x_0, q_0)\in [0, T^*)\x\R_+^*\x\R_+^*$ 
is a local maximizer 
of $\tilde V-\tilde \vp$ with $(\tilde V-\tilde \vp) (t_0, x_0, q_0)=0$.
The map $q\mapsto \tilde\vp(.,q)$ is convex.
Without loss of generality,
we can assume that $\tilde \vp$ is strictly convex
with quadratic growth in $q$,
see the proof in Section 4 of \cite{bouchard2009stochastic}.

The convex conjugate of $\tilde \vp$ with respect to $q$ is
a strictly convex function of $p$ defined by
$
 \vp(t,x,p):=\sup\brace{qp - \tilde \vp(t,x,q) ~:~ q\ge 0}
$.
We can then properly define
the map $(t,x,q)\mapsto (\vp_p)^{-1}(t,x,q)$
on $[0,T^*]\x\R_+^*\x\R_+^*$,
where the inverse is taken in the $p$ variable.
According to the definition of $\tilde V$
and the quadratic growth of $\tilde \vp$,
there exists $p_0\le 0$ such that for the fixed $q_0$,
$$
 p_0 q_0 - V_* (t_0, x_0, p_0)  = \tilde V(t_0,x_0, q_0) = \tilde \vp(t_0, x_0, q_0) 
 = \sup_{p\le 0}\brace{pq_0 - \vp(t_0,x_0,p)}
$$ 
which, by taking the left and right sides of the above equation,
implies that $(t_0, x_0, p_0)$ is a local minimizer of $V_* - \vp$
and $(V_*-\vp)(t_0, x_0, p_0)=0$.
The first order condition in the definition of $\vp$
implies that $p_0=(\vp_p)^{-1}(t_0,x_0, q_0)$.

{\bf 2.} {\it We prove that $\tilde V$ is subsolution to a linear PDE.}
In our case, the control $\nu$ takes
unbounded values by definition of $\Uc_{t,y}$. 
It implies, together with Assumption \ref{assumption:mu and sigma}, that
\begin{equation}
 \label{eq: kernel nonempty}
 \Nc_0 (t,x, p, d_x, d_p):= \brace{(u,a)\in \R^2 ~:~\left|\sigma(t,x)\left(u- d_x\right) - ap d_p \right|=0 }
\ne \emptyset
\end{equation}
for any $(t,x,p,d_x, d_p)\in [0, T^*]\x\R_+\x \R_-\x\R^2$.
This holds in particular for the set
$$\Nc_0(t_0, x_0, p_0, \vp_x(t_0, x_0, p_0), \vp_p(t_0, x_0, p_0))$$
which is then composed of elements of the form
$
( (\vp_x
+ ap\vp_p/\sigma ) (t_0,x_0,p_0)  , a)$
for $a\in \R$.
According to Theorem \ref{th 2.1} in Appendix
and changing $\nu$ for its new expression,
$\vp$ in $(t_0, x_0, p_0)$ verifies
\begin{equation}\label{eq:PDEv de preuve}
-\vp_t - \demi \sigma^2(t_0, x_0) \vp_{xx} - 
\inf_{a\in \R}\brace{\demi (a p_0)^2 \vp_{pp}-a p_0 (\theta(t_0, x_0) \vp_p - \sigma\vp_{xp}) }\ge 0\ .
\end{equation}
Since $\vp_{pp} (t_0, x_0, p_0)>0$,
the infimum in the above equation is reached for
\begin{equation}
 \label{eq:meilleur controle}
 a :=- \left(\frac{\sigma  \vp_{xp} - \theta \vp_p}{p_0  \vp_{pp}} \right) (t_0,x_0, p_0)\in \R\; ,
\end{equation}
which can be plugged back into \eqref{eq:PDEv de preuve}
to obtain a new inequality at $(t_0, x_0, p_0)$:
\begin{equation}
 \label{eq: nonlinear for dual}
 - \vp_t -\frac{1}{2}\sigma^2(t_0, x_0)\vp_{xx} 
+\demi (\vp_{pp})^{-1}\big (\theta(t_0, x_0)  \vp_p - \sigma(t_0, x_0)   \vp_{xp}\big)^2  \geq 0 \; .
\end{equation}
Recall that $p_0=(\vp_p)^{-1}(t_0, x_0, q_0)$.
According to the Fenchel-Moreau theorem,
$\tilde \vp$ is its own biconjugate,
$\tilde \vp(t,x,q) = \sup \brace{pq - \vp(t,x,p) ~:~ p\le 0}$,
and 
$\tilde \vp(t_0,x_0,q_0) = p_0 q_0 - \vp(t_0, x_0, p_0)$.
By differentiating in $p$,
we get
$\vp_p(t_0, x_0, p_0) = q_0$.
It follows by differentiating again that
at point $(t_0, x_0, p_0)$ we have
the following correspondence:
\begin{equation}
 \label{eq:dual relation}
 \left(\vp_t,\vp_x, \vp_{xx},\vp_{pp},\vp_{xp} \right)=
 \left(-\tilde \vp_t,-\tilde \vp_x, -\tilde \vp_{xx} + \frac{\tilde\vp_{xq}^2}{\tilde\vp_{qq}} ,\frac{1}{\tilde \vp_{qq}},
 - \frac{\tilde \vp_{xq}}{\tilde \vp_{qq}} \right)\; .
\end{equation}
Plugging \eqref{eq:dual relation} into \eqref{eq: nonlinear for dual},
we get that $\tilde \vp$
satisfies  at $(t_0, x_0, q_0)$
\begin{equation}
\label{eq:lineaire}
-\tilde \vp_t
- \demi\left(\sigma^2 \tilde \vp_{xx} + |\theta|^2 q_0^2 \tilde \vp_{qq}+2 \mu \tilde \vp_{xq} \right)(t_0, x_0, q_0)
\le  0 \; .
\end{equation}
By arbitrariness of $(t_0, x_0, q_0)\in [0,T^*)\x\R_+^*\x\R_+^*$,
this implies that $\tilde V$ is a
subsolution of \eqref{eq:lineaire}
on $[0,T^*)\x\R_+^*\x\R_+^*$.
The terminal condition is given by 
the definition of $\tilde V$ and 
Theorem \ref{th 2.1} in Appendix: 
\begin{equation}
 \label{eq:terminal condition}
 \tilde V(T^*,x,q) = \sup_{p\le0}\brace{pq - V_* (T^*,x,p)}
=\sup_{p\le 0}\brace{pq - \Psi^{-1}(x,p)} \; .
\end{equation}

{\bf 3.}
{\it We compare $V$ to a conditional expectation.}
Let $\bar{V}$ be the function  defined by 
$
\bar{V}(t,x,q) = \E^{\Q_t}\left[ \tilde V(T^*, X^{t,x}_{T^*}, Q^{t,q}_{T^*}) \right]
$
for 
$(t,x,q)\in [0, T^*]\x (0,\infty)^2$, 
with dynamics for $s\in [0, T^*]$ given by
$$
 X^{t,x}_s = x+ \int_t^s \sigma(u, X^{t,x}_u)  dW^{\Q_t}_u \And 
 Q^{t,q}_s = q + \int_t^s Q^{t,q}_u \theta(u, X^{t,x}_u)  dW^{\Q_t}_u
 \;,
$$
where $\Q_t$ is a $\P$-equivalent measure 
such that ${d\P}/{d\Q_{t}} = Q^{t,1}$.
According to the Feynman-Kac formula,
$\bar{V}$ is a supersolution to equation \eqref{eq:lineaire},
and thus $\bar{V}\ge \tilde V$.
According to Assumption \ref{assumption:Psi},
$p \mapsto \Psi^{-1}(.,p)$ is convex increasing on $\R_-$.
Thus, for sufficiently large values of $q$,
$
J(x,q) := \arg \sup\brace{pq - \Psi^{-1}(x,p) ~:~p\le 0}
$ is well-defined and can take any value in $\R_-$.
By the implicit function theorem,
there exists a function
$q$ of $(t,x,p)$ such that
$
\E^{\Q_{t}}\left[ Q^{t,1}_{T^*} J\left(X^{t,x}_{T^*}, Q^{t,q(t,x,p)}_{T^*}\right) \right]=p
$.
Therefore,
$$
\begin{array}{lcl}
V(t,x,p) 
&\ge& V_*(t,x,p) 
 \ge \sup\brace{qp - \bar{V}(t,x,q) ~:~ q\ge 0 } \\
 
&\ge& pq(t,x,p) - \E^{\Q_{t}}\left[ \tilde V \left(T^*, X^{t,x}_{T^*}, Q^{t,q(t,x,p)}_{T^*}\right) \right]\\

 &\ge& q(t,x,p) \left(p - \E^{\Q_{t}}\left[Q^{t,1}_{T^*} J\left(X^{t,x}_{T^*}, Q^{t,q(t,x,p)}_{T^*}\right) \right]\right) \\

& & \hspace{1cm}
  + \E^{\Q_{x}}\left[\Psi^{-1}\left(X^{t,x}_{T^*},J\left(X^{t,x}_{T^*}, Q^{t,q(t,x,p)}_{T^*}\right)\right) \right]\\
  
 &\ge& \E^{\Q_{t}}\left[ \Psi^{-1}\left(X^{t,x}_{T^*},J\left(X^{t,x}_{T^*}, Q^{t,q(t,x,p)}_{T^*}\right) \right)\right]
=: y(t,x,p) \; .
\end{array}
$$
By the martingale representation theorem,
there exists $\nu\in \Uc_{t,y}$ such that
$$
Y^{t,y(t,x,p),\nu}_{T^*} = \Psi^{-1}\left(X^{t,x}_{T^*},J\left(X^{t,x}_{T^*}, Q^{t,q(t,x,p)}_{T^*} \right) \right)
$$
which implies that for $p\le 0$
$$
\begin{array}{ll}
\Esp{\Psi\left(X^{t,x}_{T^*},Y^{t,y(t,x,p),\nu}_{T^*}\right)} 
& \ge \Esp{J\left(X^{t,x}_{T^*}, Q^{t,q(t,x,p)}_{T^*}\right)}
 = \E^{\Q_{t}}\left[Q^{t,1}_{T^*} J\left(X^{t,x}_{T^*}, Q^{t,q(t,x,p)}_{T^*}\right) \right] \\
& \ge p
\end{array}
$$
and therefore, by definition of the value function
$y(t,x,p) \ge V(t,x,p)$
which provides the equality and \eqref{eq: dual formulation} 
for $(t,x,p)\in \Sb$.
\end{proof}

\subsection{Application to the interval $[T,T^*]$}
\label{sec:interval complete}

We now come back to $[T, T^*]$, 
where the \textit{shaping factor} $\Lambda$, 
is assumed to take the known value $\lambda$
at time $T$.
To highlight the effect of
$\lambda$ as a given state parameter for $t\ge T$,
we denote by $V(t,x,p,\lambda)$ the 
value function defined similarly as~\eqref{eq: stochastic target} where $\Psi$ is given by~\eqref{eq:Psi}.

\begin{Definition}
 \label{def:stochastic target with lambda}
 Let $(t,x,p,\lambda)\in \Sb_2\x L$. We can define the value function on $\Sb_2 \x L$ as
 \begin{equation}
 \label{eq: stochastic target bis}
V(t,x,p,\lambda):= \inf \{ y\ge  -\kappa ~:~ 
\Esp{\ell\left(g\left(\lambda X^{t,x}_{T^{\ast}}\right)-Y^{t,x,y,\nu}_{T^{\ast}}\right)^+}\le -p \ \text{ for }\nu \in \Uc_{t,y} \} \; ,
\end{equation}
\end{Definition}
Notice that if $X$ is an exponential process,
then we can explicitely change
$V(t,x,p,\lambda)$ for $V(t,\lambda x, p, 1)$,
recalling the assumption $\Pro{\Lambda=1}=1$ of the previous section.
Let us recall the standard pricing 
concepts in complete market.
Under Assumption \ref{assumption:mu and sigma}, 
we can define 
$\Q$ the $\P$-equivalent martingale measure defined by
\begin{equation}
\label{eq: EMM}
\left.\frac{d\Q}{d\P}\right|_{\Fc_t} = \exp \brace{-\int_T^{t} \theta(s,X_s) dW_s - 
\demi \int_T^{t} |\theta(s,X_s)|^2 ds} \;, t\ge T \; .
\end{equation}
In the present setting,
$\Q$ is the unique risk-neutral measure 
with the drifted Brownian motion
$W_t^{\Q} = W_t + \int_T^t \theta(s, X_s)ds$,
and
we can provide a unique no-arbitrage price for
$g(\lambda X_{T^{\ast}})$.
\begin{Definition}
 \label{def:black-scholes price}
 We define for $(t,x,\lambda)\in [T, T^*]\x\R_+^*\x L$ the function
 \begin{equation}
\label{eq: black-scholes price}
C(t,x, \lambda):=\E^{\Q}[g(\lambda X^{t,x}_{T^{\ast}})]\quad  \text{ for } \ t\ge T \; .
\end{equation}
\end{Definition}
According to Assumption \ref{assumption:mu and sigma}
and Assumption \ref{assumption:g}, 
we can apply Proposition~6.2, in~\cite{janson}, 
which implies that for any $\lambda \in L$, $(t,x)\mapsto C(t,x,\lambda)$ 
is Lipschitz continuous in the spatial variable with the same Lipschitz constant $K$ as $g$. 
Moreover, $C(\cdot,\cdot,\lambda)$ is the unique classical solution 
of polynomial growth to the
Black-Scholes equation
%
\begin{equation}
 \label{eq:black-scholes equation}
 -C_t - \demi \sigma^2(t,x) C_{xx}=0 \ \text{ on }\ [T, T^*)\x (0,+\infty)
\end{equation}
with terminal condition $C(T^*,x, \lambda)=g(\lambda x)$.
Now Proposition \ref{prop:solution dual}
applies to provide the following when we use a loss function
as in Definition \ref{assumption:l}.
\begin{Corollary}
 \label{prop:solution in complete market}
 Let Assumptions \ref{assumption:mu and sigma} and \ref{assumption:g} hold. 
 Then $V$ is given on $\Sb_2 \x L$ by 
\begin{equation}
\label{eq:explicit solution complete market +}
V(t,x,p, \lambda) = \E^{\Q} [g(\lambda X^{t,x}_{T^*}) - \ell^{-1}(-P_{T^*}^{t,p}) ]
\end{equation}
where $P^{t,p}_{T^*}$ is a $\Fc_{T^{\ast}}$-measurable random variable
defined by
\begin{equation}
\label{eq:diffusion of P}
P^{t,p}_{T^*} = j \left(q \exp\left( \int_t^{T^{\ast}} \theta(s,X^{t,x}_s)dW^{\Q}_s - 
\demi  \int_t^{T^{\ast}} \theta^2(s,X^{t,x}_s)ds \right)\right)
\end{equation}
with $j(q):= -((\ell^{-1})')^{-1}(q)$
and
$q$ in \eqref{eq:diffusion of P} such that
$
\E^{\Q} \left[ P^{t,p}_{T^*} \right] =  p \; .
$
Moreover, $V$ is convex and increasing in $p$ and is $\Cc^{1,2}$ in $(t,x)$ on $[T,T^*]\x\R_+^*$.
\end{Corollary}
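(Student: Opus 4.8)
The idea is to specialise Proposition~\ref{prop:solution dual} to the map $\Psi(x,y)=-\ell\big((g(\lambda x)-y)^+\big)$ and then to upgrade the regularity of the formula so obtained. \emph{The building blocks.} For $p<0$, strict monotonicity and continuity of $\ell$ with $\ell(0)=0$ imply that $\Psi(x,\cdot)=p$ has, on $\cl\brace{y\ge-\kappa~:~\Psi(x,y)<0}$, the unique root $y=g(\lambda x)-\ell^{-1}(-p)$, so $\Psi^{-1}(x,p)=g(\lambda x)-\ell^{-1}(-p)$. Hence
$$
J(x,q)=\argsup\brace{pq-\Psi^{-1}(x,p)~:~p\le0}=\argsup\brace{pq+\ell^{-1}(-p)~:~p\le0},
$$
the $p$-free term $g(\lambda x)$ dropping out, so $J$ does not depend on $x$; since $\ell$ is strictly convex, $\ell^{-1}$ is strictly concave and strictly increasing, the objective is strictly concave, and the first-order condition $(\ell^{-1})'(-p)=q$ yields $J(x,q)=-((\ell^{-1})')^{-1}(q)=j(q)$, a strictly increasing function of $q$.

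\emph{The formula and the behaviour in $p$.} On $[T,T^*]$ the measure $\Q_t$ of Proposition~\ref{prop:solution dual} is the martingale measure $\Q$ of~\eqref{eq: EMM}, and~\eqref{eq: risk neutral diffusion} gives $Q^{t,q}_{T^*}=q\exp\big(\int_t^{T^*}\theta(s,X^{t,x}_s)\,dW^{\Q}_s-\demi\int_t^{T^*}\theta^2(s,X^{t,x}_s)\,ds\big)$. Plugging $J=j$ into~\eqref{eq: dual formulation},
$$
V(t,x,p,\lambda)=\E^{\Q}\Big[g(\lambda X^{t,x}_{T^*})-\ell^{-1}\big(-j(Q^{t,q}_{T^*})\big)\Big]=\E^{\Q}\big[g(\lambda X^{t,x}_{T^*})\big]-\E^{\Q}\big[\ell^{-1}(-P^{t,p}_{T^*})\big],
$$
with $P^{t,p}_{T^*}:=j(Q^{t,q}_{T^*})$ as in~\eqref{eq:diffusion of P} and $q$ fixed by the normalisation in Proposition~\ref{prop:solution dual}; since $C(t,x,\lambda)=\E^{\Q}[g(\lambda X^{t,x}_{T^*})]$, this is~\eqref{eq:explicit solution complete market +}. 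Moreover, the representation $V(t,x,p,\lambda)=\sup\brace{qp-\bar V(t,x,q)~:~q\ge0}$ established inside the proof of Proposition~\ref{prop:solution dual} exhibits $V$ as a supremum of affine functions of $p$ with non-negative slopes, so $V$ is convex and non-decreasing in $p$.

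\emph{Regularity in $(t,x)$.} Write $V(t,x,p,\lambda)=C(t,x,\lambda)-u(t,x,p)$ with $u(t,x,p):=\E^{\Q}[\ell^{-1}(-P^{t,p}_{T^*})]$; the term $C$ is $\Cc^{1,2}$ in $(t,x)$ by the regularity of the Black-Scholes price quoted after~\eqref{eq: black-scholes price}. Set $M^{t,x}_{T^*}:=\exp\big(\int_t^{T^*}\theta(s,X^{t,x}_s)\,dW^{\Q}_s-\demi\int_t^{T^*}\theta^2(s,X^{t,x}_s)\,ds\big)$ and $G(z):=\ell^{-1}(-j(z))$, so that $u(t,x,p)=\E^{\Q}[G(q(t,x,p)\,M^{t,x}_{T^*})]$, where $q(t,x,p)$ solves $\E[j(q\,M^{t,x}_{T^*})]=p$. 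Both $(t,x,q)\mapsto\E^{\Q}[G(q\,M^{t,x}_{T^*})]$ and $(t,x,q)\mapsto\E[j(q\,M^{t,x}_{T^*})]$ are Feynman--Kac functionals of the diffusion $(X^{t,x},Q^{t,q})$, whose $\Cc^{1,2}$ regularity in $(t,x)$ I would establish exactly as for $C$ --- the $x$-dependence running only through the coefficients $\sigma(\cdot,X^{t,x})$ and $\theta(\cdot,X^{t,x})$ --- while strict convexity of $\ell$ makes $q\mapsto\E[j(q\,M^{t,x}_{T^*})]$ smooth and strictly increasing, so the implicit function theorem yields $q(\cdot,\cdot,p)\in\Cc^{1,2}$. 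Composing, $u$, and hence $V$, is $\Cc^{1,2}$ in $(t,x)$.

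\emph{Main obstacle.} The delicate point is precisely this last step: the spatial $\Cc^{1,2}$ regularity of the Feynman--Kac functionals when $\mu,\sigma$ are only Lipschitz and the driving pair $(X^{t,x},Q^{t,q})$ is degenerate (driven by a single Brownian motion). I expect to dispatch it by the same device already used for the Black-Scholes price $C$, since the $x$-dependence enters the relevant payoff functionals only through the flow $x\mapsto X^{t,x}$.
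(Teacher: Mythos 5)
Your proposal is correct and follows essentially the same route as the paper: the corollary is obtained by specialising Proposition~\ref{prop:solution dual} to $\Psi(x,y)=-\ell((g(\lambda x)-y)^+)$, computing $\Psi^{-1}(x,p)=g(\lambda x)-\ell^{-1}(-p)$ and $J(x,q)=j(q)$, and identifying $Q^{t,q}$ with the stochastic exponential of $\theta$ --- a computation the paper leaves implicit and which you spell out correctly. The only cosmetic differences are that you derive convexity and monotonicity in $p$ from the sup-of-affine-maps representation where the paper cites Proposition~3.3 of \cite{bouchard2009stochastic}, and that the $\Cc^{1,2}$ regularity of the term $\E^{\Q}[\ell^{-1}(-P^{t,p}_{T^*})]$, which you rightly flag as the delicate point, is treated no more completely in the paper, which attributes the regularity entirely to $C$.
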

The solution \eqref{eq:explicit solution complete market +}-\eqref{eq:diffusion of P}
can be explicitly computed in simple cases,
see section \ref{sec:numerics} below.
Notice that $V$ is bounded on $[T, T^{\ast}]\x \R_+^*\x\R_-^*\x L$ by
$C$.
Since $\ell$ is convex increasing on $\R_+$,
$-\ell^{-1}$ is convex. 
A look at \eqref{eq:explicit solution complete market +}-\eqref{eq:diffusion of P}
then convinces that $V$ is continuous in $p$
on $\R_-^*$, but
Proposition 3.3 in \cite{bouchard2009stochastic}
can be used in our setting
to assert that $V$ is also convex in $p$.
According to what was said about the function $C$,
$V$ is $\Cc^{1,2}$ in $(t,x)$ on $[T,T^*]\x\R_+^*$.
\begin{Remark}
\label{rem:partial hedging}
It is noticeable that the
value function \eqref{eq:explicit solution complete market +} 
is composed of the
Black-Scholes price of the claim $g(\lambda X^{t,x}_{T^*})$ minus
a term that corresponds
to a penalty in the dual expression
of the acceptance set
if $\ell$ is a risk measure, see \cite{follmer2006convex}.
The hedging strategy is to be modified in consequence.
First note that
\eqref{eq:explicit solution complete market +}
is a conditional expectation of
a function of two Markov processes $(X^{t,x}, Q^{t,q})$,
for $q$ well chosen,
so that we can write $Y^{t,V(t,x,p), \nu}_s := y(X^{t,x}_s,Q^{t,q}_s) := V(s,X^{t,x}_s,P^{t,p}_s)$.
According to Corollary \ref{prop:solution in complete market},
$y$ is a regular function, and $X^{t,x}, Q^{t,q}$ are martingales under
the probability $\Q$, as well as $Y^{t,V(t,x,p), \nu}$. Therefore
It\^o formula provides
\begin{equation*}
dY^{t,V(t,x,p), \nu}_s
=y_x (X^{t,x}_s,Q^{t,q}_s)dX^{t,x}_s
+  y_q (X^{t,x}_s,Q^{t,q}_s)dQ^{t,q}_s \; .
\end{equation*}
Now expressing $dW^{\Q}_s$ with respect to $dX^{t,x}_s$,
we obtain 
\begin{equation}
\label{eq: strategy complete}
dY^{t,V(t,x,p), \nu}_s= 
\left(y_x(X^{t,x}_s,Q^{t,q}_s) + \mu(s, X^{t,x}_s) Q^{t,q}_s y_q(X^{t,x}_s,Q^{t,q}_s) \right) dX^{t,x}_s
\end{equation}
which allows to deduce the optimal dynamic strategy $\nu$.
\end{Remark}

Corollary \ref{prop:solution in complete market}
retrieves solutions of \cite{follmer2000efficient}.
The original problem in the latter
is to minimize the expected loss
given an initial portfolio value, but
the authors prove that our version of the problem
is equivalent.
The stochastic target problem of Definition \ref{def:stochastic control problem}
is actually linked to an
optimal control problem in a similar way,
as noticed in the introduction of \cite{bouchard2009stochastic} and
developped in \cite{bouchard2011optimal}.
We can introduce the value function giving the minimal loss 
that can be achieved at time $T^\ast$, 
with  at time $t$, 
the initial capital $y$,  
$X^{t,x}_t=x$ and the shaping factor $\Lambda=\lambda$.
\begin{Definition}
 \label{def:value function u}
 For $(t,x,y,\lambda)\in [T,T^*]\x\R_+^*\x\R\x L$ we define
 \begin{equation}
\label{eq:optimal control problem}
U(t,x,y,\lambda) := \sup \brace{ \Esp{-\ell\left(\left(g\left(\lambda X^{t,x}_{T^*}\right)-Y_{T^{\ast}}^{t,y,\nu}\right)^+\right)} 
~:~ \nu\in \Uc_{t,y} }\ .
\end{equation}
\end{Definition}
This corresponds to the problem of finding the best reachable threshold $p$
if the initial portfolio value is given by $y$ at time $t$.
This result is of great use in the
forthcoming resolution of the problem before $T$,
by the following connection with $V$.
\begin{Lemma}
 \label{lem:equivalence}
 For $(t,x,y,\lambda)\in [T,T^*]\x\R_+^*\x[-\kappa, \infty)\x L$
 we define 
 $$V^{-1}(t,x,y,\lambda) 
:= \sup \brace{p \le 0 ~:~ V(t,x,p, \lambda)\leq y } \; .$$
 Then we have
 \begin{equation}
\label{eq:equivalence result}
U(t,x,y,\lambda) = V^{-1}(t,x,y,\lambda)  \; .
\end{equation}
Moreover the function $U$ is a concave increasing function of $y$ bounded by $0$ from above.
\end{Lemma}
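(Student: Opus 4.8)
The plan is to establish the identity \eqref{eq:equivalence result} by a double inequality, using the monotonicity and convexity properties of $V$ in $p$ granted by Corollary \ref{prop:solution in complete market}, and then to transfer the concavity/monotonicity of $U$ from those same properties of $V$. First I would fix $(t,x,y,\lambda)\in \Sb_2\x L$ and observe that, since $\Psi(x,y)=-\ell((g(x)-y)^+)$, the value function of Definition \ref{def:stochastic target with lambda} is exactly the quantity $V(t,x,p,\lambda)$ appearing in Corollary \ref{prop:solution in complete market}; it is increasing in $p$ on $\R_-^*$, convex in $p$, continuous in $p$, and bounded above by the Black--Scholes price $C(t,x,\lambda)$. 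Hence $\{p\le 0 : V(t,x,p,\lambda)\le y\}$ is an interval of the form $(-\infty,\bar p\,]$ (or empty, or all of $\R_-$ when $y\ge C$), so $V^{-1}(t,x,y,\lambda)$ is well defined as its right endpoint, and by continuity of $V$ in $p$ one has $V(t,x,V^{-1}(t,x,y,\lambda),\lambda)\le y$ whenever the set is nonempty, with equality unless $y\ge C(t,x,\lambda)$.

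For the inequality $U\le V^{-1}$: take any $\nu\in\Uc_{t,y}$ and set $p_\nu:=\Esp{-\ell((g(\lambda X^{t,x}_{T^*})-Y^{t,x,y,\nu}_{T^*})^+)}\le 0$. By the very definition \eqref{eq: stochastic target bis} of $V$, the portfolio strategy $\nu$ started from capital $y$ witnesses the constraint at level $p_\nu$, so $V(t,x,p_\nu,\lambda)\le y$, whence $p_\nu\le V^{-1}(t,x,y,\lambda)$; taking the supremum over $\nu$ gives $U(t,x,y,\lambda)\le V^{-1}(t,x,y,\lambda)$. For the reverse inequality, let $p^*:=V^{-1}(t,x,y,\lambda)$ and pick any $p<p^*$ (if $p^*=-\infty$ there is nothing to prove; if $p^*>-\infty$ such $p$ exist). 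Then $V(t,x,p,\lambda)\le y$, so by Corollary \ref{prop:solution in complete market} — more precisely by the explicit representation \eqref{eq:explicit solution complete market +}--\eqref{eq:diffusion of P} together with the martingale representation theorem as invoked in the proof of Proposition \ref{prop:solution dual} — there is $\nu\in\Uc_{t,V(t,x,p,\lambda)}\subseteq\Uc_{t,y}$ (admissibility from a smaller initial endowment extends to $y$ by adding the constant $y-V(t,x,p,\lambda)\ge 0$) achieving $\Esp{-\ell((g(\lambda X^{t,x}_{T^*})-Y^{t,x,y,\nu}_{T^*})^+)}\ge p$. Hence $U(t,x,y,\lambda)\ge p$; letting $p\uparrow p^*$ yields $U(t,x,y,\lambda)\ge V^{-1}(t,x,y,\lambda)$, and \eqref{eq:equivalence result} follows.

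It remains to prove the structural properties of $U$. Boundedness above by $0$ is immediate since $\ell\ge 0$, so $-\ell(\cdot)\le 0$. Monotonicity in $y$ follows from \eqref{eq:equivalence result}: if $y_1\le y_2$ then $\{p\le 0: V(t,x,p,\lambda)\le y_1\}\subseteq\{p\le 0: V(t,x,p,\lambda)\le y_2\}$, so $V^{-1}$, hence $U$, is nondecreasing in $y$. Concavity in $y$ I would derive from the convexity of $p\mapsto V(t,x,p,\lambda)$: the right-continuous inverse (in the sense of $V^{-1}(t,x,\cdot,\lambda)=\sup\{p: V(t,x,p,\lambda)\le \cdot\}$) of a convex increasing function is concave increasing on its domain. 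Concretely, for $y_1,y_2$ in the domain and $\eta\in[0,1]$, write $p_i$ close to $V^{-1}(t,x,y_i,\lambda)$ with $V(t,x,p_i,\lambda)\le y_i$; convexity of $V$ in $p$ gives $V(t,x,\eta p_1+(1-\eta)p_2,\lambda)\le \eta y_1+(1-\eta)y_2$, so $V^{-1}(t,x,\eta y_1+(1-\eta)y_2,\lambda)\ge \eta p_1+(1-\eta)p_2$; taking $p_i\uparrow V^{-1}(t,x,y_i,\lambda)$ gives concavity of $V^{-1}=U$ in $y$. The one subtlety to handle carefully is the boundary of the domain in $y$: for $y<-\kappa$ the set $\Uc_{t,y}$ is empty, and for $y\ge C(t,x,\lambda)$ one has $U=0$ (superhedging is possible), so the concavity argument should be stated on the effective domain and then extended, noting that $U$ is identically $0$ on $[C(t,x,\lambda),\infty)$ and the gluing at $y=C(t,x,\lambda)$ preserves concavity since $U\le 0$ everywhere. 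The main obstacle is this careful treatment of the interval on which $V(t,x,\cdot,\lambda)\le y$ is nonempty and proper, i.e. matching the $\sup$-with-$\le$ definition of $V^{-1}$ with the convex-duality representation so that the attainment step ($\nu$ exists) is legitimate for every admissible $y$; everything else is a routine translation between the target formulation and the optimal-control formulation.
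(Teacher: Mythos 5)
Your argument is correct, but it is worth noting that the paper does not actually carry out this proof: it disposes of \eqref{eq:equivalence result} in one line by citing Lemma~2.1 of Bouchard and Dang (\emph{Optimal Control versus Stochastic Target problems: An Equivalence Result}), and of the properties of $U$ by appealing to ``the properties of $V$ and Definition \ref{def:value function u}.'' What you have written is essentially the content of that cited lemma, specialized to the present complete-market setting, and it holds together. Your inequality $U\le V^{-1}$ is the purely definitional direction and matches the general argument. For $V^{-1}\le U$ you lean on exact attainment of the target from the initial capital $V(t,x,p,\lambda)$ (via \eqref{eq:explicit solution complete market +}--\eqref{eq:diffusion of P} and the martingale representation step in the proof of Proposition \ref{prop:solution dual}), whereas the general equivalence result of Bouchard--Dang avoids attainment by working with $\eps$-optimal strategies and near-infimal initial capitals; your device of taking $p<p^*$ strictly and letting $p\uparrow p^*$ plays the same role, so either route closes the gap. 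The one place where your argument is genuinely doing extra work that the paper leaves implicit is the concavity of $U$ in $y$: you obtain it as the generalized inverse of the convex increasing map $p\mapsto V(t,x,p,\lambda)$, which is valid, while the more direct route (and the one implicit in ``Definition \ref{def:value function u}'') is to use linearity of $(y,\nu)\mapsto Y^{t,x,y,\nu}_{T^*}$ together with concavity of $\Psi$ in $y$ to get $U(\eta y_1+(1-\eta)y_2)\ge \eta U(y_1)+(1-\eta)U(y_2)$ by mixing strategies; that version does not require knowing that $V$ is convex in $p$. Your handling of the boundary cases ($y\ge C(t,x,\lambda)$, where $U=0$, and the gluing of the concave pieces) is careful and correct.
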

Equality \eqref{eq:equivalence result} is a direct application
of Lemma 2.1 in \cite{bouchard2011optimal},
whereas the properties of $U$ fall from
the properties of $V$ and Definition \ref{def:value function u}.

\section{Solution in incomplete market}
\label{sec:incomplete}

We now turn to the solution of the problem
before $T$, i.e., when $\Lambda$
is unkown.
We first provide the
face-lifting procedure that
allows to reduce
the new situation to the one handled in Section \ref{sec:complete}.
This procedure can be done
according to two model paradigms:
\begin{enumerate}
	\item  the law $\rho$ of $\Lambda$ is supposed to be known;
	\item only the support $L$ of the law of $\Lambda$ is supposed to be known.
\end{enumerate}
The first approach is a \textit{probabilistic approach} 
with a prior distribution,
whereas the second one is a \textit{robust approach},
in connection with
robust finance with parameter uncertainty,
see \cite{bouchard2012stochastic}
for a control theory version.
Finally,
the numerical complexity of the face-lifting procedure
pushes us to give up
explicit formulas for a
numerical approach.
We thus modify results of
Section \ref{sec:complete}
to provide a convenient formulation
of the problem
to be numerically approximated in Section \ref{sec:numerics}.

\subsection{Faced-lifted intermediary condition with prior on the shaping factor}
\label{sec:face-lifting}

When $t< T$,
the problem \eqref{eq: stochastic target} cannot be treated with the methodology developped in~\cite{bouchard2009stochastic}.
To solve the problem,
we are guided by the following argument.
Considering $(t,y)\in [0, T)\x [-\kappa, \infty)$ 
and a strategy $\nu\in \Uc_{t, y}$,
we arrive at time $T$ to the wealth $Y^{t,x,y,\nu}_{T}\ge -\kappa$, 
at the apparition of the exogenous risk factor $\Lambda$.
Assume that the agent wants to
control the expected level of risk $p<0$, at time $T$, by using the portfolio $Y^{t,x,y,\nu}_{T}$.
It is obviously not possible with certainty
if $\Lambda$ takes a value $\lambda$ such that
$
Y_{T}^{t,y,\nu} < V(T, X^{t,x}_{T}, p, \lambda) \; .
$
However, 
the optimal strategy after $T$ consists
in optimizing the portfolio
by trying to achieve the optimal expected level of loss
$U(T, X^{t,x}_{T}, Y_{T}^{t,y,\nu}, \lambda)$ given by
Definition \ref{def:value function u}.
In the complete market setting,
this achievement is possible.

\begin{Lemma}
 \label{lem:selection}
 Under Assumptions \ref{assumption:mu and sigma} and \ref{assumption:Psi},
 there exists a map $(x,y, \lambda)\mapsto \nu(x,y,\lambda)\in \Uc_{T, y}$
 on $\R_+^*\x[-\kappa, \infty)\x L$ such that
 \begin{equation}
 \label{eq:measurable assumption}
 \Esp{ \Psi\left(\lambda X^{T,x}_{T^*}, Y^{T,y,\nu(x,y, \lambda)}_{T^*}\right)} \ge U(T,x,y,\lambda) \; .
\end{equation}
\end{Lemma}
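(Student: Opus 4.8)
The plan is to obtain the map $\nu(x,y,\lambda)$ by a measurable selection argument applied to the optimization problem defining $U$. First, I would fix $\lambda\in L$ and note that, by Lemma \ref{lem:equivalence} together with Corollary \ref{prop:solution in complete market}, the value $U(T,x,y,\lambda)=V^{-1}(T,x,y,\lambda)$ is attained: indeed, for $p:=V^{-1}(T,x,y,\lambda)$ one has $V(T,x,p,\lambda)\le y$, and the explicit representation \eqref{eq:explicit solution complete market +}--\eqref{eq:diffusion of P} together with the martingale representation theorem (exactly as in the last part of the proof of Proposition \ref{prop:solution dual}) produces an admissible strategy $\nu\in\Uc_{T,y}$ with $Y^{T,y,\nu}_{T^*}=g(\lambda X^{T,x}_{T^*})-\ell^{-1}(-P^{T,p}_{T^*})\ \Pas$, hence $\Esp{\Psi(\lambda X^{T,x}_{T^*},Y^{T,y,\nu}_{T^*})}=\Esp{-\ell(\ell^{-1}(-P^{T,p}_{T^*}))}=\Esp{P^{T,p}_{T^*}}=p=U(T,x,y,\lambda)$. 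So for each triple the inequality \eqref{eq:measurable assumption} is achievable; the content of the lemma is the joint measurability of the selection in $(x,y,\lambda)$.

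The key steps, in order, are as follows. (1) Recast the existence of $\nu$ as the statement that a certain correspondence from $\R_+^*\x[-\kappa,\infty)\x L$ into the Polish space of admissible controls (square-integrable $\F$-progressively measurable processes, which embeds into $L^2([T,T^*]\x\Omega)$) has nonempty values. (2) Verify that the objective functional $(x,y,\lambda,\nu)\mapsto \Esp{\Psi(\lambda X^{T,x}_{T^*},Y^{T,x,y,\nu}_{T^*})}$ is jointly measurable; here I use continuity of $\Psi$ with polynomial growth (Assumption \ref{assumption:Psi}), the $L^2$-stability of the flow $X^{T,x}$ and of the stochastic integral $\int_T^\cdot\nu_u\,dX^{T,x}_u$ under Assumption \ref{assumption:mu and sigma}, and dominated convergence to pass limits inside the expectation. (3) Observe that $(x,y,\lambda)\mapsto U(T,x,y,\lambda)$ is measurable — in fact continuous in $x$ and $y$ by the regularity of $V$ in Corollary \ref{prop:solution in complete market} and Lemma \ref{lem:equivalence}, and measurable in $\lambda$ since $V(T,\cdot,\cdot,\lambda)$ depends on $\lambda$ only through $g(\lambda\,\cdot)$, which is jointly continuous. (4) Apply a Jankov--von Neumann / Kuratowski--Ryll-Nardzewski type measurable selection theorem (as invoked for analogous purposes in \cite{bouchard2009stochastic, bouchard2012stochastic}) to the graph $\{(x,y,\lambda,\nu):\Esp{\Psi(\lambda X^{T,x}_{T^*},Y^{T,x,y,\nu}_{T^*})}\ge U(T,x,y,\lambda)\}$, which is measurable by (2)--(3) and has nonempty $(x,y,\lambda)$-sections by the first paragraph, to extract a universally measurable (hence, after redefinition on a null set, Borel) map $(x,y,\lambda)\mapsto\nu(x,y,\lambda)$. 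One should also check that the selected process lies in $\Uc_{T,y}$, i.e. respects the credit-line constraint $Y^{T,y,\nu}_s\ge-\kappa$; this can be arranged because the attaining strategies constructed in step (1) already satisfy it (the terminal value $g(\lambda X^{T,x}_{T^*})-\ell^{-1}(-P^{T,p}_{T^*})$ and the corresponding wealth are bounded below), so the correspondence can be taken with values in $\Uc_{T,y}$ from the outset.

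I expect the main obstacle to be step (2) combined with the topological setup of step (4): one must fix a Polish (or at least Suslin) space in which admissible strategies live and in which the constraint $Y^{T,y,\nu}\ge-\kappa$ and square-integrability cut out a measurable set, and then verify that the objective is a Carath\'eodory (or at least jointly measurable, sections continuous) map on that space — the delicate point being the continuous dependence of the It\^o integral $\int_T^\cdot \nu_u\,dX^{T,x}_u$ on $(x,\nu)$ simultaneously, uniformly enough to commute with the expectation of $\Psi$. This is routine but technical; since the paper explicitly adopts "strong and sufficient assumptions in order to avoid involved proofs in difficult cases," I would present it at the level of citing the selection machinery of \cite{bouchard2009stochastic} and \cite{bouchard2012stochastic} and checking the two or three hypotheses those results require, rather than reproving the selection theorem.
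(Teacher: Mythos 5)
Your proposal is correct, and its first paragraph (pointwise attainment) is essentially the paper's entire proof: fix $(x,y,\lambda)$, set $p:=U(T,x,y,\lambda)$, use Lemma \ref{lem:equivalence} to get $y\ge V(T,x,p,\lambda)$, and invoke Corollary \ref{prop:solution in complete market} together with the monotonicity of $\Psi$ in $y$ to produce an admissible $\nu$ with $\Esp{\Psi(\lambda X^{T,x}_{T^*},Y^{T,y,\nu}_{T^*})}\ge p$. (One small imprecision on your side: when $y>V(T,x,p,\lambda)$ strictly, the self-financing portfolio started at $y$ satisfies $Y^{T,y,\nu}_{T^*}\ge g(\lambda X^{T,x}_{T^*})-\ell^{-1}(-P^{T,p}_{T^*})$ rather than equality, which is why the paper appeals to $\Psi$ being increasing in $y$; the conclusion $\ge p$ is unaffected.) Where you genuinely diverge is in how the joint measurability of $(x,y,\lambda)\mapsto\nu(x,y,\lambda)$ is obtained. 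You propose the heavy route: embed the admissible controls in a Polish space, check Carath\'eodory-type regularity of the objective, and apply a Jankov--von Neumann or Kuratowski--Ryll-Nardzewski selection theorem. The paper instead gets measurability for free by citing Remark \ref{rem:partial hedging}: in the complete-market regime the optimal strategy is given in explicit feedback form, $\nu_s=\bigl(y_x+\mu Q\,y_q\bigr)(X^{t,x}_s,Q^{t,q}_s)$ with $y$ a regular function of the state and $q=q(T,x,p)$ determined by the normalization constraint, so the dependence on $(x,y,\lambda)$ is through explicit (measurable) formulas and no abstract selection is needed. Your approach is more general --- it would survive if the dual formula ceased to be explicit --- at the cost of the technical apparatus you yourself flag as the main obstacle; the paper's approach is shorter but tied to the explicit solvability of the complete-market problem. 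Both are valid here.
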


\begin{proof}
 Fix $(x,y,\lambda)\in \R_+^*\x[-\kappa, \infty)\x L$.
 Let $p:=U(T,x,y,\lambda)$.
 According to Lemma \ref{lem:equivalence},
 $y\ge V(T, x, p, \lambda)$.
 Following
 Corollary \ref{prop:solution in complete market} and Remark \ref{rem:partial hedging},
 and since $\Psi$ is increasing in $y$,
 there exists $\nu$ such that
 $
 \Esp{\Psi\left(\lambda X^{T,x}_{T^*}, Y^{T,y,\nu}_{T^*}\right)} \ge p
 $ for $(x,y,\lambda)$ arbitarily fixed. 
\end{proof}
Then the expected loss at $T$ resulting from 
this strategy is averaged among the the realizations of $\Lambda$
for fixed $X^{t,x}_{T}$ and $Y_{T}^{t,y,\nu}$.
The expectation is thus done also with regard to $\Lambda$ just before $T$.
\begin{Definition}
 \label{def:face-lifted}
 We define $\Xi : \R_+\times [-\kappa,\infty)\rightarrow \R_-$ by
 \begin{equation}
\label{eq:face lifted terminal condition}
\Xi (x,y) := \int_{L} U(T, x, y, r) \rho(dr) \; .
\end{equation}
\end{Definition}
The above function $\Xi$ represents the expected optimal
level of loss the agent can reach if
she attains the wealth $y$ at time $T$ with a state $X^{T,x}_{T}=x$. 
\begin{Lemma}
 \label{lem:properties of Xi}
 The function $\Xi$ takes non-positive values, 
 is $\Cc^{2}$ in $x\in \R_+^*$
 and concave increasing in $y$.
\end{Lemma}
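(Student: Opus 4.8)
The plan is to transfer all three properties from the function $U$ to its $\rho$-average $\Xi(x,y)=\int_L U(T,x,y,r)\,\rho(dr)$. The sign and concavity/monotonicity in $y$ are immediate, and the only real work is the $\Cc^2$-regularity in $x$.

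First I would dispose of the easy parts. Since $U(T,x,y,r)\le 0$ for every $r\in L$ by Lemma \ref{lem:equivalence} (or directly from Definition \ref{def:value function u}, as $\ell\ge 0$), integrating against the probability measure $\rho$ gives $\Xi(x,y)\le 0$. For fixed $x$, Lemma \ref{lem:equivalence} tells us each $y\mapsto U(T,x,y,r)$ is concave, increasing, and bounded above by $0$; concavity and monotonicity are preserved under the averaging $\int_L(\cdot)\,\rho(dr)$ because $\rho$ is a nonnegative measure, so $y\mapsto\Xi(x,y)$ is concave and increasing as well.

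The substantive step is the $\Cc^2$ regularity in $x$ on $\R_+^*$. The natural route is to use the representation $U(T,x,y,\lambda)=V^{-1}(T,x,y,\lambda)$ from Lemma \ref{lem:equivalence} together with the explicit formula \eqref{eq:explicit solution complete market +}--\eqref{eq:diffusion of P} from Corollary \ref{prop:solution in complete market}. By that corollary, for each fixed $\lambda$ the map $(x,p)\mapsto V(T,x,p,\lambda)$ is jointly $\Cc^{1,2}$ in $(t,x)$ and convex, strictly increasing and continuous in $p$; hence by the implicit function theorem its $p$-inverse $V^{-1}(T,x,y,\lambda)=U(T,x,y,\lambda)$ inherits $\Cc^{2}$ regularity in $x$ on the set where it is strictly negative, with the derivative $U_x$ expressible via $V_x$ and $V_p$. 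One then needs uniform (in $r\in L$) bounds on $U$, $U_x$ and $U_{xx}$ so that differentiation under the integral sign in \eqref{eq:face lifted terminal condition} is justified; these bounds follow from Assumption \ref{assumption:L} ($L$ bounded), Assumption \ref{assumption:mu and sigma}, the Lipschitz property of $g$ (Assumption \ref{assumption:g}), and the Lipschitz-in-$x$ bound on $C(\cdot,\cdot,\lambda)$ with $r$-independent constant recalled after Definition \ref{def:black-scholes price}. With dominated convergence / differentiation under the integral we conclude $\Xi(\cdot,y)\in\Cc^2(\R_+^*)$ with $\Xi_{xx}(x,y)=\int_L U_{xx}(T,x,y,r)\,\rho(dr)$.

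The main obstacle I anticipate is the behaviour near the boundary of the region where $U=0$: there $\Psi^{-1}$ and hence $V$ degenerates in $p$ (recall $\Psi$ is not invertible for $y\ge g(x)$), so the implicit function theorem does not directly give smoothness of $V^{-1}$ across the threshold $U=0$. This is handled by the same convexity/face-lifting reasoning used elsewhere in Section \ref{sec:complete}: on the set $\{U<0\}$ one has genuine $\Cc^2$ regularity, and on the set $\{U=0\}$ (equivalently $y\ge C(T,x,r)$ for $\rho$-a.e.\ $r$, up to the penalty term) the function $\Xi$ is locally constant in $x$ after a harmless face-lift, so the two pieces glue into a $\Cc^2$ function; the uniform Lipschitz constant of $C$ in $x$ is exactly what makes this gluing work uniformly in $r$.
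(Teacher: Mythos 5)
Your proposal follows essentially the same route as the paper, whose entire proof is the one-line citation ``direct consequence of Lemma \ref{lem:equivalence} and Corollary \ref{prop:solution in complete market}'': you transfer sign, concavity and monotonicity in $y$ from $U$ through the $\rho$-average, and obtain the $x$-regularity from the explicit/implicit representation $U=V^{-1}$, which is exactly the intended argument. The boundary subtlety you flag at $\{U=0\}$ (where $V$ degenerates in $p$, so the implicit function theorem and hence $\Cc^2$-gluing are delicate) is a genuine issue, but it is one the paper itself silently glosses over, so your treatment is if anything more careful than the original.
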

This is a direct consequence of Lemma \ref{lem:equivalence} and corollary~\ref{prop:solution in complete market}.
This property ensures that parts of Assumption \ref{assumption:Psi} hold
in the new following problem, in order to apply Theorem \ref{th 2.1} of Appendix.
In particular, Lemma \ref{lem:properties of Xi} allows to define
a terminal condition with the generalized inverse $\Xi^{-1}$.
\begin{Definition}
 \label{def:new problem}
 For $(t,x,p)\in \Sb_1$, we define
 \begin{equation}
\label{eq:new stochastic target problem}
\bar V(t,x,p):= \inf \brace{y\ge -\kappa ~:~ \Esp{\Xi(X^{t,x}_{T}, Y^{t,x,y,\nu}_{T})} \ge p 
\text{ for }\nu \in \Uc_{t,y} } \; .
\end{equation}
\end{Definition}
We now prove 
that 
this new problem coincides
with the one of Definition \ref{def:stochastic control problem} on $S_1$.


\begin{Proposition}
\label{prop:equivalence T}
Let Assumptions \ref{assumption:mu and sigma} and \ref{assumption:Psi}
hold.
Then
$\bar V(t,x,p) = V(t,x,p)$ on $\Sb_1$.
\end{Proposition}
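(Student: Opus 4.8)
The plan is to prove the two inequalities $\bar V(t,x,p)\le V(t,x,p)$ and $\bar V(t,x,p)\ge V(t,x,p)$ separately on $\Sb_1$, using the connection between $U$, $V$ and the faced-lifted function $\Xi$ together with a splicing/concatenation argument for admissible strategies across the time $T$. The key structural fact is the tower property of conditional expectation applied at time $T$, combined with the independence of $\Lambda$ from $W$: for any $\nu\in\Uc_{t,y}$ and any continuation strategy after $T$,
\begin{equation*}
\Esp{\Psi\big(\Lambda X^{t,x}_{T^*},Y^{t,x,y,\nu}_{T^*}\big)}
=\Esp{\ \E\big[\Psi(\Lambda X^{t,x}_{T^*},Y^{t,x,y,\nu}_{T^*})\mid \Fc_T\vee\sigma(\Lambda)\big]\ }\, ,
\end{equation*}
and the inner conditional expectation, optimized over continuation strategies, is exactly $U(T,X^{t,x}_T,Y^{t,x,y,\nu}_T,\Lambda)$ by Definition \ref{def:value function u} (which makes sense because, conditionally on $\Fc_T$ and on $\{\Lambda=\lambda\}$, the problem on $[T,T^*]$ is the complete-market problem of Section \ref{sec:complete}). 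Averaging the known value $\Lambda$ against its law $\rho$, which is legitimate by independence, yields $\Esp{\Xi(X^{t,x}_T,Y^{t,x,y,\nu}_T)}$ as the best attainable value of $\Esp{\Psi(\Lambda X^{t,x}_{T^*},Y^{t,x,y,\nu}_{T^*})}$ over all admissible continuations of a fixed $\nu$ on $[0,T]$.

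For the inequality $V\le\bar V$: fix $y>\bar V(t,x,p)$ and $\varepsilon>0$, and take $\nu\in\Uc_{t,y}$ with $\Esp{\Xi(X^{t,x}_T,Y^{t,x,y,\nu}_T)}\ge p$. Since $\Xi(x',y')=\int_L U(T,x',y',r)\rho(dr)$ and, by Lemma \ref{lem:equivalence}, $U(T,x',y',\lambda)=V^{-1}(T,x',y',\lambda)$, I would invoke the measurable selection of Lemma \ref{lem:selection} to produce, for the random data $(X^{t,x}_T,Y^{t,x,y,\nu}_T,\Lambda)$, a continuation strategy $\hat\nu\in\Uc_{T,Y^{t,x,y,\nu}_T}$ attaining $\Esp{\Psi(\Lambda X^{T,\cdot}_{T^*},\cdot)\mid\Fc_T\vee\sigma(\Lambda)}\ge U(T,X^{t,x}_T,Y^{t,x,y,\nu}_T,\Lambda)$ almost surely. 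Concatenating $\nu$ on $[t,T)$ with $\hat\nu$ on $[T,T^*]$ gives an admissible strategy on $[t,T^*]$ (admissibility: the credit line $-\kappa$ is respected on both pieces, square-integrability is inherited), and taking expectations and using the tower property gives $\Esp{\Psi(\Lambda X^{t,x}_{T^*},Y_{T^*})}\ge\Esp{\Xi(X^{t,x}_T,Y^{t,x,y,\nu}_T)}\ge p$. Hence $y\ge V(t,x,p)$; letting $y\downarrow\bar V(t,x,p)$ gives the claim. The careful point here is the measurability of the selected continuation strategy in $\omega$ through $(X^{t,x}_T,Y^{t,x,y,\nu}_T,\Lambda)$, which is precisely what Lemma \ref{lem:selection} is stated to provide, and a small $\varepsilon$-argument may be needed if the selection is only $\varepsilon$-optimal.

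For the reverse inequality $\bar V\le V$: fix $y>V(t,x,p)$ and $\nu\in\Uc_{t,y}$ with $\Esp{\Psi(\Lambda X^{t,x}_{T^*},Y^{t,x,y,\nu}_{T^*})}\ge p$. Restricting $\nu$ to $[t,T]$ and viewing its restriction to $[T,T^*]$ as a (random) continuation strategy, the conditional-expectation/tower argument gives
\begin{equation*}
p\le \Esp{\Psi\big(\Lambda X^{t,x}_{T^*},Y^{t,x,y,\nu}_{T^*}\big)}
\le \Esp{\ \int_L U\big(T,X^{t,x}_T,Y^{t,x,y,\nu}_T,r\big)\,\rho(dr)\ }
= \Esp{\Xi\big(X^{t,x}_T,Y^{t,x,y,\nu}_T\big)}\, ,
\end{equation*}
where the middle inequality uses that $U$ is, by Definition \ref{def:value function u}, the supremum over continuation strategies of the conditional expected value, together with the independence of $\Lambda$ and $W$ to pull out the $\rho$-integration. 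Thus the restriction of $\nu$ to $[t,T]$ witnesses $y\ge\bar V(t,x,p)$, and letting $y\downarrow V(t,x,p)$ finishes. The main obstacle I anticipate is making the heuristic ``optimize the tail conditionally on $\Fc_T\vee\sigma(\Lambda)$'' fully rigorous — namely the joint-measurability and admissibility of the spliced strategy across $T$, and the interchange of the supremum over continuation strategies with the $\Fc_T\vee\sigma(\Lambda)$-conditional expectation and with the $\rho$-average; everything else (monotonicity of $\Psi$ in $y$, the $U=V^{-1}$ identity, convexity/regularity of $\Xi$) is quoted directly from Lemmas \ref{lem:equivalence}, \ref{lem:selection}, \ref{lem:properties of Xi} and Corollary \ref{prop:solution in complete market}.
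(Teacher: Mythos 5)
Your proposal is correct and follows essentially the same route as the paper's proof: both inequalities are obtained by conditioning at time $T$, bounding the conditional continuation value by $U$ (Definition \ref{def:value function u}) for $V\ge\bar V$, and concatenating the pre-$T$ strategy with the measurable selection of Lemma \ref{lem:selection} for $\bar V\ge V$, with the $\rho$-average producing $\Xi$ exactly as in \eqref{eq:flow property}--\eqref{eq:inequality face}. The measurability concerns you flag are precisely what the paper delegates to Lemma \ref{lem:selection} and the canonical construction of $\F$ (no $\varepsilon$-optimal selection is needed since the complete-market optimum after $T$ is attained).
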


\begin{proof}
{\bf 1.}
Fix $(t,x,p)\in \Sb_1$ and
take $y> V(t,x,p)$.
Then by definition there exists
$\nu\in \Uc_{t,y}$ such that
$\Esp{\Psi\left(X^{t,x}_{T^*}, Y^{t,x,y,\nu}_{T^*}\right)}\ge p$.
Since $t<T$, the control can be written
$\nu_t  = \nu_t \ind{t\in [0,T)} + \nu_t(\Lambda)\ind{t\in [T, T^*]}$
where 
$\nu_t(.)$ follows from the canonical construction of $\F$:
it is a measurable map from $L$ to
the set of
square integrable control processes on $[T, T^*]$ which are
adapted to the Brownian filtration $\sigma(W_s, \ T \le s \le .)$.
From the flow property
of Markov processes $X^{t,x}$ and $Y^{t,x,y,\nu}$ (see \cite{soner2002dynamic})
and the tower property of expectation,
\begin{equation}
 \label{eq:flow property}
 \begin{array}{ll}
 \Esp{\Psi\left(X^{t,x}_{T^*}, Y^{t,x,y,\nu}_{T^*}\right)} = \\
 \displaystyle \Esp{ \int_L  
 \E \left[ \Psi\left(X^{T,X^{t,x}_T}_{T^*}, Y^{T,X^{t,x}_{T},Y^{t,x,y,\nu}_T,\nu(\lambda)}_{T^*}\right)
 \Big| \left(X^{t,x}_T,Y^{t,x,y,\nu}_T, \lambda\right) \right] \rho(d\lambda)  } \;.
 \end{array} 
\end{equation}
By taking the supremum over all possible maps $\nu(\lambda)$ and 
by Definition \ref{def:value function u}, 
we obtain
\begin{equation}
 \label{eq:inequality face}
 p\leq \E \left[ \Psi\left(X^{T,X^{t,x}_T}_{T^*}, Y^{T,Y^{t,x,y,\nu}_T,\nu(\lambda)}_{T^*}\right)
 \Big| \left(X^{t,x}_T,Y^{t,x,y,\nu}_T, \lambda\right) \right]\le U\left(T, X^{t,x}_T,Y^{t,x,y,\nu}_T, \lambda\right)\; .
\end{equation}
By integrating in $\lambda$ over $L$,
and then take the expectation,
we immediately get that $y\ge \bar V(t,x,p)$.
By arbitrariness of $y$, $V(t,x,p)\ge \bar V(t,x,p)$.

{\bf 2.}
Take $y> \bar V(t,x,p)$.
There exists a
control $\nu\in \Uc_{t,y}$ on $[t, T]$
such that
$$
\Esp{\int_L U\left(T, X^{t,x}_{T}, Y^{t,x,y,\nu}_{T}, \lambda\right)\rho(d\lambda)}\ge p \; .
$$
Now 
Lemma \ref{lem:selection}
allows to assert the existence of a
control $\nu^*(\lambda)$ on $[T, T^*]$ such that 
for any $\lambda\in L$
\begin{equation*}
 \Esp{\Psi\left(X^{T, X^{t,x}_T}_{T^*}, Y^{T, Y^{t,x,y,\nu}_T, \nu^*(\lambda)}_{T^*} \right) } 
 = U\left(T, X^{t,x}_{T}, Y^{t,x,y,\nu}_{T}, \lambda\right) \; .
\end{equation*}
By chosing the new admissible control $\nu'\in \Uc_{t,y}$
defined by the concatenation
$\nu'_t  = \nu_t \ind{t\in [0,T)} + \nu^*_t(\Lambda)\ind{t\in [T, T^*]}$,
we obtain $y\ge V(t,x,p)$ for an arbitrary $y> \bar V(t,x,p)$.
We thus have equality of $V$ and $\bar V$ on $\Sb_1$.
\end{proof}
In the present context when
the law of $\Lambda$ is known,
the face-lifting procedure of \eqref{eq:face lifted terminal condition}
allows to retrieve a stochastic
target problem in expectation of \cite{bouchard2009stochastic}
on the interval $[0,T]$.
However,
it is improbable that 
the terminal condition \eqref{eq:face lifted terminal condition}
is explicit for non-trivial models.

\subsection{Variation to a robust approach}
\label{sec:robust}

Due to the lack of data,
or some non stability of the problem,
it might be interesting to adopt an approach
where the agent wants to control the expected level of loss
without assuming the law $\rho$ on $L$.
Under Assumption \ref{assumption:L},
the robust approach is easy to undertake and
is the following.
Since the law of $\Lambda$ is not known,
we have to consider the worst case scenario, i.e.,
the supremum of \eqref{eq:face lifted terminal condition}
over a set of probability measures on $L$.
\begin{Definition}
 \label{def:robust approach}
 Let $\Pc(L)$ be the set of all probability measures
over $L$ including singular measures. Then
we define for any $(x,y)\in \R_+^*\x[-\kappa, \infty)$ the function
\begin{equation}
 \label{eq:optimal over singular}
 \xi(x,y) := \sup\limits_{\rho\in \Pc(L)} \Xi(x,y) \;.
\end{equation}
\end{Definition}
It is straightforward that
by considering singular measures in $\Pc(L)$
we shall have
\begin{equation}
 \label{eq:robust terminal condition xi}
 \xi (x,y) = \max\limits_{\lambda \in L} U(T,x,y,\lambda)\;,
\end{equation}
which is finite for all $(x,y)\in \R_+^*\x[-\kappa, \infty)$
following Lemma \ref{lem:equivalence}.
The reasoning of the previous section
can be applied here with the new intermediary condition
$\Esp{\xi(X^{t,x}_{T}, Y^{t,x,y,\nu}_{T})}\ge p$.
We thus introduce the other stochastic target problem
with controlled loss if $t<T$:
\begin{equation}
 \label{eq:robust stochastic target}
 \bar V^R(t,x,p):= \inf \brace{y>-\kappa ~:~ \Esp{\xi(X^{t,x}_{T}, Y^{t,x,y,\nu}_{T})} \ge p 
\text{ for }\nu \in \Uc_{t,y}} \; .
\end{equation}
The monotonicity of $U$ with respect to
$\lambda$,
provided by some additional assumption on $g$,
leads to a direct
solution to Eq. \eqref{eq:robust terminal condition xi}.
In general, the resolution of problem \eqref{eq:robust stochastic target}
is similar to \eqref{eq:new stochastic target problem}
with a terminal condition that is more tractable.
In the sequel, we thus focus on problem \eqref{eq:new stochastic target problem}.

\subsection{From the control problem to an expectation formulation}
\label{sec:expectation}

In this section, 
we want to emphasize formally the link between 
the solution of the nonlinear PDE~\eqref{eq:PDEv de preuve} 
(in the proof of Proposition \ref{prop:solution dual}) 
and  a simple conditional expectation, 
in  \textit{sufficiently regular settings}. 
This idea will be used to propose 
a numerical scheme to approximate the solution of 
our partial hedging problem. 
Assume fo starting that the nonlinear PDE~\eqref{eq:PDEv de preuve} 
has a classical solution $\bar V$ on  $\Sb_1$ 
such that it also verifies
\begin{equation}
\label{eq:HJB system}
\left \{
\begin{array}{l}
\bar V_t +\dfrac{\sigma^2(t,x)}{2}\bar V_{xx}
+ a^\ast p\left( \sigma(t,x)\bar V_{xp}
- \theta(t,x) \bar V_{p}\right)
+\dfrac{1}{2} (a^{\ast} p)^2\bar V_{pp}
= 0\ , \ \text{for } t<T\\
\bar V(T,x,p)=  \Xi^{-1}(x,p)	\ ,
\end{array}
\right.
\end{equation}
where we recall the specific form of the control \eqref{eq:meilleur controle}:
\begin{equation}
\label{eq:solution of a}
a^\ast :=\left(p\bar V_{pp}\right)^{-1} 
\left( \theta(t,x)\bar V_p-\sigma(t,x)\bar V_{xp}\right)\; ,
\end{equation}
and the terminal condition is
\begin{equation}
\label{eq:inverse of terminal condition}
\Xi^{-1}(x,p):= \inf\brace{y\ge -\kappa ~:~ \Xi(x,y)\ge p } \; .
\end{equation}
In particular if
$a^*$ in \eqref{eq:solution of a}
is well defined (by strict convexity of $\bar V$ in $p$)
and corresponds to the optimal value in \eqref{eq:PDEv de preuve},
then fomulations \eqref{eq:PDEv de preuve} and \eqref{eq:HJB system}
are equivalent.
Now let us assume that the function 
$\hat{V}$  such that 
\begin{equation}
\label{eq:expectation formula}
\hat{V}(t,x,p)=\E^{\Q}[\Xi^{-1}(X^{t,x}_{T},P^{t,p,\alpha^*}_{T})]\;,\quad 0\le t\le T \;,
\end{equation}
is well defined, 
with dynamics under $\Q$ given by
\begin{equation}
 \label{eq:FK dynamics}
 X^{t,x}_s = x+ \int_t^s \sigma(u, X^{t,x}_u) dW^{\Q}_u 
 \ ;  \ 
 P_s^{t,p,\alpha^*} = p + \int_t^s P^{t,p,\alpha^*}_{u}\alpha^\ast_u \big (dW^{\Q}_u-\theta(u,X^{t,x}_u)\big ) du
\end{equation}
and where the feedback control $\alpha^*$ is defined for $s\in [t,T]$ by
\begin{equation}
\label{eq:optimal control alpha}
\alpha^\ast_s = 
\left(P^{t,p,\alpha^*}_s \bar V_{ pp}\right)^{-1}  
\left( \theta(s,X^{t,x}_s)\bar V_{ p} -\sigma(s, X^{t,x}_s)\bar V_{ xp}
\right)\left(s, X^{t,x}_s, P^{t,p,\alpha^*}_s\right)  \; .
\end{equation}
Assume moreover that $\hat V$ is sufficiently regular to be a classical solution to the related linear Feynman-Kac PDE
\begin{equation}
\label{eq:FK}
\left \{
\begin{array}{l}
 \varphi_t +\dfrac{\sigma^2(t,x)}{2} \varphi_{xx}
+ a^\ast p\left( \sigma(t,x)\varphi_{xp}
- \theta(t,x) \varphi _{p}\right)
+\dfrac{1}{2} (a^{\ast} p)^2\varphi _{pp}
= 0\ , \ \text{for } t<T\\
\varphi(T,x,p)=  \Xi^{-1}(x,p)	\ ,
\end{array}
\right.
\end{equation}
where we recall the specific form of the control \eqref{eq:solution of a}
given as a function of $\bar V$ and its derivatives. 
Now observe that $\bar V$ is also a classical solution to this linear PDE. 
Hence, 
if the classical solution to~\eqref{eq:FK} is unique 
then we can conclude that $\hat V=\bar V$. 
In the following section, 
we propose a numerical scheme to solve~\eqref{eq:HJB system} which relies, 
in some sense on this relation between the non linear PDE ~\eqref{eq:HJB system} 
and the conditional expectation \eqref{eq:expectation formula}
with dynamic \eqref{eq:FK dynamics}-\eqref{eq:optimal control alpha}.  

\section{Numerical approximation scheme}
\label{sec:numerics}

In the present section,
we propose a numerical algorithm
dedicated to the specific problem
\eqref{eq:expectation formula}-\eqref{eq:FK dynamics}-\eqref{eq:optimal control alpha}.
The problem is specific for two reasons.
First, the terminal condition at $T$
is given by $\Xi^{-1}$ wich
may be unexplicit and has to be numerically studied.
Second, the optimal control
is explicitely given by \eqref{eq:optimal control alpha}
so that the non-linear operator
can be replaced by a proper expectation approximation. 
At this stage, the algorithm is presented as the result of 
a series of standard approximations. 
However, we do not provide any analysis of 
the approximation error induced by this algorithm 
so that it can only be considered as an heuristic. 
Nevertheless, some numerical simulations are provided 
in the next section and emphasize 
the practical interest of such numerical scheme. 

\subsection{Specification and hedging in complete market}
\label{sec:specification}

In this section,
we will retrieve all the regularity assumptions
by specifying the model.
The question of the generalization
of the presented procedure is
a matter that is not treated in this paper.
We assume that $X$ is described by a geometrical
Brownian motion:
\begin{equation}
\label{eq:para}
\mu (t, x) =\mu x\ ,\quad \textrm{and}\quad \sigma (t,x)=\sigma x\ ,
\end{equation}
with $(\mu, \sigma)\in \R \x \R_+^*$.
The loss function is given as in Example \ref{ex:lower partial moment},
with $k>1$.
For the partial lower moment function
$\ell(x) = x^k \ind{x\ge 0} /k$ for $k> 1$,
Corollary \ref{prop:solution in complete market}
has an explicit solution, given for $t\ge T$ by
\begin{align}
\label{eq:solution complete partial moment}
 V(t,x,p, \lambda) &= C(t, x, \lambda) - 
(-kp)^{1/k} \E^\Q \left[\exp\brace{\frac{1}{2(k-1)} \int_t^{T^{\ast}} |\theta(u,X^{t,x}_u)|^2 du }\right] \notag \\
 &= C(t, x, \lambda) - 
(-kp)^{1/k} \exp\brace{\frac{\theta^2}{2(k-1)}(T^*-t)}\;,\ \textrm{where}\ \theta=\frac{\mu}{\sigma}\ ,
\end{align}
where in this precise case,
$C(t,x,\lambda)$ is given by the Black-Scholes price of 
the option with payoff
$x\mapsto g(\lambda x)$, as in Definition \ref{def:black-scholes price}. 
Following section \ref{sec:complete}, $(t,x)\mapsto C(t,x,\lambda)\in \Cc^{1,2}([T, T^*)\x \R_+^*)$ for any $\lambda\in L$,
so that according to \eqref{eq:solution complete partial moment}
all the required partial derivatives of
$V$ exist.
Note also that $V$ is strictly convex in $p$ since $k>1$.

Consequently,
we can explicit the strategy to 
hedge the expected loss constraint.
If $a^*$ is given by \eqref{eq:solution of a},
then 
\begin{equation}
 \label{eq: optimal strategy}
 \nu^* (t,x,p)= \left(V_x + \frac{a^* p V_p}{x\sigma}\right) (t,x,p)\; .
\end{equation}
All the required derivatives are given by
$$
\left\{
\begin{array}{lll}
V_t (t,x,p,\lambda) \hspace{-0.2cm}
&=& C_t (t,x, \lambda) 
+ \frac{\theta^2}{2(k-1)}\exp\brace{\frac{\theta^2}{2(k-1)} (T-t) }\\ 
V_x (t,x,p, \lambda) \hspace{-0.2cm}
&=& C_x (t,x, \lambda)\\ 
V_{xx}(t,x,p, \lambda) \hspace{-0.2cm}
&=& C_{xx}(t,x, \lambda)\\ 
V_p (t,x,p, \lambda) \hspace{-0.2cm}
&=& \exp\brace{\frac{1-k}{k}\log (-kp) + \frac{\theta^2}{2(k-1)} (T-t) } \; .
\end{array} 
\right.
$$
As anticipated in Remark \ref{rem:partial hedging},
the strategy consists in hedging the claim $g(\lambda X^{t,x}_{T^*})$
plus a correcting term corresponding in 
hedging the constraint $P^{t,p,\alpha^*}_{T^*}$.

\subsection{The intermediary target}
\label{sec:terminal condition}

In order to initiate the numerical procedure for $0\le t\le T$,
we need to compute the intermediary condition
and its partial derivatives intervening in \eqref{eq:HJB system}.
According to \eqref{eq:equivalence result} and \eqref{eq:solution complete partial moment},
\begin{equation}
 \label{eq:solution inverse partial moment}
 U(t,x,y,\lambda)=\frac{-1}{k} (C(t,x,\lambda) - y)^k 
\exp\brace{-\frac{k \theta^2}{2(k-1)}(T^*-t) } \ind{C(t,x,\lambda) \ge y} \;,
\end{equation}
which provides the value of $\Xi(x,y)$ by integration according to
the law $\rho$ of $\Lambda$:
\begin{equation}
 \label{eq:xi numerical}
 \Xi(x,y) = \frac{-1}{k} \exp\brace{-\frac{k\theta^2}{2(k-1)}(T^* - T)} 
 \int_L (C(T,x, \lambda)-y)^k \ind{C(T,x,\lambda) \ge y} \rho(d\lambda) \; .
\end{equation}
A numerical computation of the integral in \eqref{eq:xi numerical}
can be proceeded via numerical integration
or Monte-Carlo expectation w.r.t. the law $\rho$. 
This in turn allows to obtain the desired function
$\Xi^{-1}$, since the latter is monotonous in $p$.

For 
fixed $(x,p)\in \R_+^*\x \R_-^*$,
define 
$$
M:=\brace{\lambda \in L ~:~ C(T,x,\lambda)-\Xi^{-1}(x,p)\ge 0 }\; .$$
Let us introduce four real-valued functions
$(f_{k-1}, \tilde f_{k-1}, f_{k-2}, \tilde f_{k-2})$ of
$(x,p)\in \R_+^*\x \R_-^*$ defined by
\begin{equation}
\label{eq:j}
\left \{
\begin{array}{lll}
f_n(x,p)&=&\int_{M} \big(C(T,x,\lambda)-\Xi^{-1}(x,p)\big )^{n}\rho(d\lambda)\\
\tilde f_n(x,p)&=& \int_{M} \lambda C_x (T,x,\lambda)
\big (C (T,x,\lambda)-\Xi^{-1}(x,p)\big )^{n}\rho(d\lambda)\; ,
\end{array}
\right .
\end{equation}
for $n=k-1,k-2$, recalling that $k$ denotes the exponent parameter determining the loss function~\eqref{eq:l}. 
Then by a straightforward calculus we derive the partial derivative of $\Xi^{-1}$ as follows 
\begin{equation}
 \label{eq:terminal derivatives}
 \left\{
\begin{array}{lcl}
\Xi^{-1}_x (x,p)		&=& \frac{\tilde f_{k-1}}{f_{k-1}}(x,p)\\
\Xi^{-1}_p (x,p)		&=& \exp \brace{\frac{\theta^2 k}{2(k-1)}(T^*-T)}   \dfrac{1}{f_{k-1}}(x,p)\\
\Xi^{-1}_{pp}(x,p)		&=& \left[(k-1) \frac{f_{k-2}}{f_{k-1}}\left(\Xi^{-1}_p \right)^2 \right](x,p)\\
\Xi^{-1}_{xp} (x,p)		&=& \left[(k-1)\Xi^{-1}_p \frac{\big(\tilde f_{k-1}f_{k-2}-\tilde f_{k-2}f_{k-1}\big)}
{\big (f_{k-1}\big )^2}
\right](x,p)\ .
\end{array}
\right.
\end{equation}
The functions $(f_{k-1}, \tilde f_{k-1}, f_{k-2}, \tilde f_{k-2})$
can be computed numerically with the same methods
as $\Xi^{-1}$.
Having these derivatives, it is thus possible
to obtain the values of controls at time $T$, 
$(\nu^*(T,x,p), a^*(T,x,p))$, given by \eqref{eq: optimal strategy} and \eqref{eq:solution of a}.

\subsection{Discrete time approximation and regression scheme}
\label{sec:scheme}

The approximation scheme that is proposed here is first based on 
a time discretization of the forward-backward dynamic determined by 
the system~\eqref{eq:expectation formula}-\eqref{eq:FK dynamics}-\eqref{eq:optimal control alpha}.

\subsubsection{Time discretization}
Let us define a deterministic time grid
$\pi:=\{0=t_0<\ldots<t_N := T\}$ with regular mesh
$|t_{i+1}-t_i|= T/N=: \Delta t$. We consider in this paragraph a dicrete time approximation of the process $(X^{0,x_0}, P^{0,p_0,\alpha^*})$ solution of~\eqref{eq:FK dynamics} with initial condition at time $0$, $(X^{0,x_0}_0, P^{0,p_0,\alpha^*}_0)=(x_0,p_0)$. 
The process $X^{0,x_0}$ possesses an 
exact discretization at times $(t_i)_{i=0..N}$
which is denoted $(X_i)_{i=0..N}$,
with increments of the Brownian motion
given by
\begin{equation}
\label{eq:epsilon}
W_{t_{i+1}} - W_{t_i}:=\sqrt{\Delta t }\epsilon_i\ ,
\end{equation}
 $(\epsilon_i)_{i=0,\cdots ,N-1}$ being a
sequence of i.i.d. centered and standard Gaussian random variables.
We introduce the sequence of random variables $(P_i^{a^*})_{i=0\cdots N}$ obtained by taking the exponential of the  Euler approximation of $\log (P^{0,p_0,\alpha^*})$  on the mesh $\pi$. 
Then, we can approximate the solution of~\eqref{eq:FK dynamics}, 
at the mesh instants $\pi$ by the Markov chain $(X_i,P^{a^*}_i)_{i=0,\cdots ,N}$ 
satisfying the following dynamic for $i=0, \ldots, N-1$:
\begin{equation}
\left \{\begin{array}{l}
\label{eq:XP}
 X_{i+1} = X_i \exp\brace{\sigma \sqrt{\Delta t}\epsilon_i - (\sigma^2 \Delta t)/2 }\\
 P^{a^*}_{i+1} = P^{a^*}_{i}\exp \brace{-a^*_i(X_i, P^{a^*}_i)\Big (\big (\theta+\frac{1}{2}a^*_i(X_i, P^{a^*}_i)\big )\Delta t +  \sqrt{\Delta t}\epsilon_i\Big ) }

 \end{array}
 \right .
\end{equation}
with the initial condition, 
$X_0=x_0$ and $P^{a^*}_0=p_0$ and where 
at each time step $i$,  $a^*_i$ is actually 
the function given by \eqref{eq:solution of a} at time $t_i$ 
\begin{equation}
 \label{eq:discrete optimal control}
 a_i^* (x,p): = a^*(t_i,x,p)=\frac{\theta p \bar V_p(t_i,x,p) - \sigma x p \bar V_{xp}(t_i,x,p)}{pp \bar V_{pp}(t_i,x,p)} 
\end{equation}
In the sequel, we will denote $X_{i+1}^{i,x}$ and $P_{i+1}^{i,x,p,a_i}$ 
the random variables satisfying equation \eqref{eq:XP} 
with $X_i=x$, $P^{a^*}_i=p$ and the function $a^*_i=a_i$. 

\subsubsection{Piecewise constant approximation of $a_i^*$ and tangent process formula}
Assume that at the dicrete time $t_{i}$, 
for $i\in \{0,\cdots ,N-1\}$, 
a piecewise constant approximation of $a_{i}^*$ is available, 
such that for any positive reals $x$ and $p$, 
we have the approximation $\hat{a}_i$ defined as follows 
\begin{equation}
\label{eq:approxa}
\hat a_{i}(x,p)=\sum_{r=1}^R a_{i,r}\mathbf{1}_{C_{i,r}}(x,p)\ ,
\end{equation}
where  $(C_{i,r})_{r=1,\cdots R}$ is a partition of 
$\R_+^*\x \R_-^*$ and $(a_{i,r})_{r=1,\cdots R}$ 
is a sequence of reals. 
By the expectation formula \eqref{eq:expectation formula}, 
we obtain that the solution of problem \eqref{eq:new stochastic target problem} 
and equivalently \eqref{eq: stochastic target} satisfies 
the following backward dynamic for $i\in \{0,\cdots,N-1\}$ 
$$
\hat V(t_i,x,p)=\E[\hat V(t_{i+1},X^{t_i,x}_{t_{i+1}},P^{t_i,p,\alpha^*}_{t_{i+1}})]\quad 
\textrm{for}\ (x,p)\in \R_+^*\x \R_-^*\ .
$$
Then, 
one can approximate $\hat V$ at the discrete instants of the mesh $\pi$, 
by injecting in the above formula two approximations consisting in:
\begin{enumerate}
	\item replacing $(X^{t_i,x}_{t_{i+1}},P^{t_i,p,x,\alpha^*}_{t_{i+1}})$ 
	by the Markov chain approximation $(X^{i,x}_{i+1}, P^{i,p,x,a^*_i}_{i+1})$, 
	obtained by the Euler scheme~\eqref{eq:XP};
	\item replacing the function $a^*_i$ by the piecewise constant approximation $\hat a_i$~\eqref{eq:approxa};
\end{enumerate}
For $i\in \{0,\cdots, N-1\}$, 
we define $\hat V^i$ the resulting approximation of $\hat V(t_i,\cdot,\cdot)$ 
satisfying the following  backward approximation scheme
\begin{equation}
\label{eq:approxw}
{\hat V}^i(x,p)= \E[{\hat V}^{i+1}(X^{i,x}_{i+1}, P^{i,p,x,\hat a_i}_{i+1})]\ .
\end{equation}
Let us assume, at this stage, that ${\hat V}^{i+1}$ is 
a given approximation of ${\hat V}(t_{i+1},\cdot,\cdot)$, 
which is two times continuously differentiable w.r.t. both variables. 
Now recall that $\hat a_i$ is supposed to be constant 
on $C_{i,r}$, for any $r\in \{1,\cdots, R\}$.  
Then, for any $(x,p)\in Int(C_{i,r})$,  $(X^{i,x}_{i+1},P^{i,x,p,\hat a_i}_{i+1})$ 
follows a log-normal distribution~\eqref{eq:XP} and 
we can apply tangent process approach \cite{broadie1996estimating} on \eqref{eq:approxw} 
to obtain that $\hat V^i$ is two times continuously differentiable and 
a backward formula for the derivatives
\begin{equation}
\label{eq:approxDerive}
 \left\{
 \begin{array}{ll}
 \hat V^i_p (x, p) &= \frac{1}{p}\E \left[ P^{i,x,p,\hat a_i}_{i+1} \hat V^{i+1}_p(X^{i,x}_{i+1}, P^{i,x,p,\hat a_i}_{i+1}) \right]\\
 \hat V^i_{xp} (x, p) &= \frac{1}{xp}\E \left[X^{i,x}_{i+1} P^{i,x,p,\hat a_i}_{i+1}  \hat V^{i+1}_{xp}(X^{i,x}_{i+1}, P^{i,x,p,\hat a_i}_{i+1}) \right]\\
 \hat V^i_p (x, p) &= \frac{1}{pp}\E \left[(P^{i,x,p,\hat a_i}_{i+1})^2  \hat V^{i+1}_{pp}(X^{i,x}_{i+1}, P^{i,x,p,\hat a_i}_{i+1}) \right] \; .\\
 \end{array}
\right. 
\end{equation}

\subsubsection{Piecewise constant regression and fixed point algorithm}

Besides, recall that $a^*_i$ is defined 
as a function of $\bar V_p(t_i,\cdot,\cdot)$, 
$\bar V_{px}(t_i,\cdot,\cdot)$ and  $\bar V_{pp}(t_i,\cdot,\cdot)$ 
according to equation~\eqref{eq:discrete optimal control}. 
Similarly, we want to impose the same relation between 
$\hat a_i$ and $\hat V^i_p$, $\hat V^i_{px}$ and  $\hat V^i_{pp}$. 
For this purpose, let us define the map $f\mapsto T_i(f)$ 
such that for any real valued function $f$ defined on $\R_+^*\x \R_-^*$, 
\begin{equation}
 \label{eq:fixed point operator}
 T_i ( f)(x,p):=\frac{ \theta \E[P^{i,x,p, f}_{i+1}\hat V^{i+1}_p (X^{i,x}_{i+1}, P^{i,x,p, f}_{i+1})] 
 - \sigma \E [X^{i,x}_{i+1} P^{i,x,p, f}_{i+1}\hat V^{i+1}_{xp} (X^{i,x}_{i+1}, P^{i,x,p, f}_{i+1}) ] }
 {\E [(P^{i,x,p, f}_{i+1})^2 \hat V^{i+1}_{pp} (X^{i,x}_{i+1}, P^{i,x, p,  f}_{i+1})]}\ ,
\end{equation}
for all $(x,p)\in \R_+^*\x \R_-^*$. 
Notice that the map $T_i$, defined above, 
depends implicitly on the previous approximations
$\hat V^{i+1}_p$, $\hat V^{i+1}_{px}$ and $\hat V^{i+1}_{pp}$. 
Then $\hat a_i$ could be obtained as a piecewise constant 
approximation of a fixed point of $T_i$.  

One way to do this, 
is to approximate $T_i$ by $\hat T_i$, obtained in replacing  
the conditional expectation, in~\eqref{eq:fixed point operator}, 
by a regression operator, $\hat \E_i$, on a  set of regression functions 
which are  piecewise constant. Consider for instance the following set 
of regression functions $(\mathbf{1}_{C_{i,r}})_{r=1,\cdots , R}$. 
We introduce
\begin{equation}
 \label{eq:hatTi}
 \hat T_i ( f)(x,p):=\frac{ \theta \hat \E_i[P^{i,x,p, f}_{i+1}\hat V^{i+1}_p (X^{i,x}_{i+1}, P^{i,x,p, f}_{i+1})] 
 - \sigma \hat\E_i [X^{i,x}_{i+1} P^{i,x,p, f}_{i+1}\hat V^{i+1}_{xp} (X^{i,x}_{i+1}, P^{i,x,p, f}_{i+1}) ] }
 {\hat \E_i [(P^{i,x,p, f}_{i+1})^2 \hat V^{i+1}_{pp} (X^{i,x}_{i+1}, P^{i,x, p,  f}_{i+1})]}
 \end{equation}
so that  $\hat T_i ( f)$ is automatically piecewise constant on the partition $({C_{i,r}})_{r=1,\cdots , R}$. 

Adding up all these  approximations, we finally obtain, at each point, 
$t_i$, of the mesh grid, $\pi$,  an approximation
$(\widehat V^i,\widehat V^i_p,\widehat V^i_{px},\widehat V^i_{pp}, \hat a_i)$ of 
$$
\big (\bar V(t_i,\cdot,\cdot),\bar V_p(t_i,\cdot,\cdot), \bar V_{px}(t_i,\cdot,\cdot), \bar V_{pp}(t_i,\cdot,\cdot), a^*(t_i,\cdot,\cdot)\big)
$$ 
by the following algorithm applied with a fixed  tolerance parameter $\eps>0$:
\begin{description}
	\item[Initialization]
\begin{equation}
 \label{eq:algoInit}
 \left\{
 \begin{array}{ll}
 \widehat V^N (x, p) &= \Xi^{-1} (x, p)\\
 \widehat V^N_p (x, p) &= \Xi^{-1}_p (x, p)\\
 \widehat V^N_{px} (x, p) &= \Xi^{-1}_{px}(x, p)\\
 \widehat V^N_{pp} (x, p) &= \Xi^{-1}_{pp}(x, p)\\
 \hat a_N(x,p)&=\frac{\theta p \widehat V^N_p(x,p) - \sigma x p \widehat V^N_{xp}(x,p)}{pp \widehat V^N_{pp}(x,p)} \ .
 \end{array} \right. 
\end{equation}
\item[From step A($N-1$) to A($0$) :]
\end{description}
\begin{itemize}
	\item [A($i$) : ]\hspace{0.2cm} \texttt{ SET } $a:=\hat a_{i+1}$;	\texttt{ GOTO} B($i,a$);
	\item [B($i,a$):]\hspace{0.2cm} 
	\begin{enumerate}
	\item \texttt{SET }
	$
	a':=\hat{T}_{i}(a)\quad (\textrm{recall that}\  \hat T_i \ 
	\textrm{depends on $\widehat V^{i+1}_p$, $\widehat V^{i+1}_{px}$ and $\widehat V^{i+1}_{pp}$})
	$;
	\item \texttt{IF} $\vert a'-a\vert \leq \eps$ 
	\begin{itemize}
	\item \texttt{THEN SET}  
$$
 \left\{
 \begin{array}{ll}
  \hat a_i&=a'\\
\widehat V^i(x,p)&= \hat \E_i[\widehat V^{i+1}(X^{i,x}_{i+1}, P^{i,p,x,\hat a_i}_{i+1})]\\
\widehat V^i_p (x, p) &= \frac{1}{p}\hat \E_i \left[ P^{i,x,p,\hat a_i}_{i+1} \widehat V^{i+1}_p(X^{i,x}_{i+1}, P^{i,x,p,\hat a_i}_{i+1}) \right]\\
\widehat V^i_{xp} (x, p) &= \frac{1}{xp}\hat \E_i \left[X^{i,x}_{i+1} P^{i,x,p,\hat a_i}_{i+1}  \widehat V^{i+1}_{xp}(X^{i,x}_{i+1}, P^{i,x,p,\hat a_i}_{i+1}) \right]\\
\widehat V^i_p (x, p) &= \frac{1}{pp}\hat \E_i \left[(P^{i,x,p,\hat a_i}_{i+1})^2  \widehat V^{i+1}_{pp}(X^{i,x}_{i+1}, P^{i,x,p,\hat a_i}_{i+1}) \right] \; .
\end{array} \right. 
$$
	\begin{itemize}
		\item \texttt{IF} $i=0$ \texttt{THEN STOP}; 
		\item \texttt{ELSE GOTO} A($i-1$); 
	\end{itemize}
	\item	\texttt{ELSE GOTO} B($i, a'$);
	\end{itemize}
	\end{enumerate}
\end{itemize}
Notice that limiting the previous algorithm to one fixed point iteration (in $B(i,a)$) 
reduces to an explicit scheme for which  $\hat a_i$ is given as a function of 
the derivatives at the next time step $t_{i+1}$, 
$(\widehat V^{i+1},\widehat V^{i+1}_p,\widehat V^{i+1}_{px},\widehat V^{i+1}_{pp})$ 
and the  control $\hat a_{i+1}$. 
However, 
in practice at most three iterations are sufficient 
to obtain reasonable convergence to the fixed point. 
In theory, the contraction of $\hat T_i$ should be proved for 
a sufficiently small time step $\Delta t$. 
But this is left for future works.
If we don't proceed to a convergence analysis of
the scheme with $N$, for a general diffusion
and general loss function, we can still provide
a numerical confirmation of the relevance of the method. 
To validate the algorithm we proceed in Section \ref{sec:validation}
to a comparison between the explicit formula of Corollary \ref{prop:solution in complete market}
and the value provided by the algorithm.

\section{Numerical tests}
\label{sec:validation}

The present section is devoted to tests 
on real and simulated data meant to illustrate 
the interest of the partial hedging strategy developped in 
this article and to validate the numerical scheme 
introduced in the previous section. 
We proceed into four steps. 
First, we  fit the parameters of the exponential model \eqref{eq:price dynamics} on real data.
Then, we point out the importance of the risk induced by the random shaping factor, 
by  evaluating the hedging error implied, on real data, 
by the naive Black-Scholes hedging strategy based on a prediction of the shaping factor 
(without taking into account its randomness). 
This naive hedging approach will constitute our benchmark. 
To validate  the numerical approximation scheme introduced in 
Section \ref{sec:numerics}, 
we analyse its performance on the explicit case of Section~\ref{sec:complete}.
We finally compare the partial hedging procedure, on simulations, 
to the benchmark that shall be introduced right away.

\subsection{Black-Scholes benchmark}
\label{sec:BS}

We consider the following Black-Scholes strategy for a naive agent.
The naive agent assumes that the set $L$ reduces
to a singleton $\{\lambda_0\}$.
This belief is accepted for example as a raw
approximation of the expected value of $\Lambda$.
In this situation,
the previous setting reduces to the case of Section \ref{sec:complete}.
However, since the market is complete,
the naive agent desires to put in place a
complete hedging strategy allowed by the Black-Scholes
framework.
The naive benchmark is thus given by
\begin{enumerate}
 \item an initial value provided by the Black-Scholes price
 of the contingent claim $g(\lambda_0 X_{T^*}^{t,x})$, given by
 equation \eqref{eq: black-scholes price}: $C(t,x,\lambda_0)$.
 
 \item A hedging strategy, associated to that belief,
 and given by the delta-hedging procedure
 $\nu_s = C_x (s, X^{t,x}_s, \lambda_0)$ on $[0,T]$.
 After the apparition at $T$ of the asset $\Lambda X$,
 the option price is impacted immediately by
 the real value taken by $\Lambda$, different from $\lambda_0$,
 but the portfolio stays self-financed, and continuous.
 We assume here that the naive agent continues with
 the delta-hedging strategy until $T^*$.
 
 \item A terminal hedging error that spreads from time $T$
 with value $\eps_T: = C(T, x, \lambda_0) - C(T, x, \Lambda)$.
 In the Black-Scholes setting, by a simple no-arbitrage argument and wero interest rate,
 this error remains constant until $T^*$.
\end{enumerate}

The motivation of such a strategy is to average the losses
by averaging the possible values taken by $\Lambda$.
This is however wrong as the price of the derivative
is mostly a non-linear function of the underlying price.
In the studied example of the Call option below,
if $\lambda$ is fixed, we obtain
a non-linear function of the strike:
$$
\begin{array}{ll}
C_{BS}(t,\lambda_0 x,K) &
:= \E^{\Q}[(\lambda_0 X^{t,x}_{T^*}-K)^+ ]
= \lambda_0 \E^{\Q}[(X^{t,x}_{T^*}-K /\lambda_0)^+ ] \\
& = \lambda_0 C_{BS}(t,x, K/ \lambda_0).
\end{array}
$$

\subsection{Analysis on real data}
\label{sec:error}

To provide a realistic framework,
we refer to historical data.
This allows to propose a model for $L$ and $\Lambda$,
and values for parameters $(\mu, \sigma)$ of the exponential dynamics~\eqref{eq:price dynamics}.
The available data designates
daily quotations of
futures prices on the French Power Market,
provided by EEX. 
We consider a delivery period covering the period 
 from October 2004 
to March 2011, i.e., 
78 Month delivery Futures 
during their whole quotation period 
and the respective Quarter delivery futures
contracts covering them.
Two estimations are made out of it.
\begin{enumerate}
 \item This provides 78 observations for
a supposedly repeated realisation of the
random variable $\Lambda$.
The average is $\hat \Lambda=1.0012$
and its variance $V(\Lambda)=0.081$.
We then assume that $\Lambda$ follows a 
scaled beta law with these characteristics:
$\Lambda \sim 3 \beta (114,227)$. 
This is justified by the fact that $\Lambda$
shall have a bounded support, which is assumed here to be
the interval $[0,3]$.

\item The parameters $\mu$ and $\sigma$ in
the exponential dynamics \eqref{eq:price dynamics} are computed
on the aggregated returns of month futures and quarter futures.
Here, $\mu$ contains the discount rate (since
we assumed that the interest rate is null by omitting it).
The obtained drift $\hat \mu$ is null, and the obtained (yearly)
volatility $\hat \sigma=28\%$.
\end{enumerate}

To quantify the impact of neglecting the uncertainty on the shaping factor $\Lambda=\lambda^*$, on the performance of the hedging strategy, we have implemented, on real data, the naive  Black-Scholes hedging strategy supposing different given parameters $\lambda$ varying around
the real observed value $\lambda^*$ with an amplitude of error of $50\%$. In our tests, we have considered call options on our $78$ Month delivery futures with various maturities ans strikes as indicated on Figure \ref{fig:errors cumulated}. 
The resulting hedging error can be decomposed into four sources: 
\begin{enumerate}
	\item  hedging at discrete times (the Delta hedging strategy is indeed implemented daily);
	\item  errors on the dynamical model or on the parameters  (the hedging instrument may not have  i.i.d. log-returns with log-normal distributions, $\hat \mu$ and $\hat \sigma$ are only estimations);
	\item the limited number of hedging scenario inducing a statistical error;
	\item error on the shaping factor value.
\end{enumerate}
The scope of this paper focuses specifically on the latter source of error. 
Hence, to distinguish the contribution of each error and to separate the fourth one, 
we have represented in Figure~\ref{fig:errors cumulated}  four quantities: 
\begin{enumerate}
  \item \textit{Real error} We evaluate the hedging error  using
 the naive strategy on real data. 
  \item \textit{Simulated Error (78 data)} We then do the same on simulated data on the same time grid,
 and with the same number of trajectories (78).
 This allows to quantify the error due to 
 the model and the parameters estimation,
 by comparing it to the previous error.
\item \textit{Simulated Error (78x100 data)} We repeat this procedure with a greater number
 of simulations $(78\times 100)$ in order to confirm that
 the previous error is not erroneous because of the low number
 of studied trajectories.
 \item \textit{Theoretical error} We represent the error induced by hedging in the Black-Scholes framework (in continuous time) with the wrong shaping factor. 
 This represents exclusively the hedging error due to the error on the shaping factor. Notice that for $t\geq T$, $\lambda^*$ is known, hence on $[T,T^*]$ the naive Black-Scholes strategy is equal to the complete market replication strategy that would have been implented from time $0$ if $\lambda^*$ was known. Hence the theoritical hedging error reduces to the difference between the values at time $T$ of the naive hedging portfolio (with the wrong value of $\lambda$) and the perfect hedging portfolio (with the right value of $\lambda=\lambda^*$): $C_{BS}(t,\lambda x,K)-C_{BS}(t,\lambda^* x,K)$. 
\end{enumerate}
Altogether,
the results presented in Figure \ref{fig:errors cumulated}
push to the following temporary conclusions.
An error on the value of $\Lambda$ can
significantly impact the error.
After that, the main error is due to 
the discretization of the hedging strategy.
Hence, it is worth developping a specific methodology to take into account the uncertainty of the shaping factor in the hedging strategy. 

\subsection{Convergence of the approximation scheme to explicit solution}
\label{sec:convergence}

We shall test the efficiency of
the algorithm presented in Section \ref{sec:numerics}.
To do so, we compute the hedging strategy
and the value function in the specific case of
Section \ref{sec:interval complete}.
Assume without loss of generality that $\Lambda=1$.
Recall that we obtain the following 
explicit expression for the $\P$-martingale 
$P^{t,p,\alpha}$ initialized at time 
$t\ge T$ and the function $v$ for any $s\in [t,T^*]$,
\begin{equation}
\label{eq:P}
\left \{
\begin{array}{l}
v(s,X^{t,x}_s,P^{t,p,\alpha}_s, 1)=C(s,X^{t,x}_s, 1)-(-k P^{t,p,\alpha}_s)^{1/k} \exp\brace{\frac{\theta^2}{2(k-1)}(T^*-s)}\\
P^{t,p,\alpha}_s=p\left (\frac{X^{t,x}_s}{x}\right )^{-\frac{k}{k-1}\frac{\mu}{\sigma^2}}
\exp\brace{\frac{k^2 ( \theta^2 -\mu)}{2(k-1)} (s-t)}\ , .
\end{array}
\right .
\end{equation}
Observe that $P^{t,p,\alpha}_{s}$ can be expressed as a function of $X^{t,x}_s$ i.e. $P^{t,p,\alpha}_s =p(t,s,X^{t,x}_s)$.
Hence we analyse the performance of our algorithm by observing its ability to approximate 
the one dimensional real valued function $u_s$ such that 
\begin{equation}
\label{us}
u_s(x)=V(s,x,p(t,s,x))\ .
\end{equation} 
In our simulations, we consider the following parameters.
\begin{enumerate}
 \item The model parameters are slightly modified
 with values $(\hat \mu, \hat \sigma)=(0.1, 0.28)$.
 The initial asset price value is fixed at $x=50.89$.
 \item The convexity parameter of the loss function
 is $k=2$ and the level $p$ takes the value $0.1$ Euro$^2$.
 \item The option is a call option with a strike $K$
 and a maturity of $20$ trading days, i.e., $T=20/250$.
 \item We have performed our algorithm with 
 $M=10^5$ particles to estimate at each step of time 
 the conditional expectations and 
 a time discretization mesh $t_0=0,\cdots t_i,\cdots t_N=T$ with a time step $\Delta t =1/250$.
\end{enumerate}
In our tests, the fixed point algorithm was limited to three iterations.
We have represented
on Figure~\ref{fig:robustness}
the value of $u_s(x)$ with respect to $x$
computed by the explicit formula and the numerical algorithm.
We also provide the value of the control $\nu$
at the initial date to illustrate the convergence of derivatives too.


\subsection{Performance with a call option}
\label{sec:performance}

We now compare the loss approach
(hereafter denoted \emph{shortfall risk}, or SR)
and the benchmark strategy
(hereafter \emph{Black-Scholes}, or BS)
upon a call option.
For each approach, 
we implement the associated hedging strategy 
on i.i.d. $M_{hedge}=10000$ simulated price paths. 
For each path we compute both hedging errors.
Then we compute
by Monte Carlo approximation 
(on these i.i.d. $M_{hedge}=10000$ simulations) 
the expected loss associated to 
the Black-Scholes approach and the shortfall risk hedge. 
Recalling Section \ref{sec:error},
the trading strategies are not implemented continuously and
the resulting hedging errors may differ
from the theoretical time continuous setting.
\begin{enumerate}
    	\item The naive Black-Scholes strategy
    	is settled with the value $\lambda_0 = \Esp{\Lambda}=1.0012$.
    	The variable $\Lambda$ is given by 
	a law $\beta(114,227)$, and the price model
	as in Section \ref{sec:convergence}.

	\item For the option, 
	we compare several strike possibilities for the Call option: $K = \gamma \lambda_0 x$
	with $\gamma$ taking values in the set $\brace{0.85 ; 0.9 ; 0.95 ; 1 ; 1.05, 1.1 ; 1.15 ; 1.2}$.
	
	\item The loss function is 
	the partial moment loss function of 
	Section \ref{sec:numerics}
	with $k=2$, and the threshold $p$ 
	varies enough to evaluate its impact.
	In the following comparison,
	we consider the square root 
	of the obtained error in order
	to express it in euros.
	This justifies the terminology \emph{shortfall},
	which is a monetary homogeneous quantity.

	\item Strategies are rebalanced daily, $T=128$ days and $T^*=184$ days, one year corresponds to $250$ days and $X_0=50.89$. 
\end{enumerate}

Figure \ref{fig:SR_vs_BS} sums up the simulations
and compares, for the different values of $K$,
the value function as a function of $p$.
Figure \ref{fig:SR_vs_BS_Cvar} provides a comparison
between the two approaches for another criterion:
the conditional Value-at-Risk, or expected shortfall.
These two figures lead us to the two following conclusions.
The first one is that
the partial hedging procedure SR
allows to hedge the quadratic loss function
more efficiently and with less initial amount of 
money than the BS strategy.
The second figure illustrates the fact that
this new strategy stays more interesting for another risk criteria than the one used in the specified control problem, 
providing also possible robustness of
the consideration of the shaping factor $\Lambda$
in our model.

\begin{figure}[p]
\centering
	\subfigure[$K=0,8 X_T$]
	{\includegraphics[width=11cm, height=5cm]
	{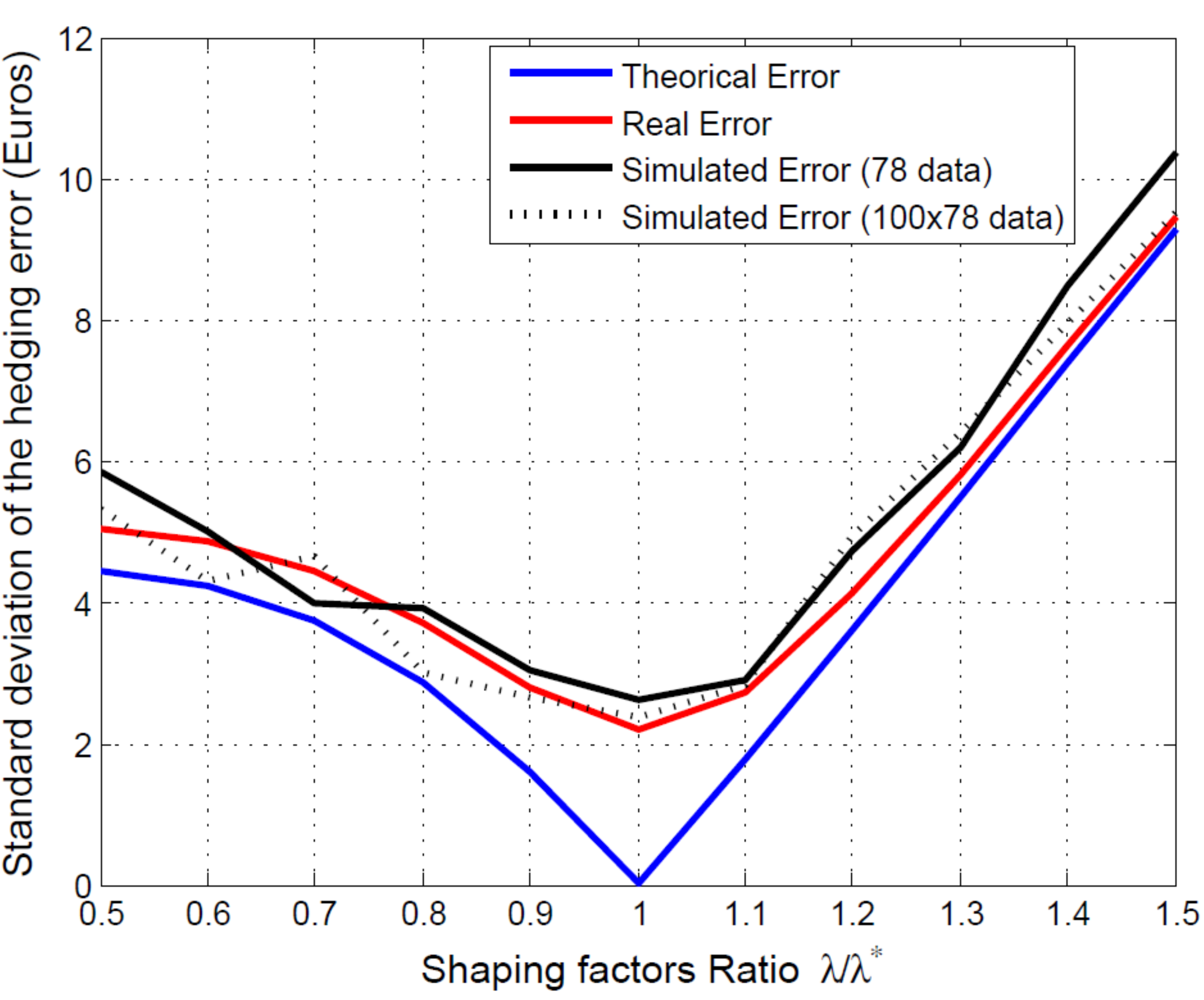}}
	\subfigure[$K=X_T$]
	{\includegraphics[width=11cm, height=5cm]
	{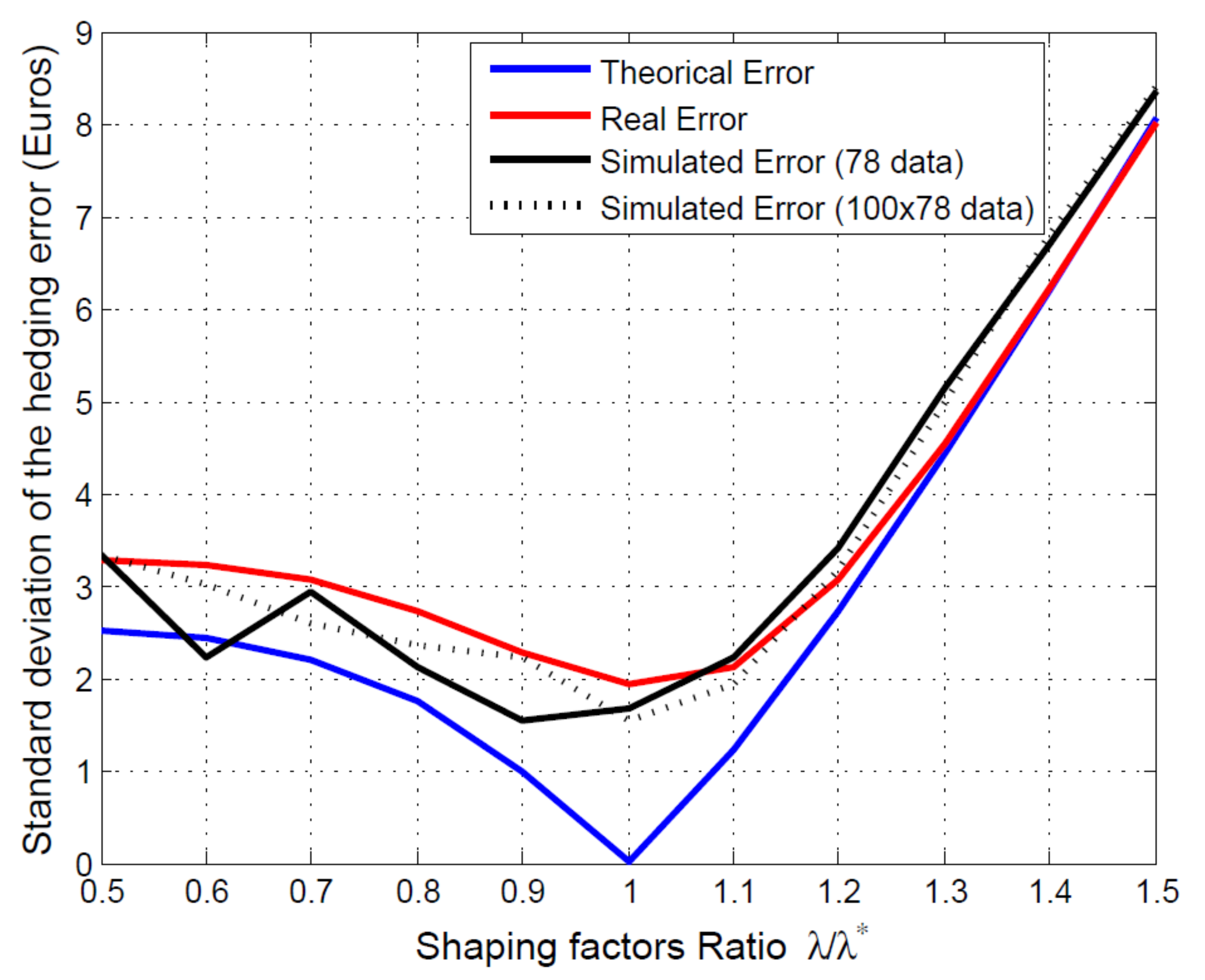}}
	\subfigure[$K=1.2 X_T$]
	{\includegraphics[width=11cm, height=5cm]
	{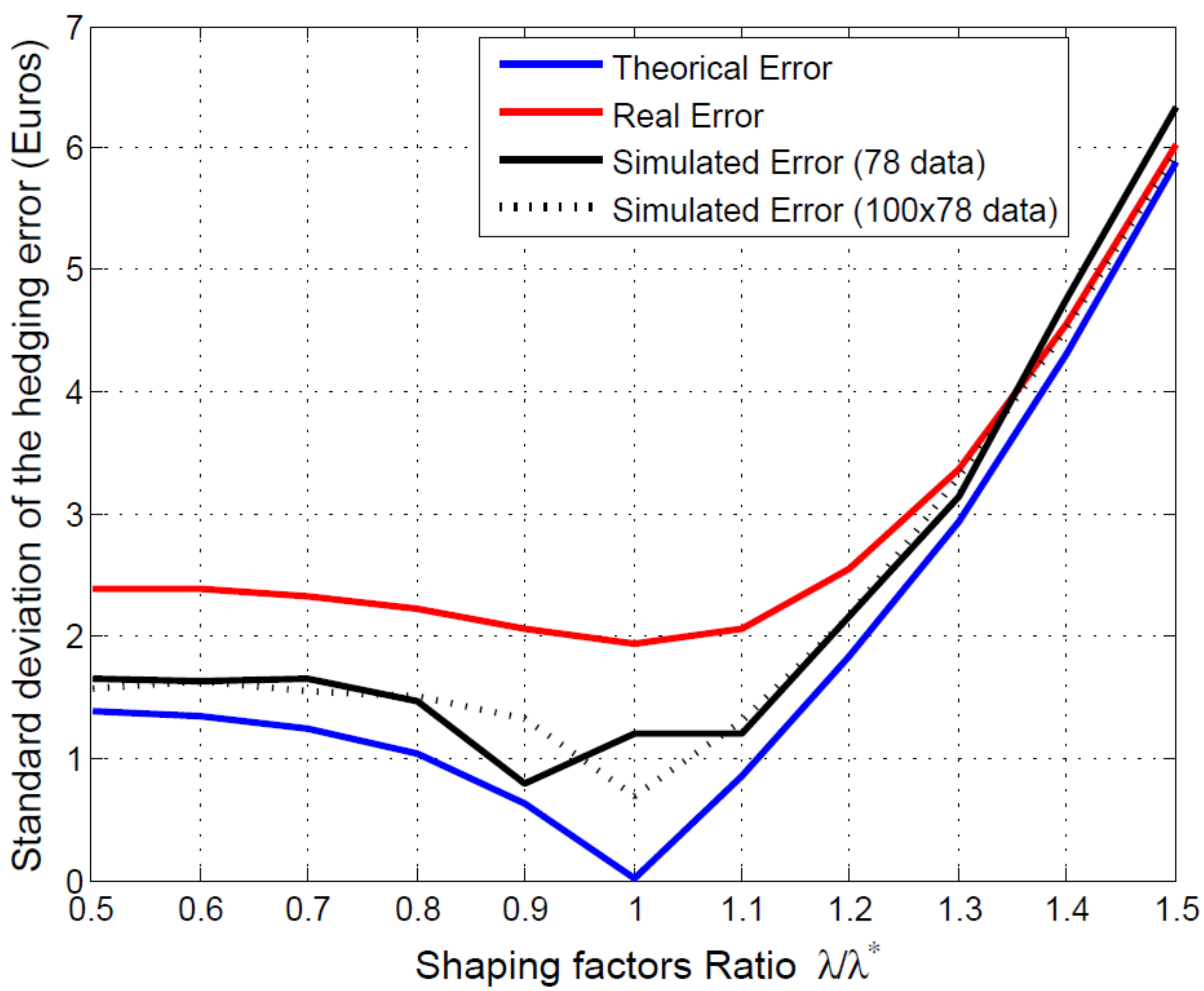}}

	\caption{\small Standard deviation of the hedging error as a function of the ratio between the shaping factor used in the hedging strategy $\lambda$ and the real shaping factor $\lambda^*$ impacting the historical scenarios.}
	\label{fig:errors cumulated}
\end{figure}

\begin{figure}[p]\centering
	\subfigure[Value function $x\mapsto u_s(x)$ with $p=0.1$ Euros and time steps $i=\in\{1,5,10,15,19\}$.]
	{\includegraphics[width=11cm, height=7.5cm]
	{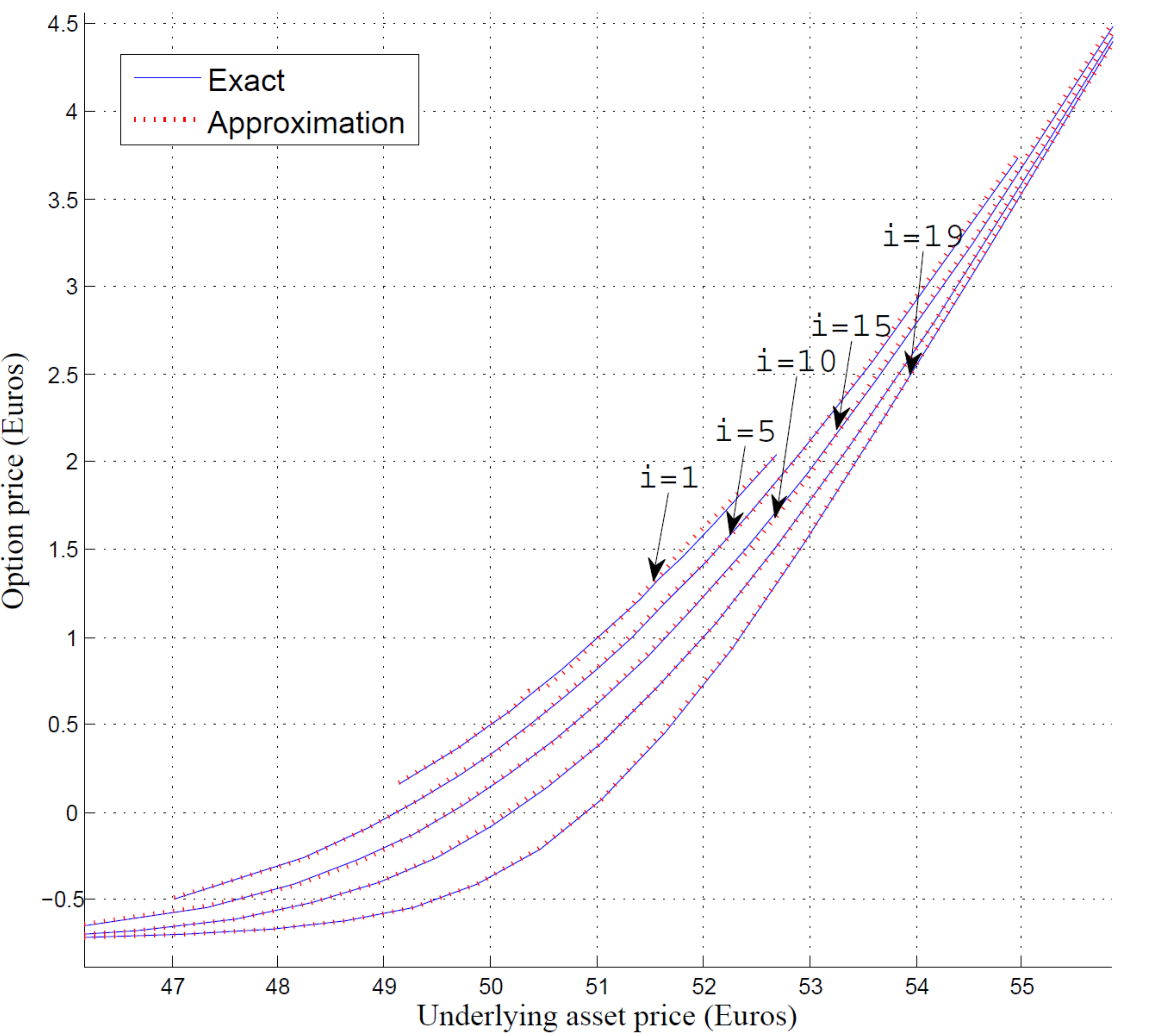}}
	
	\subfigure[Optimal strategy $x\mapsto \nu(t,x,p)$ for $p=0.1$ Euros and time steps $i\in\{1,5,10,15,19\}$.]
	{\includegraphics[width=11cm, height=7.5cm]
	{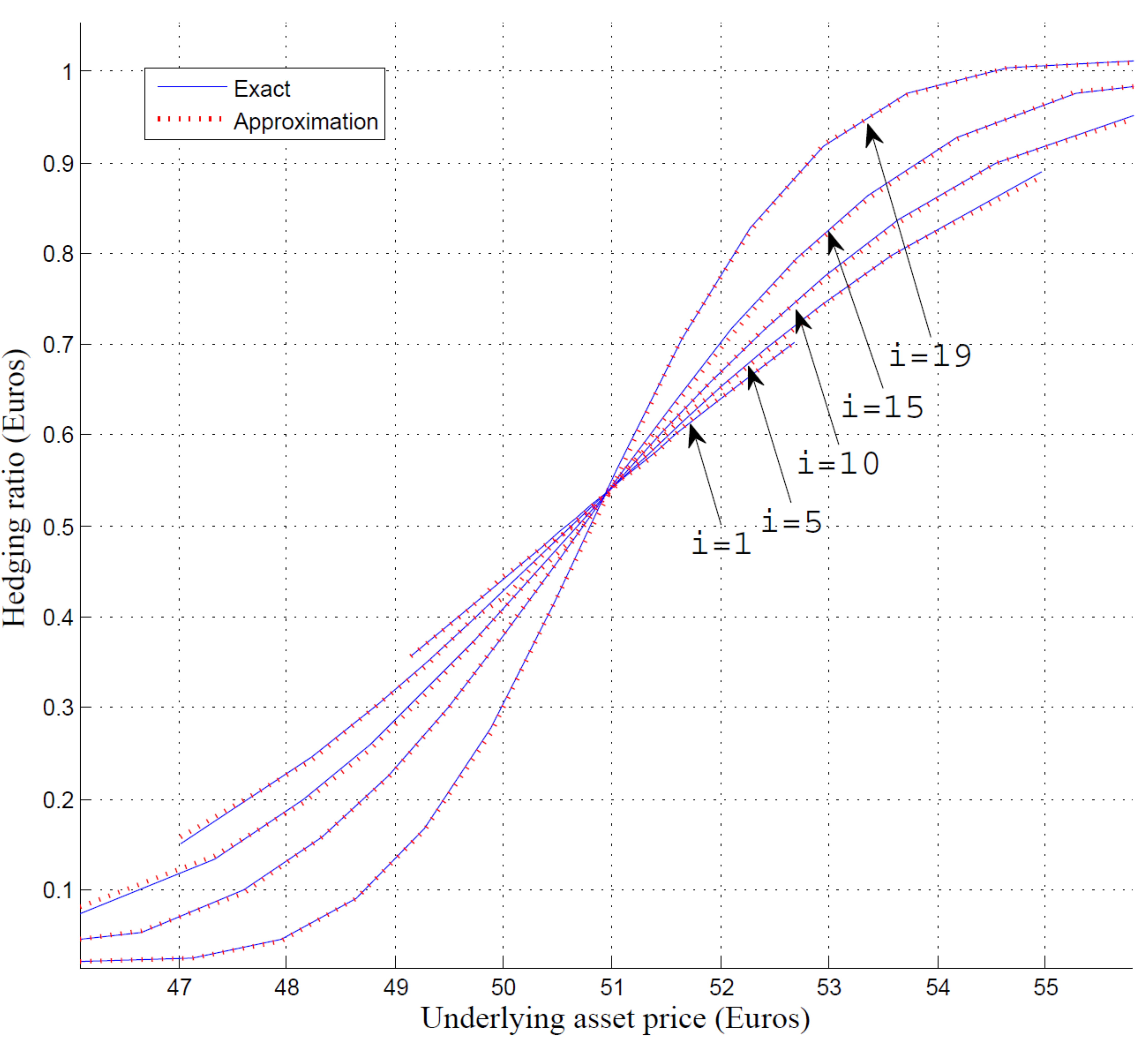}}

	\caption{\small Comparison between the numerical solution and
	the explicit formula.}
	\label{fig:robustness}
\end{figure}

\begin{figure}[p]
\begin{center}
\subfigure[$K=0,85 X(t_0)$]
{\includegraphics[width=5.5cm, height=3.2cm]
{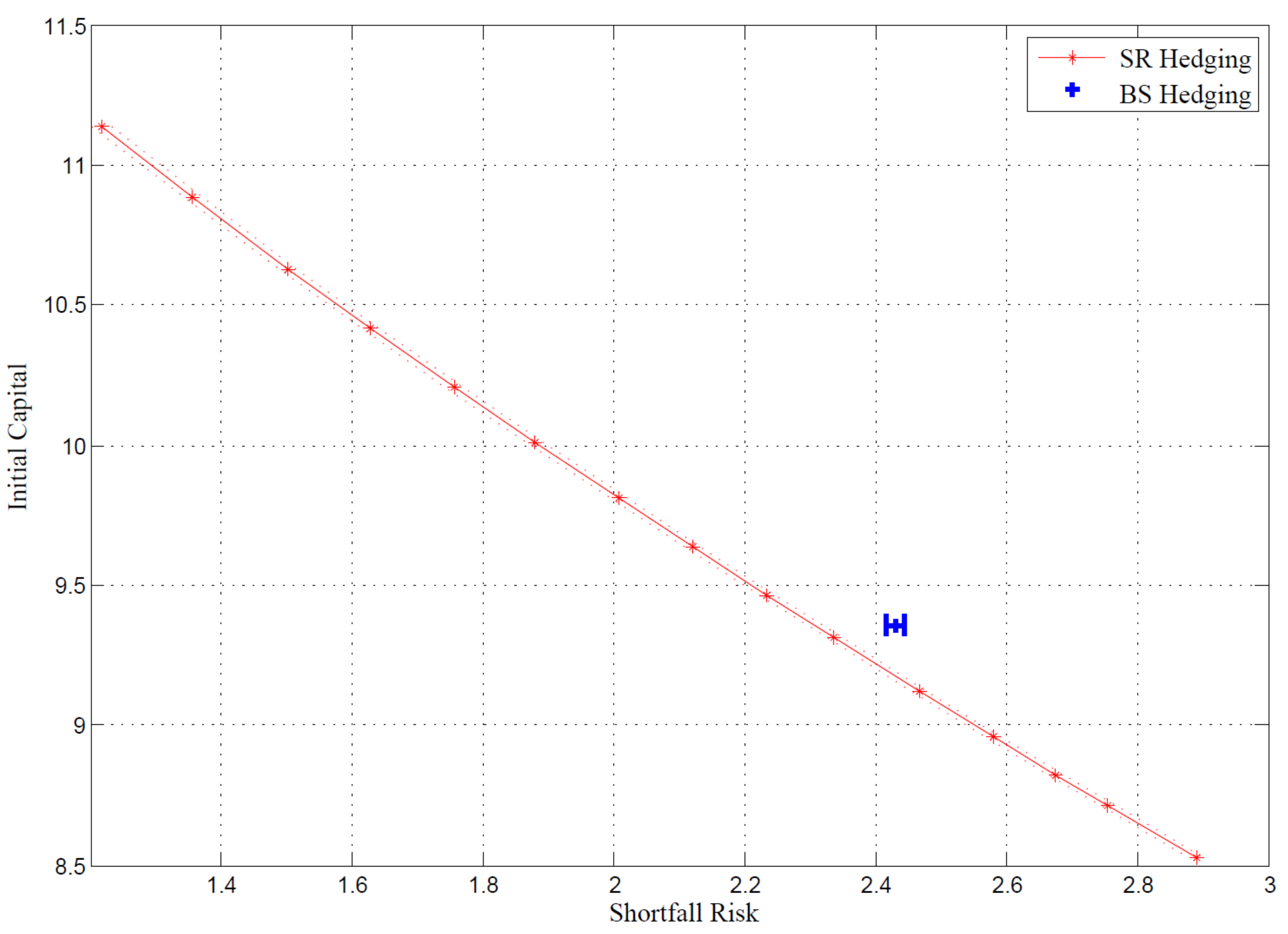}}
\subfigure[$K=0,90 X(t_0)$]
{\includegraphics[width=5.5cm, height=3.2cm]
{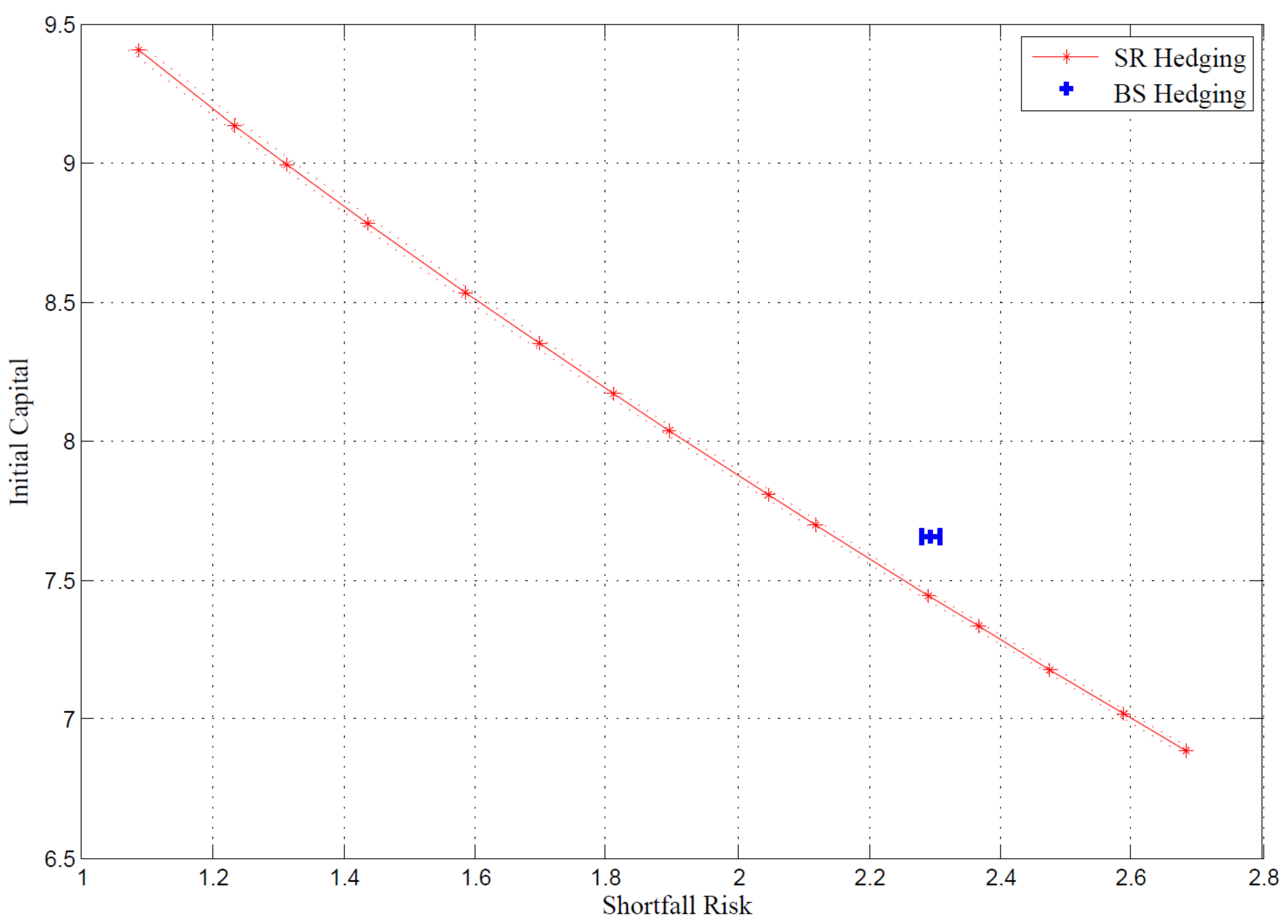}}
\subfigure[$K=0,95 X(t_0)$]
{\includegraphics[width=5.5cm, height=3.2cm]
{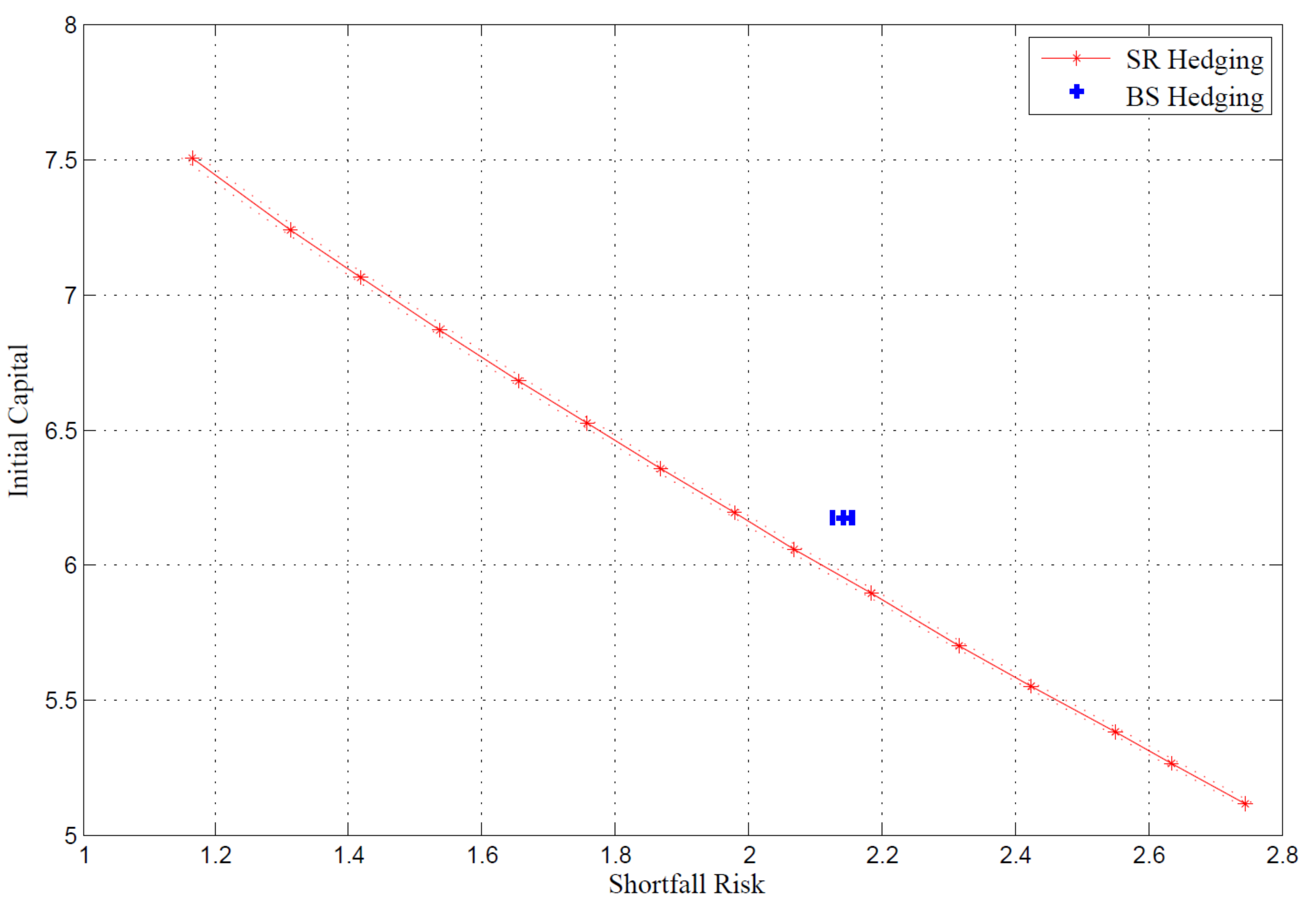}}
\subfigure[$K= X(t_0)$]
{\includegraphics[width=5.5cm, height=3.2cm]
{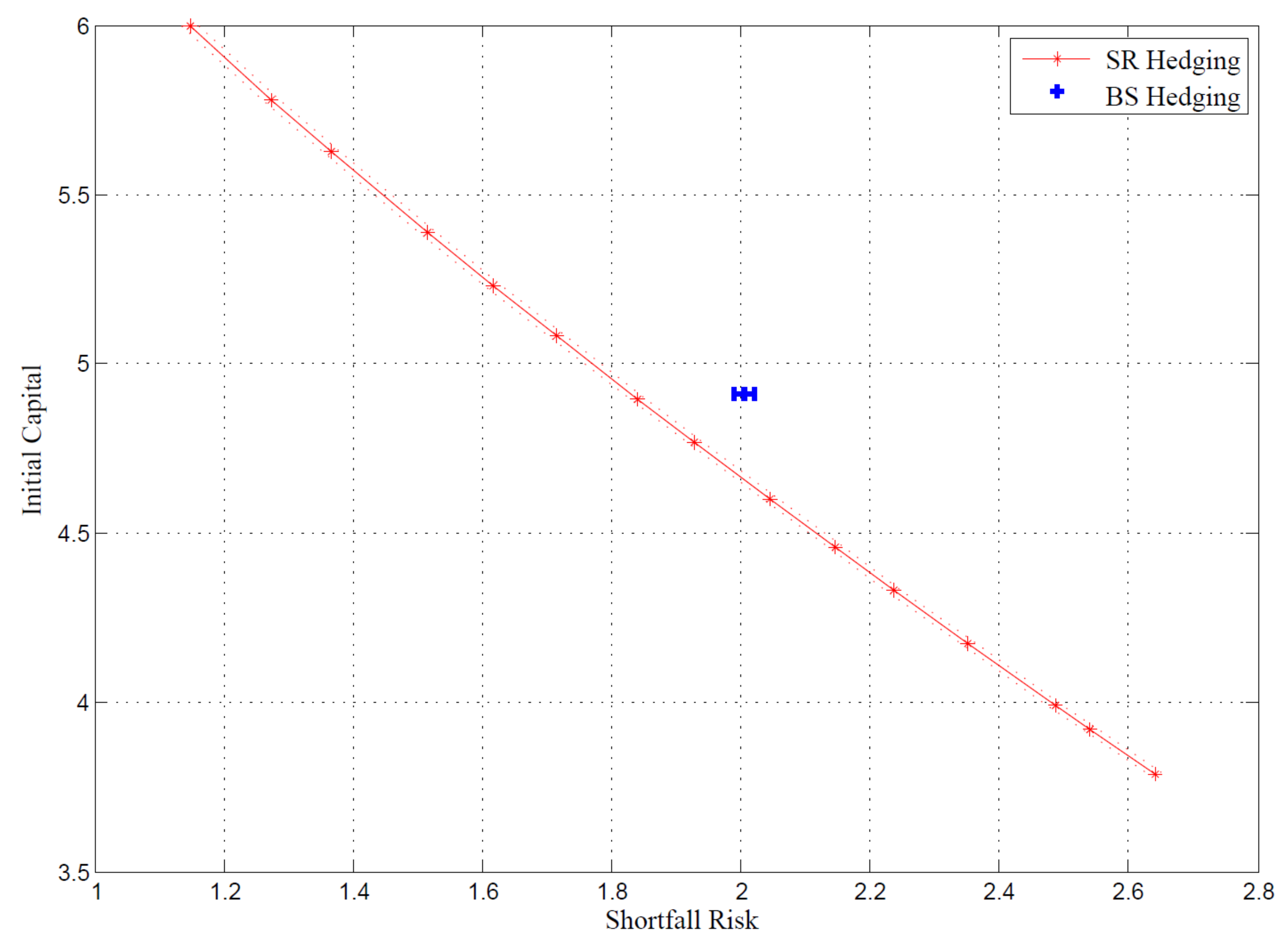}}
\subfigure[$K=1,05 X(t_0)$]
{\includegraphics[width=5.5cm, height=3.2cm]
{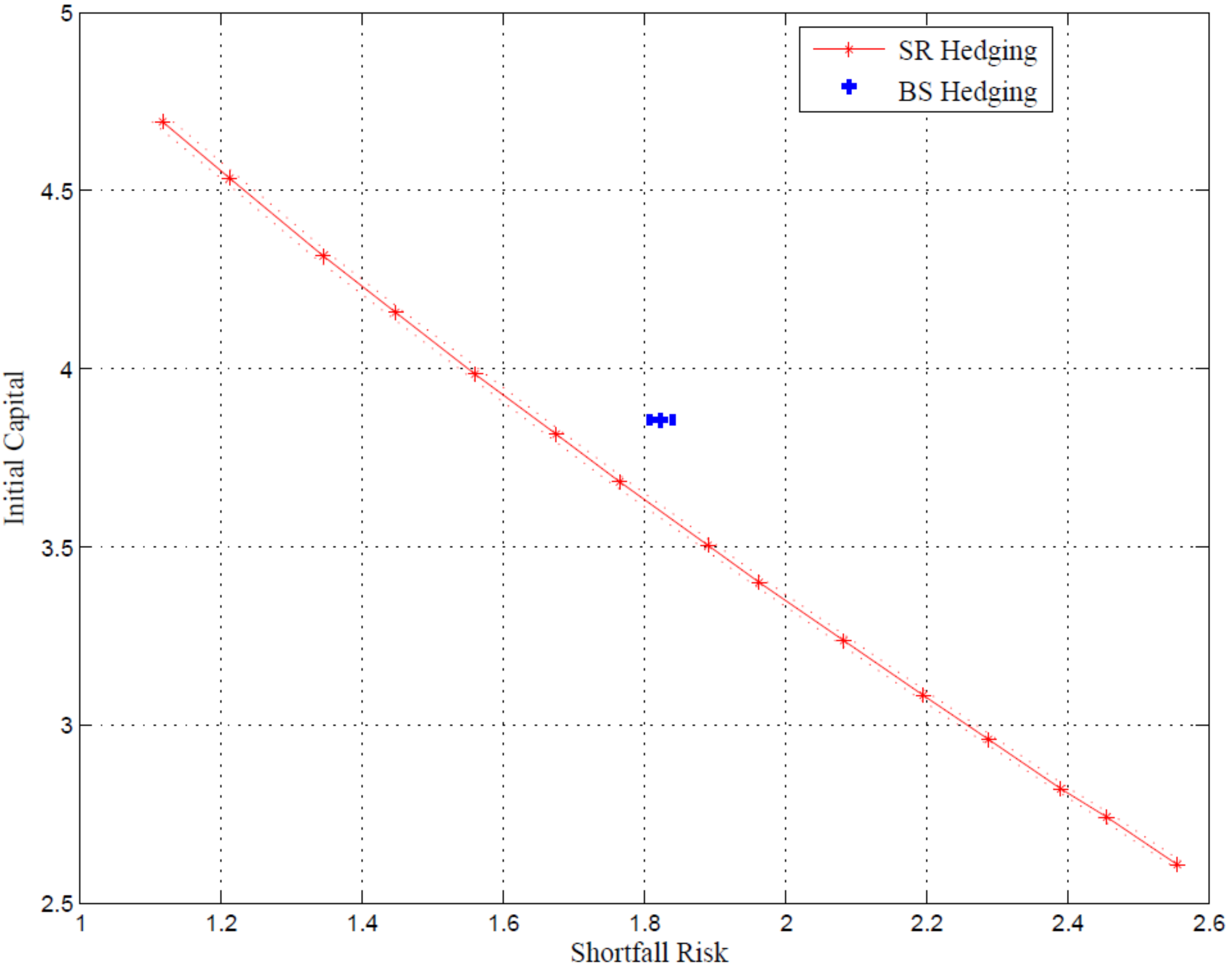}}
\subfigure[$K=1,10 X(t_0)$]
{\includegraphics[width=5.5cm, height=3.2cm]
{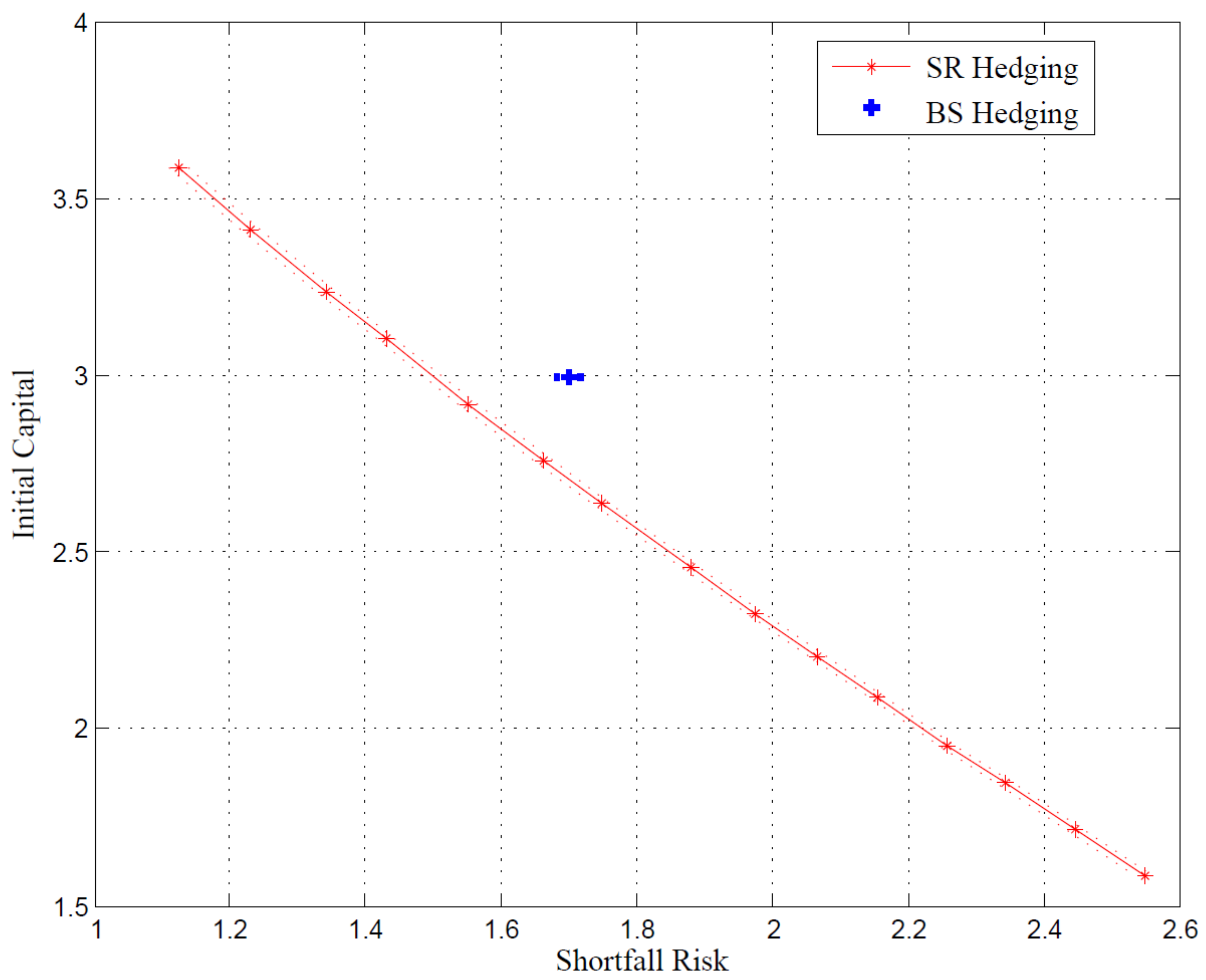}}
\subfigure[$K=1,15 X(t_0)$]
{\includegraphics[width=5.5cm, height=3.2cm]
{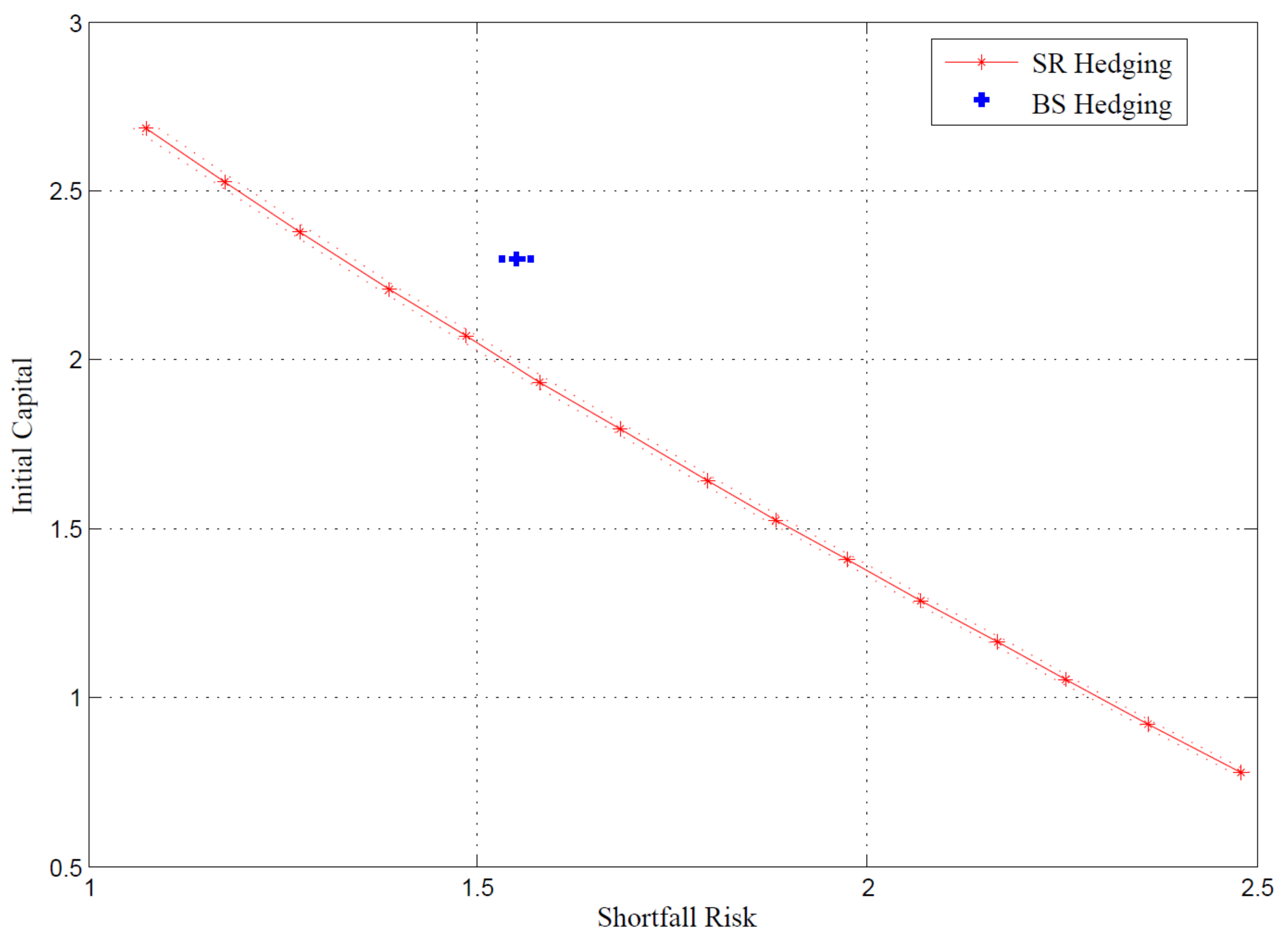}}
\subfigure[$K=1,15 X(t_0)$]
{\includegraphics[width=5.5cm, height=3.2cm]
{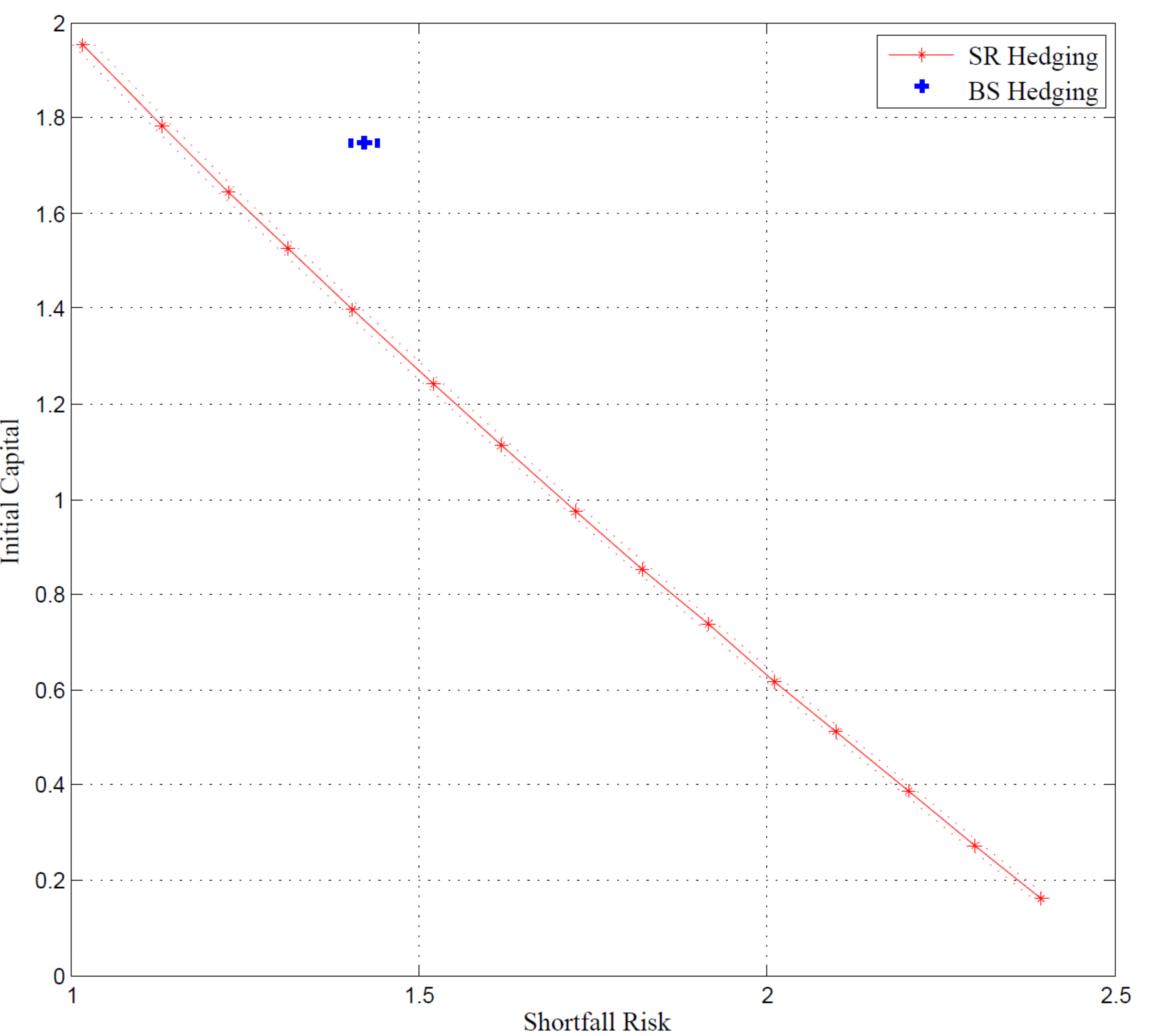}}
\end{center}
\caption{\small Initial capital w.r.t. 
the associated Shortfall Risk of the Black-Scholes strategy (blue)
and the Shortfall strategy (red)
with 95 \% confidence interval (in dotted lines).}
\label{fig:SR_vs_BS}
\end{figure}

\begin{figure}[p]
\begin{center}
\subfigure[$K=0,85 X(t_0)$]
{\includegraphics[width=5.5cm, height=3.2cm]
{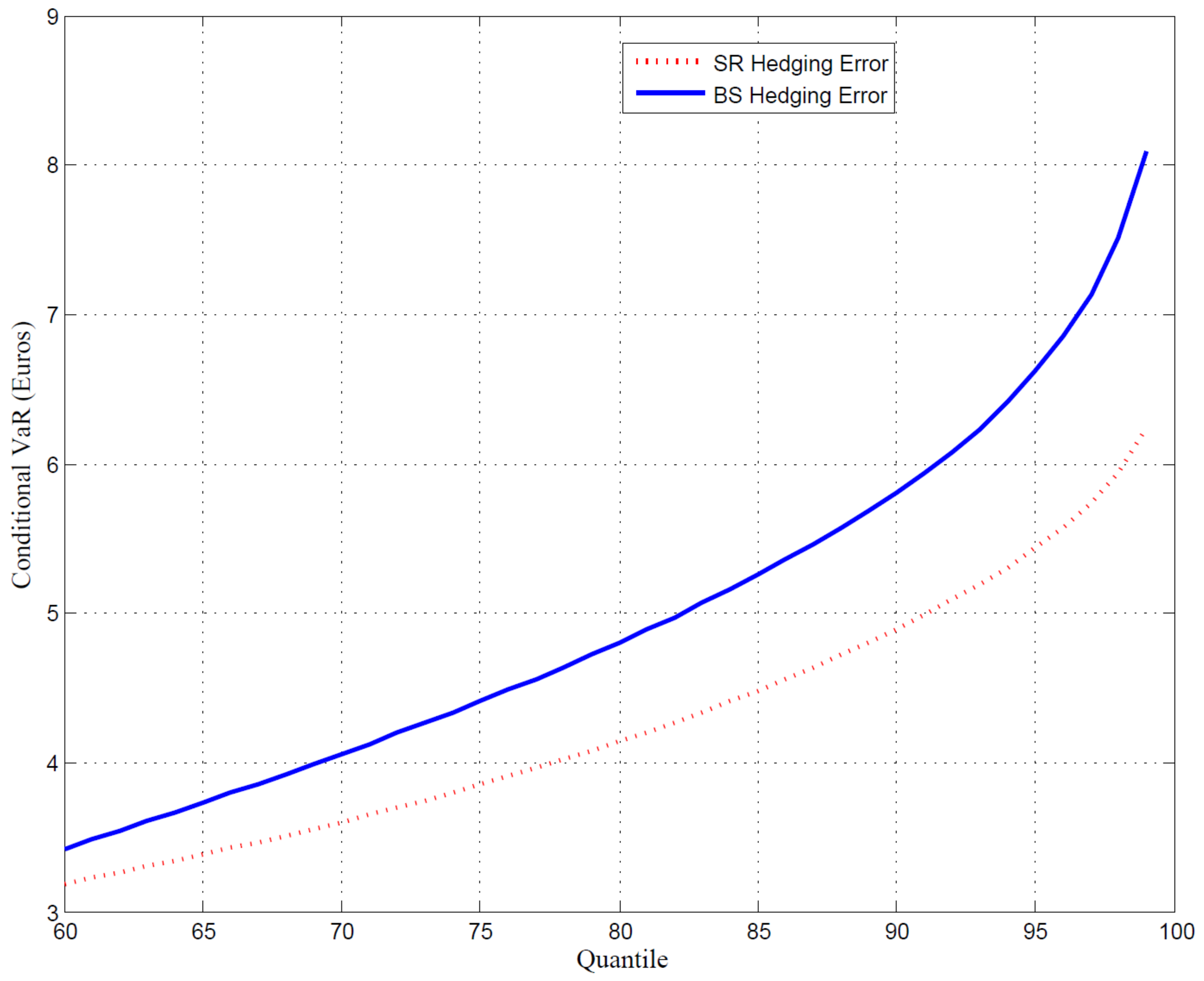}}
\subfigure[$K=0,90 X(t_0)$]
{\includegraphics[width=5.5cm, height=3.2cm]
{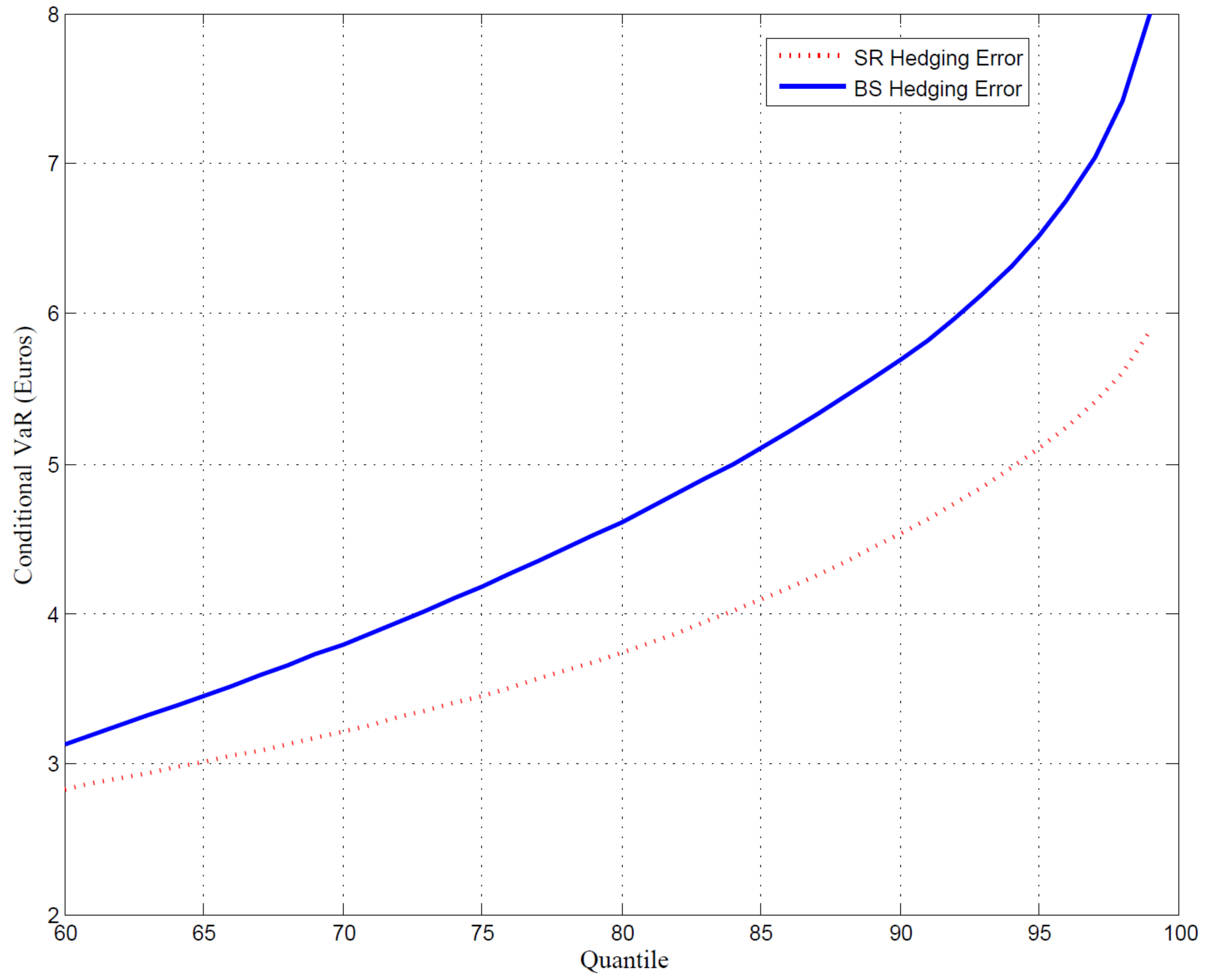}}
\subfigure[$K=0,95 X(t_0)$]
{\includegraphics[width=5.5cm, height=3.2cm]
{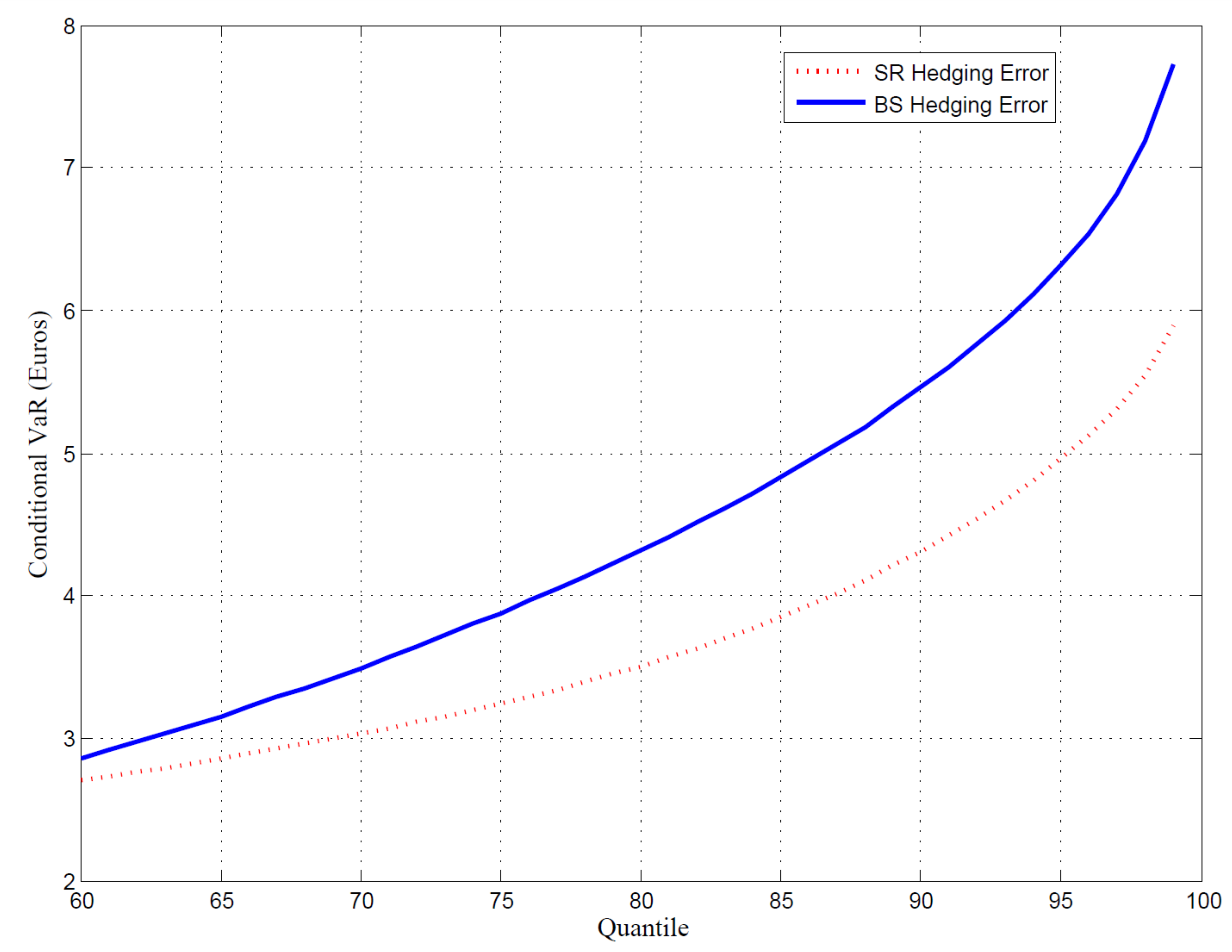}}
\subfigure[$K= X(t_0)$]
{\includegraphics[width=5.5cm, height=3.2cm]
{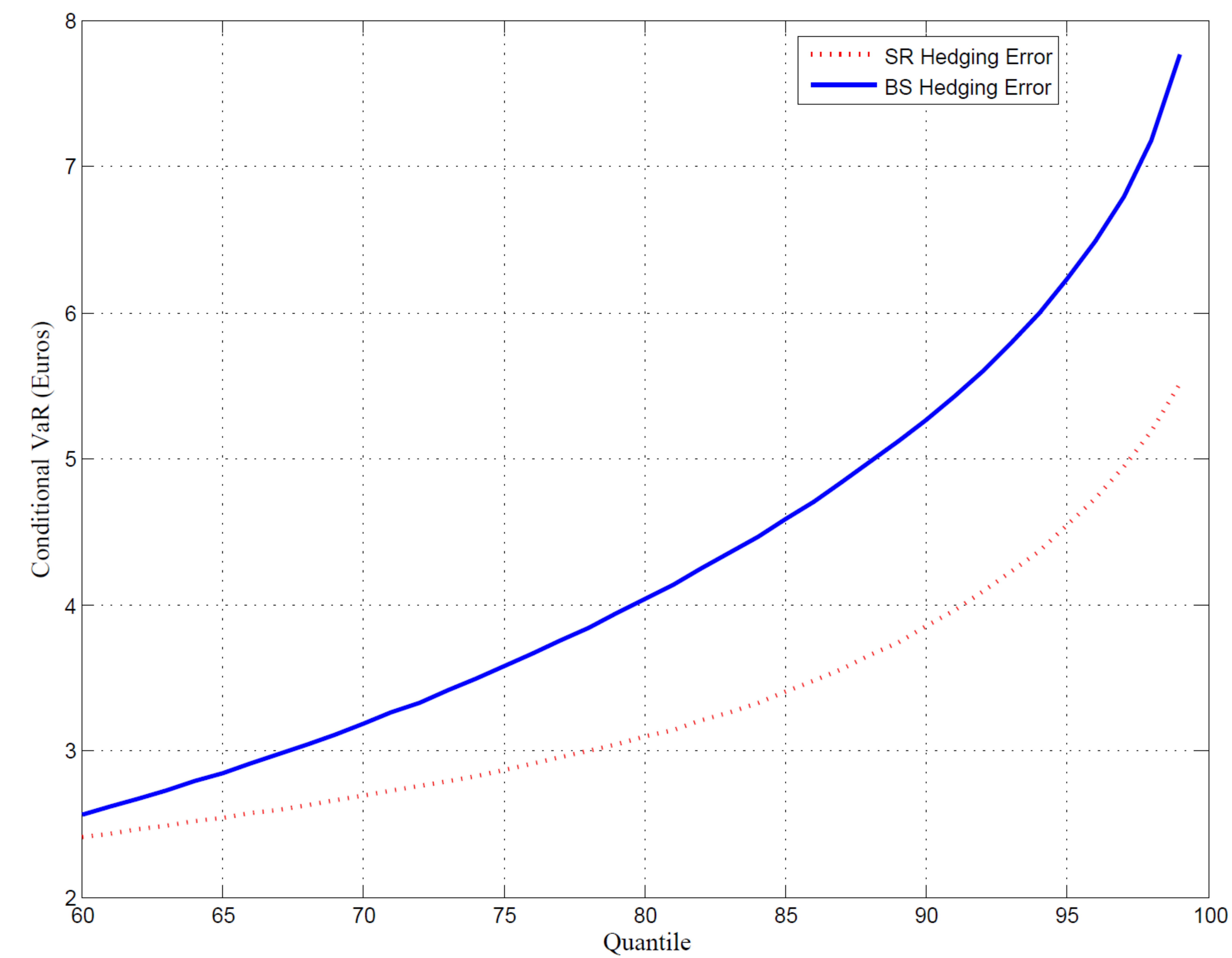}}
\subfigure[$K=1,05 X(t_0)$]
{\includegraphics[width=5.5cm, height=3.2cm]
{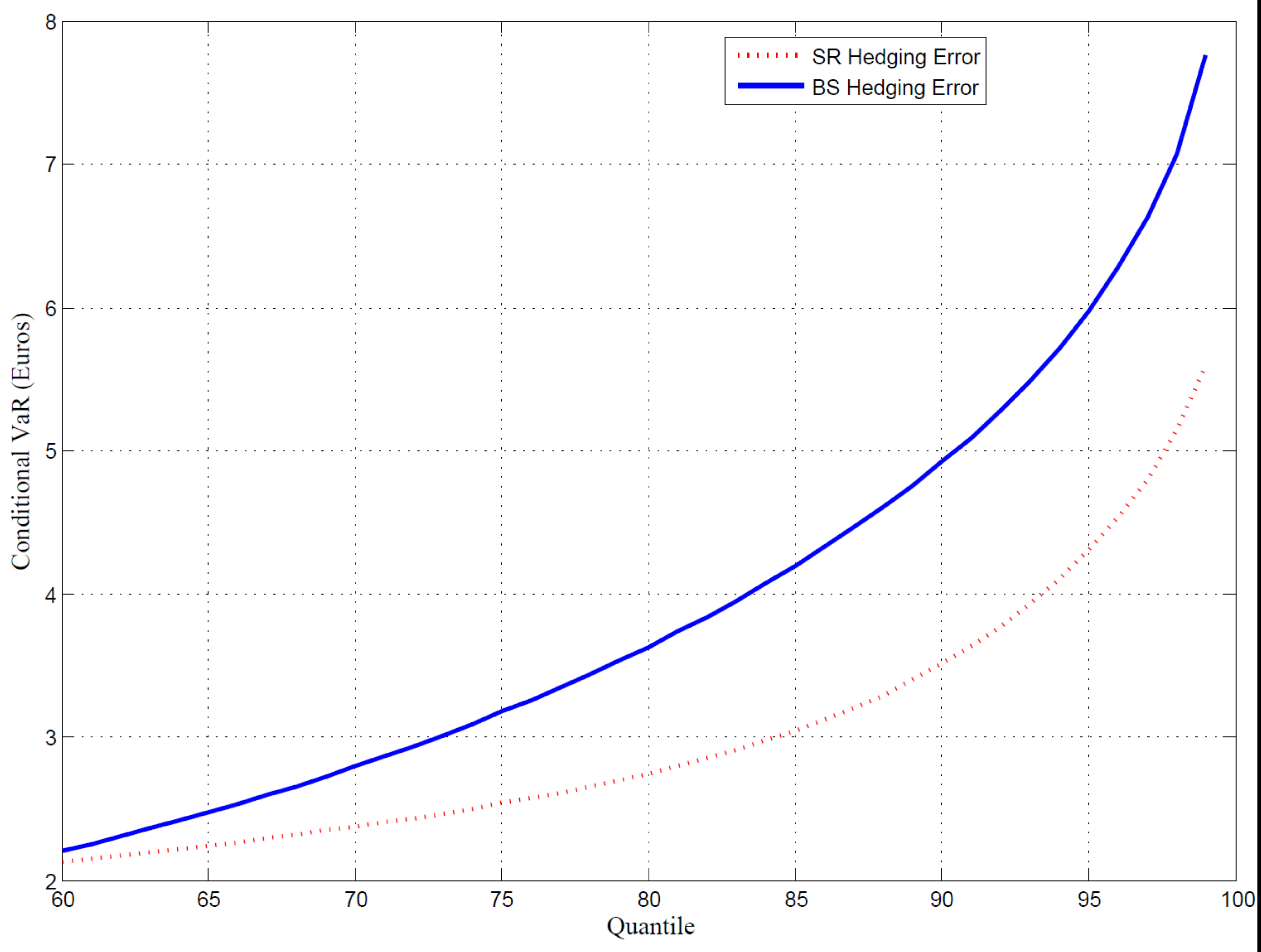}}
\subfigure[$K=1,10 X(t_0)$]
{\includegraphics[width=5.5cm, height=3.2cm]
{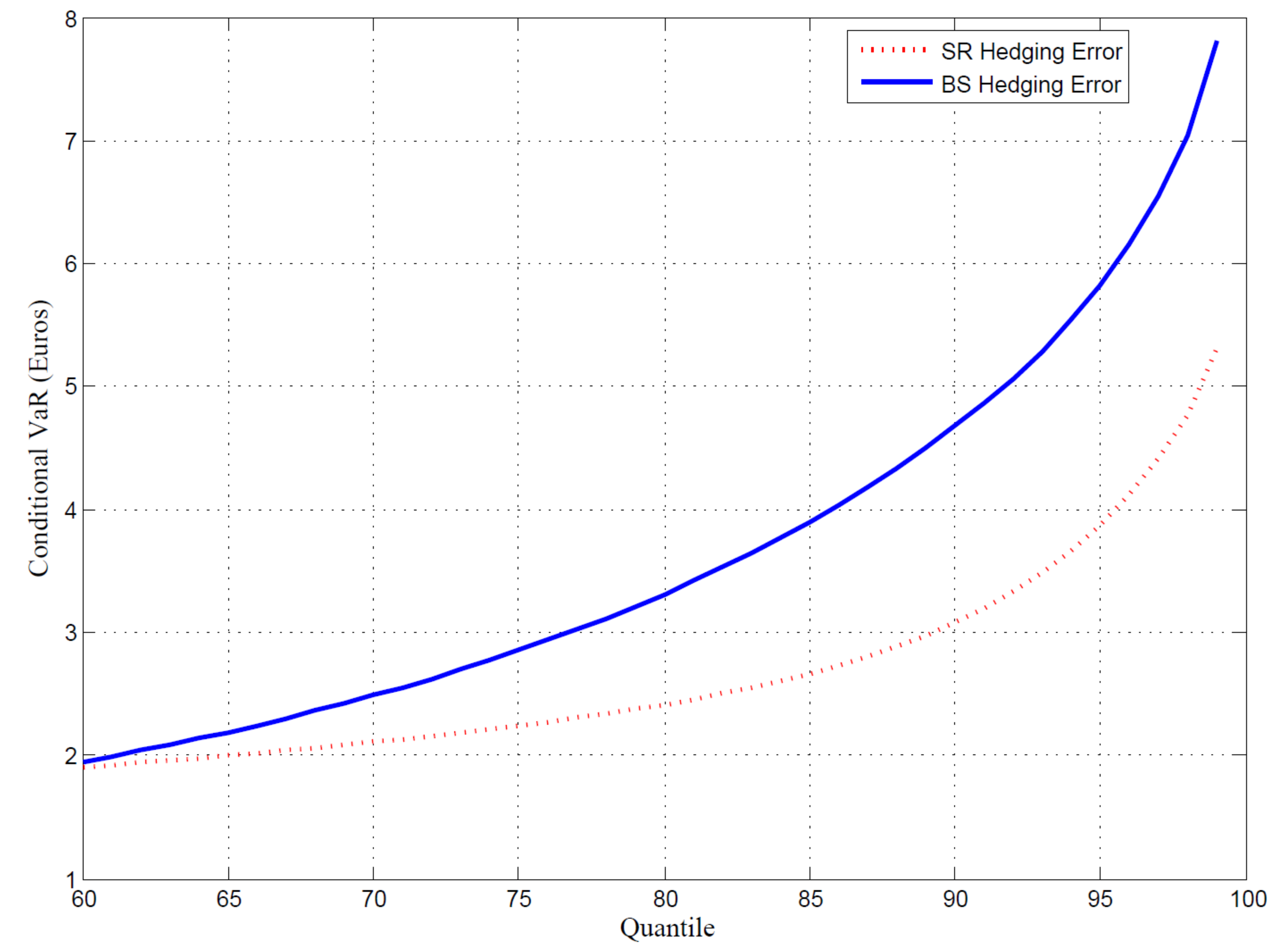}}
\subfigure[$K=1,15 X(t_0)$]
{\includegraphics[width=5.5cm, height=3.2cm]
{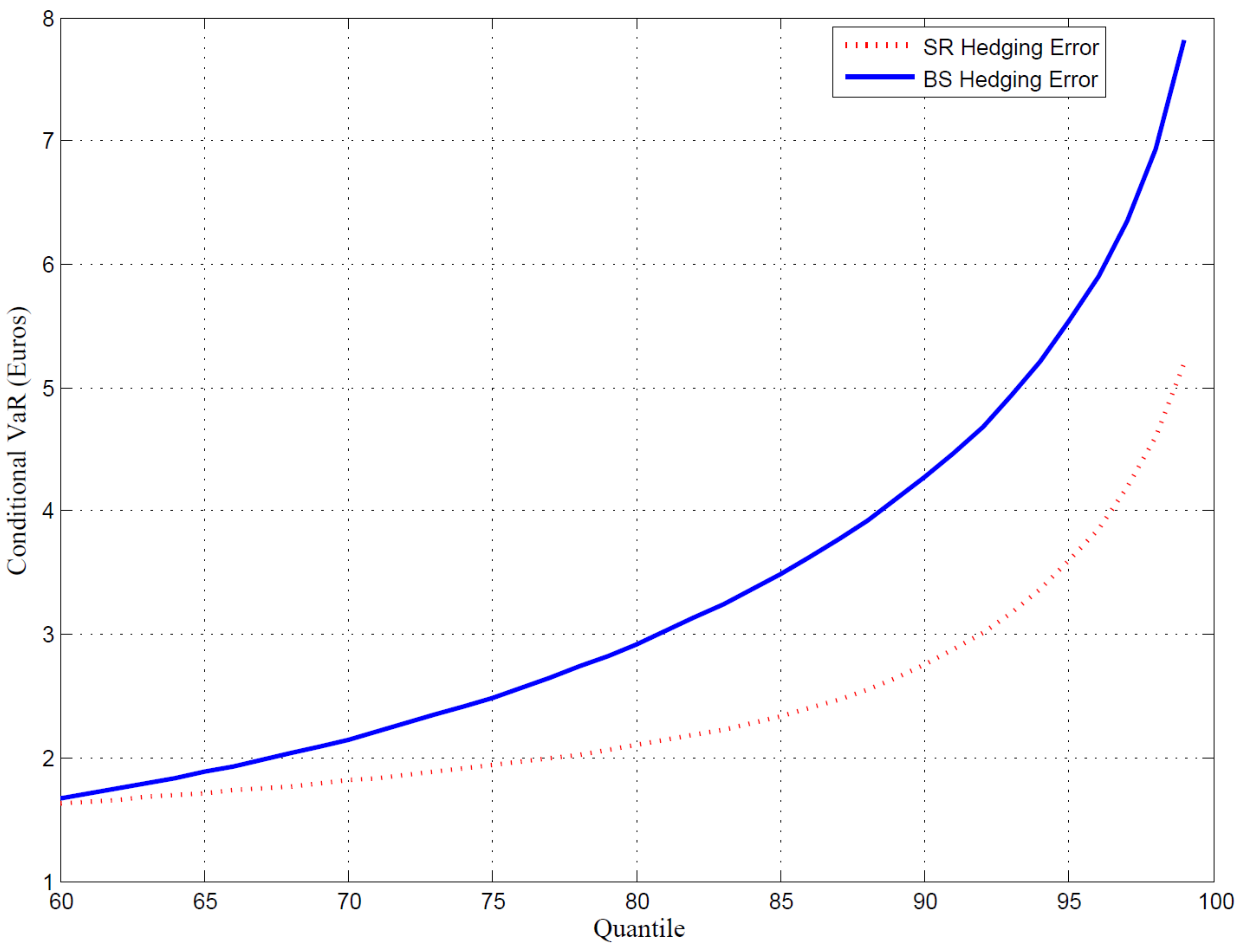}}
\subfigure[$K=1,15 X(t_0)$]
{\includegraphics[width=5.5cm, height=3.2cm]
{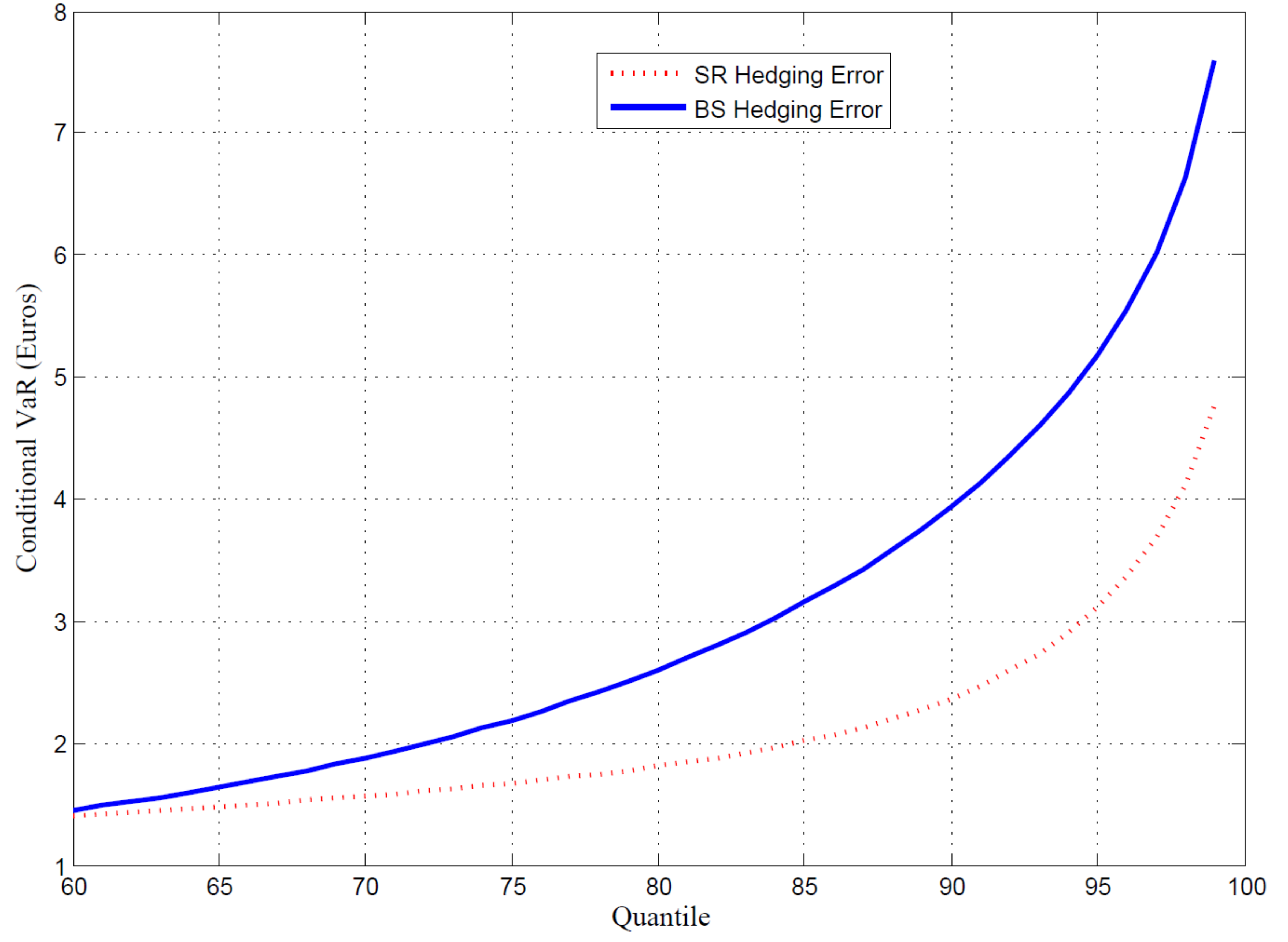}}
\end{center}
\caption{\small CVaR value w.r.t. the quantile level of the the strategies BS (blue)
and SR (red).}
\label{fig:SR_vs_BS_Cvar}
\end{figure}

\section*{Appendix: Geometric Dynamic Principle and HJB equation}

In what follows, 
we put ourselves in the Brownian filtration
setting of \cite{soner2002stochastic} and \cite{bouchard2009stochastic},
which encompasses our framework. 
We omit the presence of $\Lambda$ by assuming $\Pro{\Lambda=1}=1$
and place ourselves on the interval $[0, T^*]$.
We provide one side of the GDP
used to derive the supersolution property.

\begin{Theorem}[Th 3.1, \cite{soner2002stochastic}]
\label{th:GDP1}
Fix $(t,x,p,y)\in \Sb \x [-\kappa, \infty) $ 
such that $y>V(t,x,p)$ and
a family of stopping times 
$\brace{\theta^{\nu, \alpha}~:~ (\nu,\alpha)\in \Uc_{t,y}\x \Ac_{t,p}}$.
Then there exists $(\nu, \alpha)\in \Uc_{t,y} \x\Ac_{t,p}$ 
such that
$$
Y^{t,x,y,\nu}_{\theta^{\nu, \alpha}}\ge 
V(\theta^{\nu, \alpha}, X^{t,x}_{\theta^{\nu, \alpha}}, P_{\theta^{\nu, \alpha}}^{t,p,\alpha}) \ \Pas
$$ 
and 
$Y^{t,x,y,\nu}_{s\wedge \theta^{\nu, \alpha}}\ge -\kappa$ 
for all $s\in [t,T^*]$ $\Pas$
\end{Theorem}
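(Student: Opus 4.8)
The plan is to run the classical argument for the geometric dynamic programming principle of Soner and Touzi, which in this direction requires only exhibiting \emph{one} suitable control together with the flow structure of the state process.

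First I would pass from the defining inequality of $V$ to the stochastic target form of Lemma~\ref{lem: stochastic target form}. The set
$$
\brace{y\ge-\kappa ~:~ \exists\,\nu\in\Uc_{t,y},\ \Esp{\Psi\bigl(X^{t,x}_{T^*},Y^{t,x,y,\nu}_{T^*}\bigr)}\ge p}
$$
is an up-set, since adding cash to a portfolio keeps it admissible, only raises the terminal wealth, and hence preserves the constraint; because $y>V(t,x,p)$ is a strict inequality, this forces the existence of some $(\nu,\alpha)\in\Uc_{t,y}\x\Ac_{t,p}$ with
$$
Y^{t,x,y,\nu}_{T^*}\ \ge\ \Psi^{-1}\bigl(X^{t,x}_{T^*},P^{t,p,\alpha}_{T^*}\bigr)\qquad\Pas,
$$
the equivalence between the expectation constraint and this almost-sure inequality being exactly the content of Lemma~\ref{lem: stochastic target form} (whose second assertion also guarantees that the representing process can be taken in $\Ac_{t,p}$). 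Fix this pair henceforth and set $\theta:=\theta^{\nu,\alpha}$, the corresponding member of the prescribed family of stopping times.

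Next I would invoke the flow (Markov) property of the triple $(X^{t,x},Y^{t,x,y,\nu},P^{t,p,\alpha})$ on $[\theta,T^*]$, which is legitimate under Assumption~\ref{assumption:mu and sigma}: for $s\in[\theta,T^*]$,
$$
X^{t,x}_s=X^{\theta,X^{t,x}_\theta}_s,\quad Y^{t,x,y,\nu}_s=Y^{\theta,X^{t,x}_\theta,\,Y^{t,x,y,\nu}_\theta,\,\nu}_s,\quad P^{t,p,\alpha}_s=P^{\theta,P^{t,p,\alpha}_\theta,\,\alpha}_s,
$$
and the restrictions of $\nu$ and $\alpha$ to $[\theta,T^*]$ remain $\F$-progressively measurable and square integrable, hence admissible for the subproblem started at time $\theta$. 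Conditioning on $\Fc_\theta$ through a regular version of $\Pro{\cdot\mid\Fc_\theta}$, the $T^*$-inequality above shows that for $\P$-a.e.\ $\omega$ the shifted controls attain the terminal target of the subproblem started from the now deterministic datum $(\theta(\omega),X^{t,x}_\theta(\omega),Y^{t,x,y,\nu}_\theta(\omega),P^{t,p,\alpha}_\theta(\omega))$. By the definition of $V$ as an infimum over admissible initial wealths, this yields
$$
Y^{t,x,y,\nu}_\theta\ \ge\ V\bigl(\theta,X^{t,x}_\theta,P^{t,p,\alpha}_\theta\bigr)\qquad\Pas,
$$
which is the first assertion; the second, $Y^{t,x,y,\nu}_{s\wedge\theta}\ge-\kappa$ for all $s\in[t,T^*]$, is simply the credit-line requirement built into $\nu\in\Uc_{t,y}$.

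The step I expect to be the main obstacle is this conditioning argument, namely turning the single almost-sure inequality at $T^*$ under $\P$ into an almost-sure family of terminal-target statements for the subproblems indexed by the random initial condition $(\theta,X^{t,x}_\theta,P^{t,p,\alpha}_\theta)$. It rests on shift-invariance of the class of admissible controls, measurability of $(t,x,p)\mapsto V(t,x,p)$ (or, routinely, the replacement of $V$ by its lower and upper semicontinuous envelopes, which coincide with $V$ given the regularity obtained in Section~\ref{sec:complete}), the flow identities above, and a regular-conditional-probability argument of Soner--Touzi type; each ingredient is standard but must be assembled carefully. Only this easier half of the geometric dynamic programming principle is needed here; the reverse half, used to derive the subsolution property, is genuinely more delicate and requires a measurable selection.
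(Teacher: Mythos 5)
The paper does not prove this statement at all: Theorem~\ref{th:GDP1} is imported verbatim as Theorem~3.1 of \cite{soner2002stochastic}, so there is no in-paper argument to compare against. Your reconstruction is the standard Soner--Touzi proof of the easy half of the GDP and is correct in outline: pass to the almost-sure target formulation via Lemma~\ref{lem: stochastic target form} (using the up-set property of the admissible initial wealths and the strictness of $y>V(t,x,p)$), fix the resulting pair $(\nu,\alpha)$ and evaluate at its own stopping time $\theta^{\nu,\alpha}$, then use the flow property and a regular conditional probability at $\theta^{\nu,\alpha}$ to read off $Y_{\theta}\ge V(\theta,X_\theta,P_\theta)$ from the definition of $V$ in target form; the credit-line bound is indeed built into $\Uc_{t,y}$. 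You also correctly identify the only genuinely delicate point, namely that the restriction of $(\nu,\alpha)$ to $[\theta,T^*]$ must be shown admissible for the subproblem (adaptedness to the shifted filtration under the conditional law) and that the pointwise inequality must be assembled into an a.s.\ statement; this is exactly where the cited reference spends its effort, and acknowledging rather than eliding it is the right call. No gap beyond the standard technicalities you already flag.
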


Let $V_*$ be defined by
$
V_* (t,x,p) := \liminf\brace{V(t',x',p') ~:~B\ni(t',x',p')\to (t,x,p) }
$
where $B$ denotes an open subset of $[0,T^*]\x \R^*_+\x \R^*_-$ with
$(t,x,p)\in \cl (B)$.
We assume that 
$V$ is locally bounded on $\Sb$,
so that $V_*$ is finite.
In what follows,
we introduce only the supersolution property for $V_*$,
deriving from Theorem \ref{th:GDP1},
in the special case given by dynamics
\eqref{eq:price dynamics},\eqref{eq: portfolio process}
and \eqref{eq: process P}.

For $\eps\ge0$, 
we introduce the relaxed operator
$\Theta \mapsto \bar{F}_\eps (\Theta)$
for the variable 
$$\Theta = (t,x,p,d,d_x, d_p, d_{xx}, d_{pp}, d_{xp})\in\Sb \x \R^6$$
given by
\begin{equation}
\bar{F}_\eps (\Theta):= \hspace{-0.1cm}
\label{eq: HJB formelle}
\sup\limits_{(u,a)\in \Nc_\eps(\Theta)} \hspace{-0.1cm}
\brace{\left(u - d_x\right)\mu(t, x)
- \demi \left(\sigma^2(t,x)d_{xx} +a^2p^2  d_{pp}+2 ap \sigma(t,x)d_{xp}\right)}
\end{equation}
with
\begin{equation}
\label{eq: N_epsilon}
\Nc_\eps (\Theta):= \brace{ (u, a)\in \R^2  ~:~ 
\left|\sigma(t,x)\left(u- d_x\right) - a p  d_p \right|\le \eps}
\; .
\end{equation}
to finally introduce 
$
\bar{F}^*(\Theta):= \limsup\brace{\bar{F}_\eps (\Theta') ~:~ \eps\searrow 0, \Theta'\rightarrow\Theta}
$.
We adopt the convention $\sup\emptyset = -\infty$ and 
$$
\bar{F}^* \vp = \bar{F}^* (t,x,p, \vp(x,p), \vp_x (t,x),\vp_p (t,x), \vp_{xx}(t,x),\vp_{pp}(t,x),\vp_{xp}(t,x))
$$	
for a smooth function $\vp$.
We hence formulate the supersolution property of 
$V_*$.
For definitions and use of viscosity solutions,
we refer to \cite{crandall1992user}.
The supersolution property
inside the domain is given by 
Theorem 2.1 and Corollary 3.1 in \cite{bouchard2009stochastic}.
The boundary condition at $t=T^*$ is
given by Theorem 2.2 in \cite{bouchard2009stochastic}.
In our case, 
by assuming the concavity of $\Psi$
in $y$,
we have the convexity of $\Psi^{-1}$
in the $p$ variable.
We also have $\Nc_0(\Theta)\ne \emptyset$
for any $\Theta$.
According to these two properties,
the terminal condition takes a much more simple form.
Altogether, we obtain the following.

\begin{Theorem}[Th. 2.1-2.2, \cite{bouchard2009stochastic}]
\label{th 2.1}
The function $V_*$ is a viscosity supersolution of 
\begin{equation}
 \label{eq:viscosity HJB}
 \left\{
 \begin{array}{ll}
  - \vp_t(t,x,p) + \bar{F}^* \vp(t,x,p) = 0 \quad & \text{on }[0,T^*)\x \R^*_+ \x \R^*_-\\
  V_*(T,x,p) \ge \Psi^{-1}(x,p) \quad & \text{on }\R^*_+ \x \R^*_-
 \end{array}
 \right. \;.
\end{equation}
\end{Theorem}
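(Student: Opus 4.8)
The plan is to deduce the statement from the Geometric Dynamic Programming Principle of \cite{soner2002stochastic} (recalled as Theorem \ref{th:GDP1}) together with the general viscosity characterisation of \cite{bouchard2009stochastic}, after checking that the present affine setting meets the hypotheses required there and that the terminal condition collapses to a simple inequality. First I would verify the structural prerequisites: by Assumption \ref{assumption:mu and sigma} the coefficients of \eqref{eq:price dynamics} are Lipschitz with linear growth and $\sigma>0$ on $\R_+^*$, the portfolio dynamics \eqref{eq: portfolio process} and the auxiliary process \eqref{eq: process P} are affine, and $\Psi$ has polynomial growth and is concave increasing in $y$ (Assumption \ref{assumption:Psi}); this places $V$ inside the stochastic target framework of \cite{bouchard2009stochastic}. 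Local boundedness of $V$ on $\Sb$ (between $-\kappa$ and the superhedging price $C$) guarantees that $V_*$ is finite and that the lower semicontinuous envelope is well defined.

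For the interior supersolution property I would argue as in Corollary 3.1 of \cite{bouchard2009stochastic}. Fix a smooth test function $\vp$ with bounded derivatives such that $V_*-\vp$ has a strict local minimum equal to $0$ at $(t_0,x_0,p_0)\in[0,T^*)\x\R_+^*\x\R_-^*$. Setting $y_\eta:=\vp(t_0,x_0,p_0)+\eta$ for $\eta>0$ one has $y_\eta>V(t_0,x_0,p_0)$, so Theorem \ref{th:GDP1}, applied with the first exit time $\theta$ of $(s,X^{t_0,x_0},P^{t_0,p_0,\alpha})$ from a small parabolic neighbourhood of $(t_0,x_0,p_0)$, provides controls $(\nu,\alpha)$ with $Y^{t_0,x_0,y_\eta,\nu}_{\theta}\ge\vp(\theta,X_\theta,P_\theta)$ $\Pas$. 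Applying It\^o's formula to $\vp$ along $(X,P)$, using that $Y$ is driftless under $\P$ while $X$ and $P$ carry the drifts and volatilities dictated by \eqref{eq:price dynamics} and \eqref{eq: process P}, then dividing by the expected time increment and letting $\eta\downarrow0$ and the neighbourhood shrink, yields $-\vp_t(t_0,x_0,p_0)+\bar{F}^*\vp(t_0,x_0,p_0)\ge0$, i.e. the first line of \eqref{eq:viscosity HJB}; the relaxation $\bar{F}^*$ and the sets $\Nc_\eps$ for $\eps>0$ are precisely what absorb the fact that the kernel constraint need only be met approximately along near-optimal controls.

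The delicate part is showing that the terminal trace reduces to $V_*(T^*,x,p)\ge\Psi^{-1}(x,p)$. In the general theory (Theorem 2.2 of \cite{bouchard2009stochastic}) the boundary condition at $T^*$ is expressed through a relaxed, face-lifted version of $\Psi^{-1}$ obtained by a concave-envelope-type operation along the directions available in $\Nc_0$. Here two facts trivialise this operation: because $\nu$ ranges over all of $\R$ and $\sigma(t,x)>0$ on $\R_+^*$, the set $\Nc_0(\Theta)$ in \eqref{eq: kernel nonempty} is nonempty for every $\Theta$ — indeed a full line parametrised by $a\in\R$ — and, by Assumption \ref{assumption:Psi}, $p\mapsto\Psi^{-1}(x,p)$ is convex increasing on $\R_-$. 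A convex function being invariant under the relaxation appearing in the terminal condition of \cite{bouchard2009stochastic}, no face-lifting occurs and the boundary inequality collapses to $V_*(T^*,\cdot)\ge\Psi^{-1}$. I expect this verification — writing out the relaxed terminal operator of \cite{bouchard2009stochastic} precisely and checking its invariance on convex data when $\Nc_0\ne\emptyset$ — to be the main obstacle, the interior part being essentially a transcription of the cited arguments into our affine framework.
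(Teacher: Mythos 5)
Your proposal is correct and follows essentially the same route as the paper: the paper does not reprove this result but recalls it from \cite{bouchard2009stochastic}, attributing the interior supersolution property to Theorem 2.1 and Corollary 3.1 there and the terminal trace to Theorem 2.2, and then notes exactly the two facts you identify — convexity of $p\mapsto\Psi^{-1}(x,p)$ (from concavity of $\Psi$ in $y$) and $\Nc_0(\Theta)\ne\emptyset$ for all $\Theta$ — as the reason the relaxed terminal condition collapses to $V_*(T^*,x,p)\ge\Psi^{-1}(x,p)$. Your sketch of the GDP--It\^o argument for the interior equation is simply an unpacking of the cited proof and introduces no discrepancy.
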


There is a special Cauchy boundary problem for 
$p=0$ we elude here.
In our case, since $\Psi(0)=0$,
the stochastic target problem
reduces to the superhedging problem.
The target must be reached $\P$-almost surely
and we obtain directly the HJB equation
of \cite{soner2002stochastic}.
In our complete market framework,
it straightly provides 
the Black-Scholes equation \eqref{eq:black-scholes equation}
to which $V_*(.,0)$ is also a viscosity solution.


\end{document}